\documentclass[11pt, oneside]{article}
\usepackage{amsthm}
\usepackage{amsmath}

\usepackage{letltxmacro}                         
\LetLtxMacro\amsproof\proof                      
\LetLtxMacro\amsendproof\endproof                
\usepackage{thmtools}                            %
\AtBeginDocument{
  \LetLtxMacro\proof\amsproof                    %
  \LetLtxMacro\endproof\amsendproof              %
}                                                %

\usepackage{thm-restate}
\usepackage[colorlinks=true, allcolors=NavyBlue]{hyperref}
\usepackage{
  hyperref,
  cleveref,
}
\usepackage[dvipsnames]{xcolor} 

\counterwithin*{COUNTERHACKSUB}{COUNTERHACK}

\newcommand{\MyThmtoolsConstructor}[3]{
  \declaretheorem[
  numberlike=COUNTERHACK,
  numbered=yes,
  name=#2,
  Refname={#2}, 
  refname={\MakeLowercase{#2}},  
  shaded={bgcolor=#3, margin=0pt, padding=5pt},
  ]{#1}
  \declaretheorem[
  numbered=no,
  name=Informal #2,
  Refname={#2}, 
  refname={\MakeLowercase{#2}},  
  shaded={bgcolor=#3, margin=0pt, padding=5pt},
  ]{#1-informal}
  \declaretheorem[
  numbered=no,
  name=#2,
  Refname={#2}, 
  refname={\MakeLowercase{#2}},  
  ]{#1*}
}

\newcommand{\MySubThmtoolsConstructor}[3]{
  \declaretheorem[
  numberlike=COUNTERHACKSUB,
  numbered=yes,
  name=#2,
  Refname={#2}, 
  refname={\MakeLowercase{#2}},  
  ]{#1}
  \declaretheorem[
  numbered=no,
  name=Informal #2,
  Refname={#2}, 
  refname={\MakeLowercase{#2}},  
  ]{#1-informal}
  \declaretheorem[
  numbered=no,
  name=#2,
  Refname={#2}, 
  refname={\MakeLowercase{#2}},  
  ]{#1*}
}

\MyThmtoolsConstructor{definition}   {Definition}     {Gray!20}
\MyThmtoolsConstructor{theorem}      {Theorem}        {Gray!20}
\MyThmtoolsConstructor{corollary}    {Corollary}      {Gray!20}
\MyThmtoolsConstructor{porism}       {Porism}         {Gray!20}
\MyThmtoolsConstructor{lemma}        {Lemma}          {Gray!20}
\MyThmtoolsConstructor{proposition}  {Proposition}    {Gray!20}
\MyThmtoolsConstructor{example}      {Example}        {Gray!20}
\MyThmtoolsConstructor{comment}      {Comment}        {Gray!20}
\MyThmtoolsConstructor{exercise}     {Exercise}       {Gray!20}
\MyThmtoolsConstructor{conjecture}   {Conjecture}     {Gray!20}
\MyThmtoolsConstructor{reflection}   {Reflection}     {Gray!20}
\MyThmtoolsConstructor{question}     {Question}       {Gray!20}
\MyThmtoolsConstructor{observation}  {Observation}    {Gray!20}
\MyThmtoolsConstructor{claim}        {Claim}          {Gray!20}
\MyThmtoolsConstructor{fact}         {Fact}           {Gray!20}
\MyThmtoolsConstructor{note}         {Note}           {Gray!20}
\MyThmtoolsConstructor{remark}       {Remark}         {Gray!20}
\MyThmtoolsConstructor{notation}     {Notation}       {Gray!20}
\MyThmtoolsConstructor{summary}      {Summary Result} {Gray!20}
\MyThmtoolsConstructor{problem}      {Problem}        {Gray!20}


\MySubThmtoolsConstructor{subclaim}     {Claim}          {Gray!20}

\newcommand*{\subclaimproofname}{Proof of Claim}
\newenvironment{subclaimproof}[1][\subclaimproofname]{\begin{proof}[#1]}{\end{proof}}

\usepackage[dvipsnames]{xcolor}
\usepackage{amsmath}
\usepackage{amsthm}
\usepackage{amssymb}
\usepackage{changepage} 
\usepackage[mathscr]{euscript}
\usepackage{enumitem}
\setlist{nolistsep} 
\usepackage{verbatim}
\usepackage{color}
\usepackage{geometry}
\usepackage{graphicx}
\usepackage{ifthen}
\usepackage{mathtools}
\usepackage{mathdots}
\usepackage{lipsum}
\usepackage[many]{tcolorbox}
\usepackage{complexity}
\usepackage{hyperref}
\usepackage{tikz-cd} 
\usepackage{setspace}\setstretch{1}
\usepackage{algorithm}
\usepackage[noend]{algpseudocode}
\usepackage{thmtools, thm-restate}
\usepackage{todonotes}
\usepackage{subcaption}
\usepackage{faktor}








\geometry{tmargin=1in, bmargin=1in, lmargin=1in, rmargin = 1in}

\newcounter{mycomment}

\newcounter{proposed}

\newboolean{print} 
\setboolean{print}{true}

\newfunc{\exponential}{exp}

\newlang{\langINDEX}{INDEX}
\newlang{\langLENGTHGEQ}{LENGTHGEQ}
\newlang{\langtemp}{A}          
\newlang{\langREACH}{REACH}

\newcommand{\set}[1]{\left\{#1\right\}}

\newcommand{\closure}{\overline}
\newcommand{\clos}{\overline}

\newcommand{\N}{\mathbb{N}}     
\renewcommand{\R}{\mathbb{R}}   
\renewcommand{\C}{\mathbb{C}}   


\renewcommand{\S}{\mathscr{S}}
\renewcommand{\P}{\mathcal{P}}


\DeclareMathOperator{\range}{range}
\DeclareMathOperator{\defeq}{\overset{def}{=}}

\DeclareMathOperator{\diam}{diam}


\DeclarePairedDelimiter{\ceil}{\lceil}{\rceil}
\DeclarePairedDelimiter\floor{\lfloor}{\rfloor}

\DeclarePairedDelimiter\abs{\lvert}{\rvert}

\DeclarePairedDelimiter\norm{\lVert}{\rVert}


\newcommand{\mytcbtheoremconstructor}[4]{

  \ifthenelse{\boolean{print}}{
    \newtcbtheorem[use counter=boxcounter, number within=section]{#1}{#2}{
      boxrule=0.7mm,
      skin=enhanced,
      middle=3mm,
      breakable, 
      colback=white,
      colframe=#4!50!black,
      fonttitle=\bfseries
    }{#3}
  }{
    \newtcbtheorem[use counter=boxcounter, number within=section]{#1}{#2}{
      boxrule=0.7mm,
      skin=enhanced,
      middle=3mm,
      breakable, 
      bicolor,
      colback=#4!20,
      colbacklower=white,
      colframe=#4!50!black,
      fonttitle=\bfseries,
    }{#3}
  }


}

\newcommand{\unitballmeasurestd}{v_{{\scriptscriptstyle \norm{\cdot}},d}}
\newcommand{\unitballmeasureinf}{v_{{\scriptscriptstyle \norm{\cdot}_\infty},d}}

\newcommand{\diamstd}{\diam_{\norm{\cdot}}}
\newcommand{\diaminf}{\diam_{\infty}}

\newcommand{\ballo}[2]{B^{\circ}(#1,#2)}
\newcommand{\ballc}[2]{\clos{B}(#1,#2)}

\newcommand{\ballon}[3]{B^\circ_{#3}(#1,#2)}
\newcommand{\ballcn}[3]{\clos{B}_{#3}(#1,#2)}

\newcommand{\balloinf}[2]{\ballon{#1}{#2}{\infty}} 
\newcommand{\ballcinf}[2]{\ballcn{#1}{#2}{\infty}} 
\newcommand{\ballostd}[2]{\ballon{#1}{#2}{\scriptscriptstyle\norm{\cdot}}} 
\newcommand{\ballcstd}[2]{\ballcn{#1}{#2}{\scriptscriptstyle\norm{\cdot}}} 

\newcommand*{\numberfullref}[1]{\hyperref[{#1}]{\Autoref*{#1} (\nameref*{#1})}}
\newcommand*{\namefullref}[1]{\hyperref[{#1}]{\nameref*{#1} (\Autoref*{#1})}}

\newfunc{\subpoly}{subpoly}
\newfunc{\quasipoly}{quasipoly}
\newfunc{\strongsubexp}{strongsubexp}
\newfunc{\weaksubexp}{weaksubexp}
\newfunc{\eexp}{exp}

\renewcommand{\epsilon}{\varepsilon}

\input{_authors}

\title{Geometry of Rounding: Near Optimal Bounds and a New Neighborhood Sperner's Lemma
\thanks{Research supported in part by NSF grant 1934884, 2130536, 2130608}
}

\date{\today}

\begin{document}
\maketitle

\renewcommand{\P}{\mathcal{P}}

\begin{abstract}
A partition $\P$ of $\R^d$ is called a $(k,\varepsilon)$-secluded partition if, for every $\vec{p} \in \R^d$, the ball $\ballcinf{\varepsilon}{\vec{p}}$ intersects at most $k$ members of $\P$. A goal in designing such secluded partitions is to minimize $k$ while making $\varepsilon$ as large as possible. This partition problem has connections to a diverse range of topics, including deterministic rounding schemes, pseudodeterminism, replicability, as well as Sperner/KKM-type results.

In this work, we establish near-optimal relationships between $k$ and $\varepsilon$. We show that, for any bounded measure partitions and for any $d\geq 1$, it must be that $k\geq(1+2\varepsilon)^d$. Thus, when $k=k(d)$ is restricted to ${\rm poly}(d)$, it follows that $\varepsilon=\varepsilon(d)\in O\left(\frac{\ln d}{d}\right)$. This bound is tight up to log factors, as it is known that there exist secluded partitions with $k(d)=d+1$ and $\epsilon(d)=\frac{1}{2d}$. We also provide new constructions of secluded partitions that work for a broad spectrum of $k(d)$ and $\varepsilon(d)$ parameters. Specifically, we prove that, for any $f:\mathbb{N}\rightarrow\mathbb{N}$, there is a secluded partition with $k(d)=(f(d)+1)^{\ceil{\frac{d}{f(d)}}}$ and $\varepsilon(d)=\frac{1}{2f(d)}$. These new partitions are optimal up to $O(\log d)$ factors for various choices of $k(d)$ and $\varepsilon(d)$. Based on the lower bound result, we establish a new neighborhood version of Sperner's lemma over hypercubes, which is of independent interest. In addition, we prove a no-free-lunch theorem about the limitations of rounding schemes in the context of pseudodeterministic/replicable algorithms.

We use techniques from measure theory and the geometry of numbers, including the generalized Brunn-Minkowski inequality, the isodiametric inequality, and Blichfeldt's theorem, to establish our results.

\end{abstract}

\pagenumbering{roman}
\newpage
\tableofcontents

\newpage
\pagenumbering{arabic}

\section{Introduction}
\label{sec:introduction}

We investigate the following secluded partition problem. We use the notation $\ballcstd{\epsilon}{\vec{p}}$ and $\ballostd{\epsilon}{\vec{p}}$ to indicate, respectively, the closed or open ball of radius $\epsilon$ centered at $\vec{p}$ for a general norm, and we will use $\ballcinf{\epsilon}{\vec{p}}$ and $\balloinf{\epsilon}{\vec{p}}$ when the norm is the $\ell_\infty$ norm.

\begin{definition}[\cite{geometry_of_rounding}]
A partition $\P$ of $\R^d$ is called a $(k,\varepsilon)$-secluded partition if for every $\vec{p} \in \R^d$, the ball $\ballcinf{\varepsilon}{\vec{p}}$ intersects at most $k$ members of $\P$. 
The parameters $k$ and $\varepsilon$ are called the {\em degree} and {\em tolerance} respectively.
\end{definition}

Ideally, we would like to construct partitions with degree ($k$) as small as possible and tolerance ($\varepsilon$) as large as possible. However, there is a natural trade-off between these two parameters.
It is easy to see that for the standard grid partition of $\R^d$ with half-open unit hypercubes, we can have $k=2^d$ and $\varepsilon=\frac12$.   Somewhat surprisingly, $k$ can be made exponentially smaller at the expense of making $\varepsilon$ (polynomially) small. In particular, it is known that there is a unit hypercube partition of $\R^d$ with $k = d+1$ and $\varepsilon = \frac{1}{2d}$~\cite{geometry_of_rounding}. Also, a certain rounding scheme given in \cite{hoza_preserving_2018} induces a $(d+1,\frac{1}{6(d+1)})$-secluded partition (with members which are not hypercubes). It is known that for $\varepsilon>0$, any $(k,\varepsilon)$-secluded partition of $\R^d$ with at most unit diameter members (in $\ell_\infty$) must have $k\geq d+1$. In addition, if $k=d+1$ then it must be that $\varepsilon\leq\frac{2}{\sqrt{d}}$~\cite{geometry_of_rounding}. These are the only results known about the construction/impossibility results of secluded partitions. 

The following questions naturally arise. (1)  
Can we design a secluded partition with degree $k=k(d)\in \poly(d)$ with tolerance $\varepsilon=\varepsilon(d)\in\omega(\frac1{d})$? (2) Does allowing $k(d)$ to be sub-exponential in $d$ allow for $\varepsilon(d)$ to be a constant? (3) More generally, and very fundamentally, what is the trade-off between the degree and the tolerance parameters in secluded partitions? In this work, we establish a near-optimal trade-off between these two parameters.

\subsection{Motivation}

Investigating constructions and trade-offs for  secluded partitions is a fundamental question in its own right. However, its applicability to various topics in theoretical computer science is also a significant motivation for investigation.

\medskip
\noindent{\bf\em Deterministic Rounding:}
Rounding schemes are fundamental in computer science with  applications in approximation algorithms, pseudorandomness, replicability, and differential privacy~\cite{RT87,SaksZhou99, DixonPavanVinod18,impagliazzo_reproducibility_2022,Goldreich19,GrossmanLiu19,kindler_spherical_2012,kindler_spherical_2008}. There is a natural equivalence between rounding schemes and partitions of Euclidean space in a very general sense that has been observed in prior works~\cite{kindler_spherical_2008,kindler_spherical_2012,geometry_of_rounding}. In particular, certain deterministic rounding schemes are equivalent to secluded partitions. 

\begin{definition}
A deterministic rounding scheme is a family of functions ${\mathcal F} = \{f_d\}_{d\in\N}$ where $f_d: \R^d \to \R^d$. $\mathcal{F}$ is called a  $(k(d),\varepsilon(d))$-deterministic rounding scheme if two properties hold for each $d\in\N$: (1) for all $\vec{x}\in\R^d$, $\norm{f_d(\vec{x})-\vec{x}}_\infty\leq \frac12$\footnote{
The value $\frac{1}{2}$ in the above definition is arbitrary and can be replaced with another constant by appropriately scaling other parameters.}, and (2) for all $\vec{p}\in\R^d$ the set $\set{f_d(\vec{x})\colon \vec{x}\in \ballcinf{\varepsilon(d)}{\vec{p}}}$ has cardinality at most $k(d)$.
\end{definition}

 Let $f: \R^d \rightarrow \R^d$ be a rounding function. For any $\vec{y} \in \R^d$, let $X_{\vec{y}}=\set{\vec{x}\in\R^d\colon f(\vec{x})=\vec{y}}$ denote the points rounded to $\vec{y}$. Let $\P_f = \set{X_{\vec{y}}\colon \vec{y}\in\range(f)}$ which is clearly a partition of $\R^d$. 
Conversely, a partition $\P$ of $\R^d$ induces (many) deterministic rounding functions $f_{\P}$ as follows: for each member $X \in \P$ let $\vec{p}_{X} \in \R^d$ (often $\vec{p}_X\in X$) be some fixed representative of the member $X$. Then the rounding function $f_{\P}$ maps any point $\vec{x} \in X$ to $\vec{p}_{X}$. This leads to the following observation~\cite{geometry_of_rounding}.

\begin{observation}[Equivalence of Rounding Schemes and Partitions]\label{:rounding-schemes-and-partitions}
A $(k(d), \epsilon(d))$-deterministic rounding scheme induces, for each $d\in\N$, a $(k(d), \epsilon(d))$-secluded partition of $\R^d$ in which each member has diameter at most $1$. Conversely, a sequence $\langle \P_d\rangle_{d=1}^\infty$ of partitions where $\P_d$ is $(k(d), \epsilon(d))$-secluded and contains only members of diameter at most $1$ induces many $(k(d), \epsilon(d))$-deterministic rounding schemes.
\end{observation}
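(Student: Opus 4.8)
The plan is to prove the two implications separately; in each direction the object to be constructed is completely explicit, and the verification amounts to directly unwinding the two conditions involved --- a displacement/diameter bound and a cardinality/seclusion bound.

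For the forward direction, given a $(k(d),\epsilon(d))$-deterministic rounding scheme $\mathcal{F}=\{f_d\}_{d\in\N}$, I would fix $d$, abbreviate $f=f_d$, and take $\P_f=\set{f^{-1}(\vec{y})\colon\vec{y}\in\range(f)}$, the partition of $\R^d$ into fibers of $f$ (it is a partition since preimages of distinct points of the range are nonempty and disjoint and their union is the whole domain). For the diameter bound: if $\vec{x},\vec{x}'$ lie in the same fiber then $f(\vec{x})=f(\vec{x}')=:\vec{y}$, so $\norm{\vec{x}-\vec{x}'}_\infty\le\norm{\vec{x}-\vec{y}}_\infty+\norm{\vec{y}-\vec{x}'}_\infty\le\frac12+\frac12=1$ by property (1); hence every member of $\P_f$ has $\ell_\infty$-diameter at most $1$. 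For seclusion: a fiber $f^{-1}(\vec{y})$ meets $\ballcinf{\epsilon(d)}{\vec{p}}$ exactly when $\vec{y}\in\set{f(\vec{x})\colon\vec{x}\in\ballcinf{\epsilon(d)}{\vec{p}}}$, and property (2) bounds the cardinality of this set by $k(d)$, so $\ballcinf{\epsilon(d)}{\vec{p}}$ intersects at most $k(d)$ members of $\P_f$. Thus $\P_f$ is $(k(d),\epsilon(d))$-secluded, as required.

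For the converse, given partitions $\P_d$ that are $(k(d),\epsilon(d))$-secluded and all of whose members have $\ell_\infty$-diameter at most $1$, I would first record that every such member $X$ has a point $\vec{p}_X$ --- for instance the center of its coordinatewise bounding box --- with $\norm{\vec{x}-\vec{p}_X}_\infty\le\frac12$ for all $\vec{x}\in X$: since $\diaminf(X)\le1$, the width of $X$ in each coordinate direction is at most $1$, so the coordinatewise midpoint of $X$ is within $\frac12$ of every point of $X$. Fixing such a representative $\vec{p}_X$ for each member of each $\P_d$, I define $f_d(\vec{x}):=\vec{p}_X$ for the unique $X\in\P_d$ containing $\vec{x}$. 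Property (1) holds by the choice of $\vec{p}_X$; property (2) holds because $\set{f_d(\vec{x})\colon\vec{x}\in\ballcinf{\epsilon(d)}{\vec{p}}}$ is contained in $\set{\vec{p}_X\colon X\in\P_d,\ X\cap\ballcinf{\epsilon(d)}{\vec{p}}\neq\emptyset}$, whose cardinality is at most the number of members of $\P_d$ meeting the ball, hence at most $k(d)$ by seclusion. The qualifier ``many'' reflects the freedom in the choice of representatives: any point of the nonempty box $\bigcap_{\vec{x}\in X}\ballcinf{\frac{1}{2}}{\vec{x}}$ is an admissible $\vec{p}_X$, and this box is typically not a single point.

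The only real subtlety --- and the one place where the $\ell_\infty$ geometry is actually used --- is the existence step in the converse: a set of $\ell_\infty$-diameter at most $1$ need not contain a point within distance $\frac12$ of all its points (e.g.\ the boundary of a unit square in $\R^2$), which is exactly why the statement hedges with ``often $\vec{p}_X\in X$''. The resolution is simply to allow $\vec{p}_X$ to be the bounding-box center rather than to insist it lie in $X$. Everything else is routine bookkeeping with the definitions.
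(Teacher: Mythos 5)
Your proof is correct and uses exactly the two constructions the paper sketches in the paragraph preceding the observation (fibers of $f$ in the forward direction, choice of a representative $\vec{p}_X$ per member in the converse); the paper itself does not carry out the verification and instead cites prior work, so you have simply filled in the routine checks. Your one genuine addition --- observing that an arbitrary member $X$ of $\ell_\infty$-diameter at most $1$ may have no $\frac12$-center inside $X$ itself (the boundary of a square is a clean counterexample), and that the coordinatewise bounding-box midpoint always works --- correctly accounts for why the paper writes ``often $\vec{p}_X\in X$'' rather than requiring it, and makes the existence step in the converse fully rigorous.
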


The observation provides a geometric perspective on rounding schemes. By investigating secluded partitions, this work examines the concept of rounding from a geometric viewpoint and explores its possibilities and limitations. Rounding schemes and secluded partitions are related to the notion of pseudodeterminism which we discuss next.

\medskip
\noindent{\bf\em Pseudodeterminism and Replicability:}
Due to its unexpected links with computational complexity theory~\cite{OliveiraSanthanam17,DixonPavanVinod18,LuOlivieraSanthanam21,DPWV22}, differential privacy~\cite{BLM20,ALMM19,GKM21}, and reproducible learning~\cite{impagliazzo_reproducibility_2022}, the concept of pseudodeterminism and its various forms have been gaining increased attention in theoretical computer science. A probabilistic algorithm $M$ is {\em pseudodeterministic} if, for every input $x$, there is a canonical value $v_x$ such that $M(x)$ outputs $v_x$ with high probability. This notion can be extended to $k$-pseudodeterminism where for every $x$, there is a canonical set $S_x$ of size at most $k$ and the output of $M(x)$ belongs to $S_x$ with high probability. Pseudodeterminism naturally captures the notion of reproducibility/replicability in computations. The concept of pseudodeterminism and its extensions have been applied to learning and privacy, leading to the seminal discovery of the equivalence between private PAC learning and Littlestone dimension~\cite{BLM20,ALMM19}.

One method for creating a $k$-pseudodeterministic algorithm is through the use of rounding~\cite{Goldreich19,geometry_of_rounding,hoza_preserving_2018}. Let $A$ be a probabilistic algorithm that approximates a function $f$ whose range is $[0, 1]^d$ with an additive error $\nu$ in 
 $\ell_{\infty}$ norm. In general, the number of possible outcomes of such an approximation algorithm could be very large.
 However, by rounding each coordinate of $A(x)$ to the nearest multiple of $\nu$, we can create a $2^d$-pseudodeterministic algorithm with an error of $2\nu$. In general, we can use a $(k(d),\varepsilon(d))$-deterministic rounding scheme (equivalently by \Autoref{:rounding-schemes-and-partitions}, a unit diameter $(k(d),\varepsilon(d))$-secluded partition) to obtain a $k(d)$-pseudodeterministic algorithm with an approximation error of $O(\nu/\varepsilon)$. 
 Using the known $(d+1,{1\over 2d})$-secluded partition, this implies that any $\nu$-approximation algorithm can be converted to a $(d+1)$-pseudodeterministic algorithm with an {\em approximation error blown up to} $O(\nu d)$.  Thus, to achieve a final approximation error of $\varepsilon'$, the original approximation algorithm $A$
 must start with a smaller error of $O(\varepsilon'/d)$.  Since the running time of approximation algorithms typically depends on the desired quality of the approximation,  this leads to larger run times for approximation algorithms (or sample complexity in the context of learning). A critical question is whether this {\em linear blowup} in the approximation error is necessary. Can we convert $A$ into a $\poly(d)$-pseudodeterministic algorithm with only a smaller approximation-error blowup, say $d^{\gamma}$, for some $0 < \gamma < 1$? 
A trade-off result that we establish between the degree and the tolerance  for secluded partitions  leads to 
a {\em no-free-lunch} theorem: for any generic rounding scheme that produces a $k(d)$-pseudodeterministic algorithm, there must be a degradation in approximation quality by a factor of at least $\Omega(\frac{d}{\ln k(d)})$. Consequently, if $k(d) \in \poly(d)$, then the approximation accuracy must decrease by at least a factor of $\Omega(\frac{d}{\ln d})$. 

\medskip
\noindent{\bf\em Connection to Sperner's Lemma and the KKM Lemma:}
Sperner's lemma and the KKM lemma have found applications in theoretical computer science~\cite{Papadimitriou90,ChenD09, SaksZ00,HerlihyShavit99, DaskalakisGP09,BorowskyG93,CLLORS95,DPWV22}. Somewhat surprisingly to the authors, our investigation to establish the trade-off between the degree and the tolerance parameters of secluded partitions has led to the discovery of a new ``neighborhood" version of the Sperner/KKM lemma. This new lemma has already been useful in establishing new results on fair-division problems (ongoing work). We expect that the secluded partition problem and the results we establish will have further applications in other contexts where the Sperner/KKM lemma has been utilized.

To conclude, exploring the constructions and trade-offs associated with secluded partitions is an essential inquiry in its own regard.  The fact that it has implications across a variety of subjects within theoretical computer science is an additional driving force for further investigation.

\subsection{Our Contributions}

Our work makes four key contributions. Firstly, we establish a lower bound on the degree parameter $k$ in terms of the tolerance parameter $\varepsilon$, for any $(k,\varepsilon)$-secluded partition. This result is established in a very general setting and works for {\em any} norm in Euclidean space. Secondly, using the techniques  developed in the proof of the degree lower bound result, we establish a new ``neighborhood" variant of the Sperner/KKM lemma that is of independent interest. Thirdly, we give a new generic construction of $(k, \varepsilon)$-secluded partitions.  This construction together with the degree lower bound result establishes near optimality of the tolerance parameter  for various natural choices of degree parameter. Finally, we establish a no-free-lunch theorem for multi-pseudodeterministic computations: for any generic rounding method that produces a $k(d)$-pseudodeterministic algorithm for a function whose range is $\R^d$, there must be an almost linear degradation in approximation quality. 

\subsubsection{A Lower Bound on the Degree}

We state the result in terms of bounded-measure partitions and will discuss later why we use measure instead of diameter. For now, it is enough to note that for the $\ell_\infty$ norm, unit diameter implies unit outer measure (see \Autoref{diameter-ball} with $D=1$ and $M=1$), so any results stated for partitions with members of outer measure at most $1$ immediately implies the same result for partitions with members of $\ell_\infty$ diameter at most $1$.

\begin{restatable}{theorem}{restatableKExponentialInDimensionMeasure}\label{k-exponential-in-dimension-measure}
Let $d\in\N$, $\varepsilon\in[0,\infty)$, and $\P$ a partition of $\R^d$ such that every member has outer Lebesgue measure at most $1$. Then there exists some $\vec{p}\in\R^d$ such that $\balloinf{\varepsilon}{\vec{p}}$ intersects at least $(1+2\varepsilon)^d$ members of $\P$. Thus, if $\P$ is a $(k,\varepsilon)$-secluded partition, then $k\geq (1+2\varepsilon)^d$. Consequently, if $k\leq2^d$, then it must be that $\varepsilon\leq \frac{\ln k}{d}$. 
\end{restatable}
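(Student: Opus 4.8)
The plan is to prove the slightly stronger statement that the multiplicity function $m(\vec p) := \#\{X \in \P : X \cap \balloinf{\varepsilon}{\vec p} \ne \emptyset\}$ satisfies $\sup_{\vec p \in \R^d} m(\vec p) \ge (1+2\varepsilon)^d$; since $m$ is integer‑valued this at once yields a point $\vec p$ with $m(\vec p) \ge (1+2\varepsilon)^d$, and the remaining two assertions follow by taking logarithms (using $e^u \le 1+2u$ for $0 \le u \le \ln 2$ to pass from $(1+2\varepsilon)^d \le k \le 2^d$ to $\varepsilon \le \tfrac{\ln k}{d}$). We may assume $\varepsilon > 0$. The starting observation is that $X$ meets $\balloinf{\varepsilon}{\vec p}$ precisely when $\vec p$ lies in the open neighborhood $N_\varepsilon(X) := X \oplus \balloinf{\varepsilon}{\vec 0}$, so $m = \sum_{X \in \P} \mathbf{1}_{N_\varepsilon(X)}$, and the strategy is a first‑moment estimate of $m$ over a large cube.

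The engine is a per‑member volume bound coming from the generalized Brunn–Minkowski inequality. Fix a large $L$, let $Q_L := [0,L]^d$, and let $Q_L^{-\varepsilon} := [\varepsilon,L-\varepsilon]^d$ be the erosion for which $N_\varepsilon(Q_L^{-\varepsilon}) \subseteq Q_L$. For any member $X$ set $A := X \cap Q_L^{-\varepsilon}$; then $N_\varepsilon(A) \subseteq N_\varepsilon(X) \cap Q_L$, the set $N_\varepsilon(A)$ is open, and Brunn–Minkowski (in its outer‑measure form, since members need not be measurable) gives $\mu^{*}(N_\varepsilon(A)) \ge \big(\mu^{*}(A)^{1/d} + 2\varepsilon\big)^d$, where $2\varepsilon = \mu(\balloinf{\varepsilon}{\vec 0})^{1/d}$. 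Since $\mu^{*}(A) \le \mu^{*}(X) \le 1$, the elementary inequality $(t+2\varepsilon)^d \ge (1+2\varepsilon)^d\,t^d$ for $t \in [0,1]$ yields, for \emph{every} member $X$,
\[
\mu\big(N_\varepsilon(X) \cap Q_L\big) \;\ge\; (1+2\varepsilon)^d\,\mu^{*}\big(X \cap Q_L^{-\varepsilon}\big).
\]
Proving this for all members, not just those contained in $Q_L^{-\varepsilon}$, is the decisive point: when we sum, the right‑hand side telescopes and the usual loss from members straddling the boundary of $Q_L$ simply vanishes.

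To finish, let $Z$ be the union of the members of outer measure zero and consider the measurable minorant $m' := \sum_{X:\ \mu^{*}(X)>0}\mathbf{1}_{N_\varepsilon(X)} \le m$ (a countable sum of indicators of open sets, hence lower semicontinuous). Integrating over $Q_L$, applying the per‑member bound, and using countable subadditivity of $\mu^{*}$ to lower‑bound $\sum_X \mu^{*}(X \cap Q_L^{-\varepsilon})$ by $\mu^{*}\big(Q_L^{-\varepsilon}\setminus Z\big) \ge (L-2\varepsilon)^d - \mu^{*}(Z)$, gives $\big(\sup_{\vec p} m'(\vec p)\big) L^d \ge \int_{Q_L} m' \ge (1+2\varepsilon)^d\big((L-2\varepsilon)^d - \mu^{*}(Z)\big)$. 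If $\mu^{*}(Z)=0$, letting $L \to \infty$ shows $\sup m' \ge (1+2\varepsilon)^d$, and since $m \ge m' \ge 0$ is integer‑valued we are done. If instead $\mu^{*}(Z) > 0$, a measurable hull of $Z$ has a Lebesgue density point $\vec p_0$, so every sufficiently small cube about $\vec p_0$ captures positive outer measure of $Z$ and therefore meets infinitely many members; choosing such a cube of radius at most $\varepsilon$ gives $m(\vec p_0) = \infty$, and again we are done.

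The main obstacle is getting the localization right. A naive first moment over $Q_L$ that only counts members lying inside $Q_L$ is hopeless, because a member of outer measure at most $1$ can be long and thin and thread entirely through $Q_L$, so essentially no member need be contained in it. The resolution — that the Brunn–Minkowski bound should be applied member by member against $X \cap Q_L^{-\varepsilon}$ regardless of the member's global shape, so that $\sum_X \mu^{*}(X \cap Q_L^{-\varepsilon})$ is essentially the whole cube — is the crux. The secondary difficulty is measure‑theoretic hygiene: members may be non‑measurable (which forces the outer‑measure version of Brunn–Minkowski) and there may be uncountably many null members, which is exactly what the density‑point dichotomy in the last step is designed to handle.
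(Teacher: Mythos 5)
Your proof follows the same essential blueprint as the paper's: the change of perspective $X \mapsto N_\varepsilon(X)=X+\balloinf{\varepsilon}{\vec 0}$ (the paper's \Autoref{antisymmetry-containment-pre-claim-minkowski-sum}), the outer-measure Brunn--Minkowski bound $m(N_\varepsilon(A))\ge(\mu^*(A)^{1/d}+2\varepsilon)^d$ (the paper's \Autoref{outer-measure-brunn-minkowsi-bound}), the elementary $(t+2\varepsilon)^d\ge(1+2\varepsilon)^d t^d$ for $t\in[0,1]$ (the paper's \Autoref{brunn-minkowski-bound-lemma}, reparametrized), a localization to a bounded region, and a limiting argument as the region grows. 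The genuine differences are presentational rather than conceptual. You phrase the counting step as a first-moment integral $\int_{Q_L} m'\le(\sup m')L^d$ rather than as an explicit continuous pigeonhole lemma (the paper's \Autoref{lower-bound-cover-number-Rd}); you localize by \emph{eroding} the cube $Q_L$ to $Q_L^{-\varepsilon}$ rather than \emph{dilating} a ball $S_n$ to $S_n'$ as the paper does, but both produce the same $\to 1$ ratio in the limit; and you dispose of the degenerate null-member case via a Lebesgue density point of $Z$, whereas the paper handles the analogous ``$\S_n$ infinite'' case by the cruder observation that each inflated member already has measure $\ge(2\varepsilon)^d$. Your density-point argument is a nice alternative, and the ``decisive point'' you emphasize (applying Brunn--Minkowski to $X\cap Q_L^{-\varepsilon}$ for \emph{every} member $X$ rather than only members contained in $Q_L$) is indeed present in the paper in the equivalent guise of working with the partition induced on $S_n$ by $\P$.

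There is one real gap. You assert parenthetically that $m'=\sum_{X:\,\mu^*(X)>0}\mathbf{1}_{N_\varepsilon(X)}$ is a \emph{countable} sum, and the Tonelli interchange $\int_{Q_L}m'=\sum_X\mu(N_\varepsilon(X)\cap Q_L)$ as well as the use of countable subadditivity of $\mu^*$ both lean on this. But it is not automatic: under the axiom of choice there exist partitions of $[0,1]$ (hence of $\R^d$) into uncountably many pairwise disjoint sets, each of positive outer Lebesgue measure (a two-dimensional transfinite recursion hitting every closed set of positive measure produces them). Your dichotomy on $\mu^*(Z)$ only handles the null members; it does not resolve the possibility of uncountably many positive-outer-measure members. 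The fix is easy and is essentially what the paper builds in: if uncountably many members of $\P$ meet a single bounded set such as $Q_L$, cover its closure by finitely many $\varepsilon$-balls (Heine--Borel) so that one of them meets infinitely many members and we are done; otherwise only countably many members meet each $Q_L$, and since $\R^d=\bigcup_L Q_L$ the whole family $\{X:\mu^*(X)>0\}$ is countable and your argument proceeds unchanged. The paper's \Autoref{epsilon-measure-bounded-any-norm} structures its case split precisely to avoid ever summing Brunn--Minkowski bounds over an uncountable family, falling back on the trivial bound $m(Y+\ballo{\varepsilon}{\vec 0})\ge(2\varepsilon)^d$ when $\S_n$ is infinite. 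You should add the compactness reduction (or an equivalent) before invoking countability.
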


Thus even if $\varepsilon(d)$ is an arbitrarily small constant, the degree $k(d)$ has to be made exponential in $d$.  Also, it follows that for $k(d)$ to be $\poly(d)$, we must have $\varepsilon(d) \in O(\frac{\ln d}{d})$. This establishes that the $(d+1, \frac{1}{2d})$ partitions form~\cite{geometry_of_rounding} optimal up to a log factor. This also implies that  relaxing $k(d)=d+1$ to $k(d)\in\poly(d)$ can at best contribute $O(\ln d)$ factor improvements in $\varepsilon(d)$.

Our proof of \Autoref{k-exponential-in-dimension-measure} does not rely on any specific properties of the $\ell_{\infty}$ norm. In particular,
we establish the  following result   that holds for {\em any norm} in $\R^d$, and \Autoref{k-exponential-in-dimension-measure} follows from this as a corollary.

\begin{restatable}[$\varepsilon$-Neighborhoods for Measure Bounded Partitions and Arbitrary Norm]{theorem}{restatableMeasureBoundedAnyNorm}\label{epsilon-measure-bounded-any-norm}
Let $d\in\N$, $M\in(0,\infty)$, and $\P$ a partition of $\R^d$ such that every member has outer Lebesgue measure at most $M$. Let $\R^d$ be equipped with any norm $\norm{\cdot}$. For every $\varepsilon\in(0,\infty)$, there exists $\vec{p}\in\R^d$ such that $\ballostd{\epsilon}{\vec{p}}$ intersects at least $k=\ceil*{\left(1+\varepsilon\left(\frac{\unitballmeasurestd}{M}\right)^{1/d}\right)^d}$ members of the partition
where $\unitballmeasurestd\defeq m\left(\ballostd{1}{\vec{0}}\right)$ is the measure of the $\norm{\cdot}$ unit ball.
\end{restatable}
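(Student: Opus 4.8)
The plan is to prove the contrapositive-style statement by a volume-counting (double-counting) argument combined with the generalized Brunn--Minkowski inequality and the isodiametric inequality. Fix a large cube $Q = [-R,R]^d$ and consider only the members of $\P$ that meet $Q$; by a compactness/limiting argument it suffices to find a point $\vec p$ whose open $\varepsilon$-ball meets many members, so working inside a sufficiently large box and then letting $R\to\infty$ is harmless. For each member $X \in \P$, let $N_\varepsilon(X) = \{\vec p : \ballostd{\varepsilon}{\vec p} \cap X \neq \emptyset\} = X + \ballostd{\varepsilon}{\vec 0}$ be the open $\varepsilon$-neighborhood of $X$. Since $\P$ partitions $\R^d$, every point $\vec p$ lies in at least one member, hence in at least one $N_\varepsilon(X)$, so $\sum_{X} \mathbf{1}[\vec p \in N_\varepsilon(X)] \geq 1$ pointwise on $\R^d$, and moreover the number of members whose $\varepsilon$-ball-neighborhood contains $\vec p$ is exactly the number of members that $\ballostd{\varepsilon}{\vec p}$ intersects. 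If, for contradiction, every $\vec p \in Q$ had $\ballostd{\varepsilon}{\vec p}$ meeting at most $k-1$ members (with $k$ as in the statement), then integrating the covering-multiplicity function over $Q$ would give $\mathrm{vol}(Q) \leq (k-1)^{-1}\cdot 0$... — more precisely, I would instead lower-bound $\sum_{X : X \cap Q \ne \emptyset} m(N_\varepsilon(X) \cap Q)$ from below by $\mathrm{vol}(Q)$ (every point of $Q$ is covered at least once) and from above by (max multiplicity) $\times \sum_X m(N_\varepsilon(X) \cap Q')$ for a slightly enlarged box $Q'$.

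The crux is bounding $m(N_\varepsilon(X)) = m(X + \ballostd{\varepsilon}{\vec 0})$ from above in terms of $m(X) \le M$ and the dimension. This is exactly where Brunn--Minkowski enters: for the Minkowski sum of the (outer-measurable hull of) $X$ with a ball of radius $\varepsilon$, the generalized Brunn--Minkowski inequality gives $m(X + \ballostd{\varepsilon}{\vec 0})^{1/d} \geq m(X)^{1/d} + \varepsilon\, \unitballmeasurestd^{1/d}$ — but that's the wrong direction. Instead, the right tool is a reverse-type estimate that does hold: by the isodiametric inequality, among all sets of a given measure the ball has the smallest diameter, equivalently a set of measure at most $M$ has diameter at least $2(M/\unitballmeasurestd)^{1/d}$; combined with the fact that $X$ can be enclosed... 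The cleaner route, which I expect the authors take, is a \emph{local averaging / cube decomposition}: partition $Q$ into tiny grid cubes of side $\delta$, and observe that if a grid cube is hit by a member $X$ then an expanded cube (side $\delta + 2\varepsilon$ in $\ell_\infty$, or the appropriate norm-ball dilate) is contained in $N_\varepsilon(X)$; summing $m$ of these over members and using that the members with measure $\le M$ can cover $Q$ only with total multiplicity forcing a count of at least $\mathrm{vol}(Q)/M$ distinct members locally, one gets the exponent $(1 + \varepsilon(\unitballmeasurestd/M)^{1/d})^d$ from the ratio of the volume of a dilated ball/cube to the original.

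So the key steps, in order: (1) reduce to a bounded region and set up the neighborhood sets $N_\varepsilon(X)$; (2) establish the pointwise lower bound $\sum_X \mathbf 1[\vec p \in N_\varepsilon(X)] \ge (\text{number of members met by }\ballostd{\varepsilon}{\vec p})$ and that this is $\ge 1$ everywhere; (3) use Brunn--Minkowski (generalized, for outer measure) to relate $m(N_\varepsilon(X))$ to $m(X)^{1/d} + \varepsilon \unitballmeasurestd^{1/d}$ — here for the \emph{lower} bound on neighborhood volume, which combined with $m(X) \le M$ and a scaling/normalization argument yields that the "local blow-up factor" is at least $(1 + \varepsilon(\unitballmeasurestd/M)^{1/d})^d$; (4) integrate/double-count over $Q$ and take $R \to \infty$ to conclude that some $\vec p$ must have $\ballostd{\varepsilon}{\vec p}$ meeting at least $k = \lceil (1+\varepsilon(\unitballmeasurestd/M)^{1/d})^d \rceil$ members. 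I expect step (3) — correctly invoking Brunn--Minkowski for \emph{outer} Lebesgue measure of possibly non-measurable partition members, and extracting the right inequality direction so that the blow-up factor (not a shrink factor) appears — to be the main obstacle; the measure-theoretic care needed (passing to measurable hulls, ensuring the Minkowski sum behaves well, handling that $\P$ may be uncountable) is the delicate part, and is presumably why the hypothesis is phrased via outer measure rather than diameter.
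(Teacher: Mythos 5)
Your final summary (steps (1)--(4)) matches the paper's approach: the change of perspective to the Minkowski neighborhoods $N_\varepsilon(X) = X + \ballostd{\varepsilon}{\vec{0}}$, the generalized Brunn--Minkowski lower bound on $m(N_\varepsilon(X))$, the continuous multi-pigeonhole/averaging argument over a bounded region, and the limiting argument as the region grows. However, your second paragraph contains a genuine conceptual wobble: you first say you want to bound $m(N_\varepsilon(X))$ \emph{from above}, declare Brunn--Minkowski ``the wrong direction,'' and reach for the isodiametric inequality (which is also misquoted there --- an upper bound on measure gives no lower bound on diameter) and a cube-discretization fallback. None of that is needed, and the discretization would not give the tight exponent. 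What you in fact need, and what you correctly identify in step~(3), is the \emph{lower} bound $m(N_\varepsilon(X))^{1/d} \geq m_{out}(X)^{1/d} + \varepsilon\, \unitballmeasurestd^{1/d}$.

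Three points you flag but do not resolve are exactly where the work is. First, turning that Brunn--Minkowski bound into the multiplicative form $m(N_\varepsilon(X)) \geq m_{out}(X)\cdot\bigl(1 + \varepsilon(\unitballmeasurestd/M)^{1/d}\bigr)^{d}$ needs the elementary inequality $(x^{1/d}+\alpha)^d \geq x(1+\alpha)^d$ for $x\in[0,1]$ (with $x=m_{out}(X)/M$); calling it ``a scaling/normalization argument'' hides a real lemma (the paper's Lemma~\ref{brunn-minkowski-bound-lemma}). Second, when the partition restricted to the bounded region has uncountably many nonempty members, the identity $\sum_Y m_{out}(Y)=m(\text{region})$ fails --- countable additivity only gives the wrong inequality (Fact~\ref{disjoint-uncountable}) --- so your double-count cannot be lower-bounded that way. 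The paper handles this by a case split: any finite subfamily of members contributes at least $\varepsilon^d\unitballmeasurestd$ per member to the neighborhood-volume sum (each neighborhood contains a full $\varepsilon$-ball), so infinitely many members makes the sum diverge and the pigeonhole conclusion holds trivially. Your proposal would need that branch to be complete. Third, ``letting $R\to\infty$ is harmless'' elides the point that you must carry a ceiling through a strict limit, which the paper handles with the integer-rounding observation in Fact~\ref{smaller-ceiling}. You also correctly sense that applying Brunn--Minkowski when $X$ may be non-measurable is delicate; the paper resolves it (Lemma~\ref{outer-measure-brunn-minkowsi-bound}) by splitting the $\varepsilon$-ball as $\ballostd{\varepsilon-\varepsilon'}{\vec{0}}+\ballostd{\varepsilon'}{\vec{0}}$, applying Brunn--Minkowski to the two open (hence measurable) summands, invoking the infimum definition of outer measure, and letting $\varepsilon'\to\varepsilon$ --- your proposal should at least sketch that maneuver.
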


For the $\ell_{\infty}$ norm, $\unitballmeasurestd$ is $2^d$. Thus, \Autoref{k-exponential-in-dimension-measure} is a corollary of the above theorem with the $\ell_\infty$ norm and $M=1$.
Using the isodiametric inequality, we obtain the following diameter version as a corollary.

\begin{restatable}[$\varepsilon$-Neighborhoods for Diameter Bounded Partitions]{corollary}{restatableEpsilonDiameterBoundedAnyNorm}\label{epsilon-diameter-bounded-any-norm}
Let $d\in\N$, $D\in(0,\infty)$, $\varepsilon\in(0,\infty)$. Let $\R^d$ be equipped with any norm $\norm{\cdot}$. Let $\P$ be a partition of $\R^d$ such that every member has diameter at most $D$ (with respect to $\norm{\cdot}$). Then there exists $\vec{p}\in\R^d$ such that $\ballostd{\epsilon}{\vec{p}}$ intersects at least $k=\ceil*{\left(1+\frac{2\varepsilon}{D}\right)^d}$ members of the partition.
\end{restatable}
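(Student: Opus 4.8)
The plan is to obtain this as a one-step consequence of \Autoref{epsilon-measure-bounded-any-norm}: the only thing to do is to convert the hypothesis ``every member of $\P$ has $\norm{\cdot}$-diameter at most $D$'' into one of the form ``every member of $\P$ has outer Lebesgue measure at most $M$'' for the right value of $M$, and then read off the resulting bound and simplify.

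First I would record the required isodiametric-type estimate (which is recorded earlier in the paper as \Autoref{diameter-ball}): for any norm $\norm{\cdot}$ on $\R^d$, a set $A\subseteq\R^d$ with $\diamstd(A)\leq D$ has outer Lebesgue measure at most that of a $\norm{\cdot}$-ball of radius $D/2$, i.e.\ $m^*(A)\leq\unitballmeasurestd\cdot(D/2)^d$. If one prefers not to cite \Autoref{diameter-ball}, this follows quickly from the Brunn--Minkowski inequality: for measurable $A$ one has $A-A\subseteq D\cdot\ballcstd{1}{\vec{0}}$, hence $2^d\,m(A)\leq m(A-A)\leq D^d\,\unitballmeasurestd$, and the statement for non-measurable $A$ follows by replacing $A$ with its (measurable, same-diameter) closure. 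Applying this to every member of $\P$ shows that $\P$ is a partition all of whose members have outer Lebesgue measure at most $M\defeq\unitballmeasurestd\cdot(D/2)^d\in(0,\infty)$.

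Now I would invoke \Autoref{epsilon-measure-bounded-any-norm} with this value of $M$: it produces a point $\vec{p}\in\R^d$ for which $\ballostd{\epsilon}{\vec{p}}$ intersects at least $\ceil*{\bigl(1+\varepsilon(\unitballmeasurestd/M)^{1/d}\bigr)^d}$ members of $\P$. Finally I would simplify the base of the exponent: since $\unitballmeasurestd/M=(D/2)^{-d}$, we get $(\unitballmeasurestd/M)^{1/d}=2/D$, so the count becomes $\ceil*{(1+2\varepsilon/D)^d}$, which is exactly the claimed $k$. I do not expect any genuine obstacle here, since all of the content lives in \Autoref{epsilon-measure-bounded-any-norm} and in \Autoref{diameter-ball}; the only points that deserve a moment's care are (i) working with outer measure throughout so that members of $\P$ need not be assumed measurable, and (ii) checking the normalizing constant $\unitballmeasurestd(D/2)^d$, for which the $\ell_\infty$ case (where $\unitballmeasurestd=2^d$, so the estimate reads $m^*(A)\leq D^d$) is a convenient sanity check consistent with the earlier remark that unit $\ell_\infty$-diameter implies unit outer measure.
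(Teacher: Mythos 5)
Your argument is correct and follows the paper's route exactly: bound the outer measure of each member via an isodiametric estimate, then feed the resulting $M$ into \Autoref{epsilon-measure-bounded-any-norm} and simplify $\bigl(\unitballmeasurestd/M\bigr)^{1/d}=2/D$. The one slip is a mislabeled citation: \Autoref{diameter-ball} is the $\ell_\infty$-specific fact, whereas the all-norms estimate you state and use is \Autoref{isodiametric-ineq} (the Isodiametric Inequality); your inline Brunn--Minkowski derivation (pass to $\clos{A}$, bound $\clos{A}-\clos{A}$ inside $\ballcstd{D}{\vec{0}}$, apply Brunn--Minkowski) is in fact how the paper proves that theorem, so the substance is right even though the reference is not.
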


The expression $\frac{2\varepsilon}{D}$ in the above result should be viewed as a normalization factor which is the ratio of the diameter of the $\varepsilon$-ball to the diameter of the members of the partition. 
The expression $\epsilon\left(\frac{v_d}{M}\right)^{1/d}=\frac{(\epsilon^d\cdot v_d)^{1/d}}{M^{1/d}}$ should also viewed as a normalization factor. It is typical in measure theory contexts to see the $d$th roots of measures in $d$ dimensions show up, and they often serve as a type of characteristic length scale and are sometimes more robust than actual distances. Thus, this expression should be viewed as the ratio of the characteristic length scale of the $\epsilon$ ball (which has measure $\epsilon^d\cdot v_d$) to the characteristic length scale of the members of the partition (which have measure at most $M$).

\medskip
\noindent{\bf\em A Remark about Measure:}  Though rounding schemes are equivalent to partitions with bounded {\em diameter}, we have stated the above theorems with respect to partitions with bounded {\em outer measure}. As mentioned previously, in the case of the $\ell_\infty$ norm, this is a strict generalization. Furthermore, the measure perspective allows us to employ powerful tools from measure theory and the geometry of numbers such as Blichfeldt's theorem and the \namefullref{generalized-brunn-minkowski-inequality} for establishing these results. This also allows us to establish the lower bound result with respect to any norm.  Finally, we can convert measure results to diameter results very tightly using the \namefullref{isodiametric-ineq}.

\subsubsection{A Neighborhood Sperner/KKM/Lebesgue Theorem}

We use the techniques developed in establishing  \Autoref{epsilon-measure-bounded-any-norm}  to prove a new  variant Sperner's lemma on the cube.
The version of our result bears more resemblance to the KKM lemma---known to be equivalent to Sperner's lemma in a natural way. The KKM  Leamma states that for a covering/coloring of the $d$-simplex satisfying certain properties, there is some point in the closure of at least $d+1$ sets/colors. There is also a version of the KKM lemma for the cube \cite{kuhn66, komiya_simple_1994, wolsey_cubical_1977} which can be proven either independently or as consequences of versions of Sperner's lemma for the cube \cite{de_loera_polytopal_2002}. Though it is a lesser-known result, the Lebesgue covering theorem (c.f. \cite[Theorem~IV~2]{dimension_theory_lebesgue_covering}) says the same thing for the cube. These results are summarized by the following theorem.

\begin{theorem}[Sperner/KKM/Lebesgue]
Given a coloring of $[0,1]^d$ by finitely many colors in which no color includes points of opposite faces, there exists a point at the closure of at least $d+1$ different colors.
\end{theorem}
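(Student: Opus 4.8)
The plan is to deduce this statement from the machinery already assembled for the measure/neighborhood lower bound, and in particular to reduce it to (a limiting, $\varepsilon\to 0$ form of) the kind of counting argument behind \Autoref{epsilon-measure-bounded-any-norm}. The overall strategy is a compactness argument: suppose for contradiction that no point of $[0,1]^d$ lies in the closure of $d+1$ or more colors, i.e. every point has a neighborhood meeting at most $d$ color classes. First I would set up the coloring $c\colon[0,1]^d\to\{1,\dots,N\}$ and the corresponding partition into color classes $C_1,\dots,C_N$; the hypothesis that no color contains points of two opposite faces is the cube analogue of the Sperner boundary condition and will be used to rule out the "degenerate" ways of $2$-coloring (or $d$-coloring) a neighborhood. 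The conclusion we want — a point in the closure of $d+1$ colors — is exactly the ``$k\ge d+1$'' conclusion of a neighborhood Sperner statement, so the shape of the argument mirrors the passage from ``$\varepsilon$-ball meets $k$ members'' to a limiting statement about closures.

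The key steps, in order, would be: (1) By compactness of $[0,1]^d$, if every point had an open neighborhood meeting at most $d$ colors, we could extract a finite subcover by such neighborhoods and obtain a uniform $\varepsilon_0>0$ (a Lebesgue-number-type argument) such that every $\ell_\infty$ ball of radius $\varepsilon_0$ inside $[0,1]^d$ meets at most $d$ colors. (2) Extend the coloring periodically / reflectively to tile $\R^d$, turning the finite color classes into a genuine partition of $\R^d$ whose members still have bounded $\ell_\infty$ diameter — here I would use a reflection across faces so that the ``no color on opposite faces'' hypothesis guarantees that the reflected copies do not merge colors across cube boundaries, keeping the members of bounded diameter (hence bounded outer measure, via \Autoref{diameter-ball}). (3) Apply \Autoref{k-exponential-in-dimension-measure} (or directly \Autoref{epsilon-measure-bounded-any-norm}) to this partition of $\R^d$: for the radius-$\varepsilon_0$ ball it forces some ball to meet at least $(1+2\varepsilon_0)^d>d$ members whenever $\varepsilon_0$ is not too small relative to $d$ — and if $\varepsilon_0$ is too small we first rescale the whole picture by a large factor, which scales the partition-member measure but we only need the \emph{count}, which is scale-invariant, so we may assume $\varepsilon_0$ large enough that $(1+2\varepsilon_0)^d\ge d+1$. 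This contradicts step (1), completing the argument. (4) Finally, upgrade ``$d+1$ colors meet an $\varepsilon_0$-ball'' to ``$d+1$ colors share a common closure point'': shrink $\varepsilon_0$ along a sequence, use compactness to get a convergent sequence of centers, and observe that the limit point lies in the closure of $d+1$ of the color classes.

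The main obstacle I anticipate is step (2): making the extension of a coloring of the closed cube to a partition of all of $\R^d$ both (i) preserve bounded diameter of members and (ii) not accidentally create a point in the closure of too few or too many colors, so that the contradiction is clean. The ``no color on opposite faces'' hypothesis is precisely what should make the reflection trick work — reflecting across a face and identifying colors is legitimate only because the face-pairing never glues a color to itself in a way that would connect opposite faces of an enlarged cube — but verifying carefully that the reflected-and-tiled color classes have uniformly bounded $\ell_\infty$ diameter, and that a ball meeting $k$ tiled classes corresponds to a point of the original cube in the closure of $k$ original colors, is the delicate bookkeeping. A secondary subtlety is the boundary: points on $\partial[0,1]^d$ need the neighborhood condition interpreted within the tiled space, which the reflection handles automatically. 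Once the reduction is in place, the quantitative heart is entirely supplied by \Autoref{k-exponential-in-dimension-measure}, since $(1+2\varepsilon)^d > d$ for all $\varepsilon>0$ once $d\ge 1$ and $\varepsilon$ is bounded below, and scale-invariance of the member count lets us arrange that.
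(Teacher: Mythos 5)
The paper does not actually prove this statement: it cites it as a known classical result (the Lebesgue covering theorem / cubical KKM lemma), with references to Kuhn, Komiya, Wolsey, and Hurewicz--Wallman, and then uses it purely as motivation for its own \emph{neighborhood} version (\Autoref{new-sperner-kkm-lebesgue}), which is a different quantitative statement. So there is no in-paper proof to compare against, and the proposal has to be judged on its own terms.

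On its own terms, step (3) of your plan contains a genuine gap that the rest of the argument cannot repair. After reflectively tiling, the members of the resulting partition of $\R^d$ have $\ell_\infty$-diameter bounded by some constant $D$ (around $2$), and \Autoref{epsilon-diameter-bounded-any-norm} then yields a point whose radius-$\varepsilon_0$ ball meets at least $\bigl(1+\tfrac{2\varepsilon_0}{D}\bigr)^d$ members. The Lebesgue number $\varepsilon_0$ produced by your compactness argument is a property of the given coloring and has \emph{no} lower bound in terms of $d$; it can be arbitrarily tiny. For $\varepsilon_0$ small, $\bigl(1+\tfrac{2\varepsilon_0}{D}\bigr)^d$ is close to $1$, not $\geq d+1$, so no contradiction arises. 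Your proposed fix --- ``rescale the whole picture'' --- does nothing, precisely because the quantity $\varepsilon_0/D$ is scale-invariant: rescaling by $L$ multiplies both the Lebesgue number and the member diameters by $L$, leaving the lower bound $\bigl(1+\tfrac{2\varepsilon_0}{D}\bigr)^d$ unchanged. Equivalently, if you instead rescale so the members have measure $\leq 1$ (to invoke \Autoref{k-exponential-in-dimension-measure} directly), the ball radius shrinks by the same factor. The phrase ``scale-invariance of the member count lets us arrange that'' is exactly backwards: the member count is scale-invariant, which means rescaling cannot manufacture a larger count.

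The deeper issue is conceptual and worth internalizing. The Brunn--Minkowski / Blichfeldt machinery in this paper produces bounds of the form $(1+c\varepsilon)^d$, which tend to $1$ as $\varepsilon \to 0$. The classical Sperner/KKM/Lebesgue theorem asserts a strictly stronger fact \emph{in the $\varepsilon\to 0$ limit}: a common closure point of $d+1$ colors. Note that the paper's own \Autoref{new-sperner-kkm-lebesgue} also degrades to the trivial bound $1$ as $\varepsilon\to 0$, and the paper is explicit that the neighborhood version and the classical version are complementary, not equivalent. The classical $d+1$ bound at $\varepsilon=0$ genuinely requires a topological or combinatorial ingredient --- degree theory, the Brouwer fixed point theorem, or the combinatorial Sperner lemma on a triangulation of the cube --- none of which is supplied by a volume/measure argument. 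If you want a self-contained proof, the standard route is to triangulate $[0,1]^d$ into simplices, induce a Sperner labeling from the coloring and the face conditions, apply the combinatorial Sperner lemma to get fully-labeled simplices at every mesh size, and extract a limit point by compactness; your steps (1) and (4) (the compactness bookends) are the right outer shell, but the quantitative core cannot be \Autoref{k-exponential-in-dimension-measure}.
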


In some versions of the above lemma, there are exactly $2^d$ colors (one for each corner/vertex) and in some others, the coloring can assign multiple colors to a single point so that it is a covering rather than a partitioning by colors, but both of these differences are inconsequential. The finiteness is required to guarantee a point in the closure, but even without the finiteness, there exists a point such that for any $\epsilon>0$ the $\epsilon$ ball (in any norm) intersects at least $d+1$ colors. In this spirit, our new variant of this Sperner/KKM/Lebesgue result considers not a point but instead a small $\epsilon$ ball in the $\ell_\infty$ norm. Thus we call it the {\em Neighborhood Sperner Lemma}. 

\begin{restatable}[Neighborhood Sperner lemma]{theorem}{restatableNewSpernerKKMLebesgue}\label{new-sperner-kkm-lebesgue}
Given a coloring of $[0,1]^d$ in which no color includes points of opposite faces, then for any $\epsilon\in(0,\frac12]$ there exists a point $\vec{p}\in[0,1]^d$ such that $\balloinf{\epsilon}{\vec{p}}$ contains points of at least $\left(1+\frac23 \epsilon\right)^d$ different colors.
\end{restatable}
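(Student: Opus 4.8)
The plan is to deduce the Neighborhood Sperner Lemma from the measure-theoretic lower bound \Autoref{epsilon-measure-bounded-any-norm} by a reduction that turns a coloring of $[0,1]^d$ (with no color on opposite faces) into a partition of all of $\R^d$ into members of bounded $\ell_\infty$-measure. First I would take the given coloring of $[0,1]^d$ and, for each color class, view its intersection with $[0,1]^d$ as a (partial) member. The key structural hypothesis—no color contains points of two opposite faces—is exactly what is needed to periodically tile $\R^d$ by reflected/translated copies of the colored cube so that color classes glue up consistently across shared faces: reflecting $[0,1]^d$ across the hyperplane $x_i = 1$ (and iterating in all coordinate directions) produces a coloring of $\R^d$ in which the class of each color is a union of cubes, and since a color never touches both $x_i=0$ and $x_i=1$, the pieces in adjacent cubes that meet along a face either both belong to that color or neither does, so the reflected copies do not artificially merge distinct colors into one giant connected member—more importantly, they keep each color-member inside a bounded region of $\R^d$. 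Concretely, a color class in the tiled picture is contained in a union of at most $2^d$ unit cubes forming a $2\times 2 \times\cdots\times 2$ block (the ``fundamental domain'' of the reflection group), so each member has $\ell_\infty$ outer measure at most $2^d$.

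Having built a partition $\P$ of $\R^d$ all of whose members have outer Lebesgue measure at most $M = 2^d$, I would apply \Autoref{epsilon-measure-bounded-any-norm} with the $\ell_\infty$ norm, for which $\unitballmeasureinf = 2^d$. The theorem yields a point $\vec q \in \R^d$ such that $\balloinf{\epsilon}{\vec q}$ meets at least $\ceil*{\bigl(1 + \epsilon (2^d/2^d)^{1/d}\bigr)^d} = \ceil*{(1+\epsilon)^d}$ members of $\P$. That would give a factor $(1+\epsilon)^d$, which is stronger than the claimed $(1+\tfrac23\epsilon)^d$; the extra slack in the statement presumably absorbs the loss incurred when we pull the point $\vec q$ back into the original cube $[0,1]^d$ and back from reflected members to original colors. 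Two things have to be checked to complete this step: (i) the $\epsilon$-ball around $\vec q$ can be assumed to lie well inside one fundamental $2\times\cdots\times 2$ block (using $\epsilon \le \tfrac12$, so that a ball of radius $\epsilon$ fits in a unit cube, and translating $\vec q$ by an element of the reflection lattice), and (ii) distinct reflected members hitting the ball correspond to distinct original colors. Point (ii) can fail only if two reflected images of the \emph{same} original color both meet the ball, i.e.\ the ball straddles a reflection hyperplane $x_i \in \Z$; this is where the $\tfrac23$ factor (rather than $1$) buys room—by the ``no opposite faces'' hypothesis a color touching the hyperplane $x_i = n$ from one side extends at most a bounded amount past it, and a careful accounting shows the number of \emph{distinct colors} in the $\epsilon$-neighborhood is still at least $(1+\tfrac23\epsilon)^d$.

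In more detail, the cleanest route to nail down the constant is: choose the reflection so that the relevant fundamental domain is $[-1,1]^d$ glued by reflection, realize $\vec q$ inside, say, $[-\tfrac12,\tfrac12]^d$ after translation, and then map the ball $\balloinf{\epsilon}{\vec q}$ by the piecewise-reflection folding map $\phi:\R^d \to [0,1]^d$ (the map $x_i \mapsto \mathrm{dist}(x_i, 2\Z)$, i.e.\ the ``tent''/triangle-wave in each coordinate). This $\phi$ is $1$-Lipschitz in $\ell_\infty$, sends reflected members back to original colors bijectively on each cube, and $\phi(\balloinf{\epsilon}{\vec q})$ is a set of $\ell_\infty$-diameter at most $2\epsilon$ contained in $[0,1]^d$; the number of original colors it meets equals the number of reflected members met by $\balloinf{\epsilon}{\vec q}$ (distinctness now being automatic because $\phi$ identifies precisely the members that were reflected images of one another). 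Taking $\vec p = \phi(\vec q)$, the set $\phi(\balloinf{\epsilon}{\vec q}) \subseteq \balloinf{\epsilon}{\vec p}$ would not quite be an inclusion unless $\vec q$ is chosen so that the fold does not move its image too far; handling this boundary case—when $\vec q$ is within $\epsilon$ of a reflection hyperplane—is the one place the argument must be done with care, and it is where the bound degrades from $(1+\epsilon)^d$ to $(1+\tfrac23\epsilon)^d$. I would therefore split into the ``interior'' case (ball entirely within one cube, giving the full $(1+\epsilon)^d$ trivially from \Autoref{epsilon-measure-bounded-any-norm}) and the ``near a wall'' case, and in the latter use the folding map together with the opposite-faces hypothesis to recover a point in $[0,1]^d$ whose $\epsilon$-neighborhood still sees a $(1+\tfrac23\epsilon)^d$ fraction of the colors; this near-the-wall bookkeeping is the main obstacle and the only genuinely delicate part of the proof.
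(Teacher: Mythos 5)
Your reduction via reflection tiling plus fold-back is a natural idea, but there is a genuine gap exactly where you flag one, and it is not a bookkeeping detail that the slack $\left(1+\epsilon\right)^d \to \left(1+\tfrac23\epsilon\right)^d$ absorbs. After tiling $\R^d$ by reflections, the connected components of a single color $c$ form a $2\Z^d$-periodic family of translates. When the point $\vec q$ produced by \Autoref{epsilon-measure-bounded-any-norm} has $\balloinf{\epsilon}{\vec q}$ straddling one of the reflection hyperplanes $x_i\in\Z$, two distinct members that are $2\Z^d$-translates of each other can both meet the ball and then collapse under your folding map $\phi$ to the \emph{same} original color. A quick estimate shows the overcount per color is bounded only by the number of $\vec v\in\{-1,0,1\}^d$ for which $M$ and $M+2\vec v$ can both meet a ball of diameter $2\epsilon<1$, and since members have $\ell_\infty$-diameter up to $2$ this can be as large as $2^d$ in the worst case (one factor of $2$ per coordinate in which $c$ spans the full $2$-interval and the ball lies near the corresponding wall). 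That wipes out the $(1+\epsilon)^d$ gain entirely; the loss is not a constant degradation. The proposed split into an ``interior'' case and a ``near a wall'' case does not help because \Autoref{epsilon-measure-bounded-any-norm} hands you a single $\vec q$, with no control over which case it falls into, so you cannot simply declare the interior case and move on.

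The paper's proof takes a materially different route that sidesteps this issue. Rather than tiling all of $\R^d$, it extends the coloring only to the slightly larger cube $[-\tfrac12-\epsilon,\tfrac12+\epsilon]^d$ and does so by \emph{clamping} (truncating each coordinate to $[-\tfrac12,\tfrac12]$), not by reflection. The ``no opposite faces'' hypothesis is then used to show that for every color $c$ there is an orientation $\vec v\in\{-1,1\}^d$ for which the extended color class $Y_c$ sits inside $\prod_i v_i\cdot(-\tfrac12,\tfrac12+\epsilon]$, i.e.\ it is bounded away from \emph{one} side of the extended cube in every coordinate. This lets the proof take the Minkowski sum of $Y_c$ with only a single open \emph{orthant} $B_{\vec v}$ of the $\epsilon$-ball (of side $\epsilon$, hence an $\ell_\infty$-ball of radius $\tfrac\epsilon2$), and the result stays inside $[-\tfrac12-\epsilon,\tfrac12+\epsilon]^d$. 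Brunn--Minkowski then gives a measure gain of $\left(1+\tfrac{\epsilon}{1+\epsilon}\right)^d$ per color while the ambient container has not grown, so the multi-pigeonhole principle (\Autoref{lower-bound-cover-number-Rd}) applied directly to the bounded extended cube produces a point covered by that many distinct enlarged colors with no double-counting to untangle; the clamping map is $1$-Lipschitz and respects colors, so the witness pulls back into $[0,1]^d$. In short, your approach is related in spirit (enlarge, apply Brunn--Minkowski and multi-pigeonhole, fold back), but passing through the infinite reflection tiling creates the distinctness problem, while the paper avoids it by never leaving a bounded region and by enlarging with a one-sided orthant chosen per color.
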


\subsubsection{New Constructions}

Currently, we know of only two types of secluded partitions---the standard grid partition which is $(2^d, \frac{1}{2})$-secluded and the  two different $(d+1, O(\frac{1}{d}))$-secluded partitions of \cite{geometry_of_rounding, hoza_preserving_2018}. We do not know of partitions with other values of $k(d)$ and $\varepsilon(d)$. For example, no constructions are known say for the parameter $\varepsilon(d) = {1\over \ln d}$ or for the parameter $k(d)\in\poly(d)$ or $k(d)$ sub-exponential in $d$. The  grid partition is optimal in the tolerance parameter and the $(d+1, O(\frac{1}{2d}))$ partitions are optimal in the degree parameter. By using these partitions as building blocks, we provide new constructions of secluded partitions that work for a range of degree and tolerance parameters.
We establish the following generic construction result:

\begin{restatable}{theorem}{restatableGlueConstruction}\label{basic-reclusive-gluing}
Let $f:\N\to\N$ be any function.
For each $d\in\N$, there exists a $(k(d),\varepsilon(d))$-secluded unit cube partition of $\R^d$ where $k(d)=\left(f(d)+1\right)^{\ceil{\frac{d}{f(d)}}}$ and $\varepsilon(d)=\frac{1}{2f(d)}$
\end{restatable}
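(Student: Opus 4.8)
The plan is to build the desired partition by a \emph{product-and-glue} construction using the two known families as building blocks. Fix $f:\N\to\N$ and $d\in\N$, and write $f=f(d)$, $m=\ceil{d/f}$. The idea is to split the $d$ coordinates into $m$ blocks, all but possibly the last of size exactly $f$ (the last of size $d-(m-1)f\le f$). On each block of size $f$ we place a copy of the known $(f+1,\tfrac{1}{2f})$-secluded unit-cube partition of $\R^{f}$ from~\cite{geometry_of_rounding}; on the final short block of size $f'\le f$ we place the analogous $(f'+1,\tfrac{1}{2f'})$-secluded partition of $\R^{f'}$, noting $\tfrac{1}{2f'}\ge\tfrac{1}{2f}$ so its tolerance is at least $\tfrac1{2f}$. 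Then take $\P$ to be the \emph{product partition}: a member of $\P$ is a Cartesian product $X_1\times\cdots\times X_m$ where $X_i$ ranges over members of the $i$-th block's partition. Since each block partition consists of unit cubes (axis-aligned unit hypercubes), the products are unit cubes in $\R^d$, so $\P$ is a unit cube partition as required.

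The key step is to verify the seclusion parameters of the product. For tolerance $\varepsilon=\tfrac1{2f}$: given $\vec p\in\R^d$, the $\ell_\infty$ ball $\ballcinf{\varepsilon}{\vec p}$ factors as a product of $\ell_\infty$ balls of radius $\varepsilon$ in each block, namely $\prod_{i=1}^m \ballcinf{\varepsilon}{\vec p_i}$ where $\vec p_i$ is the projection of $\vec p$ onto block $i$. A product member $X_1\times\cdots\times X_m$ intersects $\ballcinf{\varepsilon}{\vec p}$ if and only if each factor $X_i$ intersects $\ballcinf{\varepsilon}{\vec p_i}$ in the corresponding block. Because each block partition is $(f+1,\tfrac1{2f})$-secluded (or $(f'+1,\tfrac1{2f'})$-secluded with tolerance $\ge\varepsilon$, so in particular the radius-$\varepsilon$ ball still meets at most $f'+1\le f+1$ members), the number of choices for $X_i$ meeting $\ballcinf{\varepsilon}{\vec p_i}$ is at most $f+1$. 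Hence the number of product members meeting $\ballcinf{\varepsilon}{\vec p}$ is at most $(f+1)^m=(f(d)+1)^{\ceil{d/f(d)}}=k(d)$, which is exactly the claimed degree. So $\P$ is $(k(d),\varepsilon(d))$-secluded.

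There are two routine loose ends to tie off. First, one must confirm that a product of partitions is itself a partition of $\R^d$ (every point lies in exactly one product member because it lies in exactly one member of each block partition, and $\R^d=\prod_i\R^{d_i}$ with $\sum d_i=d$); this is immediate. Second, the degenerate case $f(d)\ge d$ should be handled: then $m=1$, the construction degenerates to a single copy of the $(d+1,\tfrac1{2d})$-secluded partition on all of $\R^d$, and $\tfrac1{2d}\ge\tfrac1{2f(d)}$, so one just observes the known partition already witnesses a tolerance at least $\tfrac1{2f(d)}$ with degree $d+1\le(f(d)+1)^{\ceil{d/f(d)}}$ (indeed $\ceil{d/f(d)}=1$ and $d+1\le f(d)+1$). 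The main obstacle — really the only conceptual point — is establishing that seclusion behaves multiplicatively under Cartesian products of partitions, i.e. that an $\ell_\infty$ ball meeting a product member is equivalent to all its coordinate-block projections meeting the corresponding factor members; once the $\ell_\infty$ ball is seen to factor as a product of lower-dimensional $\ell_\infty$ balls (which is exactly where the $\ell_\infty$ norm, as opposed to a general norm, is used), the degree bound multiplies and the tolerance is preserved, giving the result.
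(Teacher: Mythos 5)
Your proposal is correct and follows essentially the same approach as the paper: decompose the coordinates into blocks of size at most $f(d)$, place a copy of the known $(d'+1,\tfrac{1}{2d'})$-secluded unit-cube partition on each block, and take the product partition, using the factorization of the $\ell_\infty$ ball to multiply the degree bounds while preserving the tolerance (this is the paper's Proposition~\ref{secluded-partition-product-proposition} plus Fact~\ref{unit-cube-preservation}). The only cosmetic difference is in block sizing: the paper's main text builds the product in $\R^{n\cdot d'}$ with $n\cdot d'\geq d$ and then passes down to $\R^d$, while you (like the paper's footnote variant) partition $d$ directly into blocks summing to $d$ and use monotonicity on the short block(s).
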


Using this theorem we obtain secluded partitions  for various choices of $k(d)$ and $\varepsilon(d)$. For example, for any $\varepsilon(d) \in o(1)$, we can achieve $k(d) \in 2^{o(d)}$. For any $\varepsilon(d) \in O(\frac{1}{d})$ (even if $\epsilon(d)$ is larger that the values allows by \cite{geometry_of_rounding, hoza_preserving_2018}) we can still obtain $k(d)\in \poly(d)$. For $\varepsilon(d) = \frac{\ln^{\ell} d}{d}$, we get that $k(d) = 2^{\log^{\ell+1} d}$. 

These constructions are {\em near optimal}. For instance when $k(d)\in\poly(d)$, \Autoref{k-exponential-in-dimension-measure} implies that $\epsilon(d)\in O(\frac{\ln d}{d})$, and when $k = 2^{\log^{l+1} d}$, it must be that $\varepsilon(d)\in O(\frac{\ln^{l+1}d }{d})$. Thus, the value of $\varepsilon$ achieved by the construction is optimal up to an $O(\ln d)$ factor.

\subsubsection{A No-Free-Lunch Theorem}
Secluded partitions (deterministic rounding schemes) are used as a generic tool to design $k$-pseudodeterministic algorithms.  However, as discussed in the introduction,  this results in a linear  (in $d$) blowup of the approximation error. We next establish that this loss is inevitable for generic methods.  

In the statement below, the notation $A\circ B$ indicates the composition of algorithm $A$ after algorithm $B$. An $(\epsilon,\delta)$-approximation algorithm for a function $f$ with respect to some notion of distance is a randomized (or deterministic) algorithm that on every valid input $x$ returns with probability at least $1-\delta$ some value that is distance at most $\epsilon$ from $f(x)$. 
An algorithm $A$ is $(k, \delta)$-pseudodeterministic if for every $x$, there is a set $S_x$ of size at most $k$ and $A(x) \in S(x)$ with probability at least $1-\delta$. 

\begin{restatable}{theorem}{restatableNoFreeLunch}\label{no-free-lunch}
Let $d,k\in\N$ and $\epsilon_0\in(0,\infty)$ and $\delta\in(0,\frac12]$ be fixed, and let $\norm{\cdot}$ be a norm on $\R^d$. Suppose there exists $\epsilon\in(0,\infty)$ and a deterministic algorithm $A$ mapping inputs in $\R^d$ to outputs in $\R^d$ with the following universal black box property:
\begin{description}
    \item[Property:] For any set $\Lambda$ (indicating some problem domain) and function $f:\Lambda\to\R^d$ and $(\epsilon_0,\delta)$-approximation algorithm $B$ for $f$ (with respect to $\norm{\cdot}$), it holds that $A\circ B$ is a $(k,\delta)$-pseudodeterministic $(\epsilon,\delta)$-approximation algorithm for $f$ (again with respect to $\norm{\cdot}$).
\end{description}
Then $\epsilon\geq\epsilon_0\cdot\frac{d}{4\ln(2k)}$.
\end{restatable}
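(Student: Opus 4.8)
The plan is to mine the universal black-box property for two facts about $A$ — a crude displacement bound and a sharp combinatorial constraint on how $A$ acts on small balls — and then feed the latter into the neighborhood lower bound \Autoref{epsilon-diameter-bounded-any-norm}.

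First I would pin down the partition induced by $A$. Instantiating the property with $\Lambda=\R^d$, $f=\mathrm{id}$, and the trivial exact approximation $B(\vec x)=\vec x$ makes $A\circ B=A$ an $(\epsilon,\delta)$-approximation of $\mathrm{id}$; since $A$ is deterministic and $\delta<1$, this forces $\norm{A(\vec x)-\vec x}\le\epsilon$ for all $\vec x$, so the partition $\P_A:=\{A^{-1}(\vec v):\vec v\in\range(A)\}$ of $\R^d$ has every member of $\norm{\cdot}$-diameter at most $2\epsilon$. Using instead the constant function $f\equiv\vec 0$ and the constant algorithm $B\equiv\vec w$ for a fixed $\vec w$ with $\norm{\vec w}=\epsilon_0$, the composition $A\circ B$ is the constant $A(\vec w)$, which must $(\epsilon,\delta)$-approximate $\vec 0$, so $\norm{A(\vec w)}\le\epsilon$; together with $\norm{A(\vec w)-\vec w}\le\epsilon$ this yields $\epsilon_0=\norm{\vec w}\le 2\epsilon$, i.e. $\epsilon\ge\epsilon_0/2$ (in particular $\epsilon>0$, so $\P_A$ has positive bounded diameter).

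Next I would apply \Autoref{epsilon-diameter-bounded-any-norm} to $\P_A$ with diameter bound $D=2\epsilon$ and neighborhood radius $\epsilon_0$: there is a point $\vec p$ such that $\ballostd{\epsilon_0}{\vec p}$ meets at least $N:=\ceil*{(1+\tfrac{\epsilon_0}{\epsilon})^d}$ distinct members $C_1,\dots,C_N$ of $\P_A$. I would pick witnesses $\vec z_i\in C_i\cap\ballostd{\epsilon_0}{\vec p}$; the values $A(\vec z_1),\dots,A(\vec z_N)$ are pairwise distinct, since $C_i=A^{-1}(A(\vec z_i))$ and the $C_i$ are distinct. Now instantiate the property one last time with $f=\mathrm{id}$ and the algorithm $B$ that on input $\vec p$ outputs a uniformly random element of $\{\vec z_1,\dots,\vec z_N\}$ and is the identity on every other input — a valid $(\epsilon_0,\delta)$-approximation of $\mathrm{id}$ because each $\vec z_i$ lies within $\epsilon_0$ of $\vec p$. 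The property's conclusion says $A\circ B$ is $(k,\delta)$-pseudodeterministic, so at input $\vec p$ there is a set $S$ with $\abs{S}\le k$ and $\Pr[A(B(\vec p))\in S]\ge 1-\delta$; but $A(B(\vec p))$ is uniform over the $N$ distinct values $A(\vec z_i)$, so that probability is at most $\abs{S}/N\le k/N$. Hence $1-\delta\le k/N$, and since $\delta\le\tfrac12$ we get $N\le k/(1-\delta)\le 2k$.

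Combining the two halves gives $(1+\tfrac{\epsilon_0}{\epsilon})^d\le N\le 2k$, hence $d\ln(1+\tfrac{\epsilon_0}{\epsilon})\le\ln(2k)$. Writing $t=\epsilon_0/\epsilon$, the first step gave $t\le 2$, and on $(0,2]$ one has the elementary estimate $\ln(1+t)\ge t/2$; therefore $\tfrac{dt}{2}\le\ln(2k)$, i.e. $\tfrac{\epsilon_0}{\epsilon}\le\tfrac{2\ln(2k)}{d}$, which rearranges to $\epsilon\ge\tfrac{\epsilon_0 d}{2\ln(2k)}\ge\tfrac{\epsilon_0 d}{4\ln(2k)}$, as claimed. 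I expect the one genuinely delicate choice to be the approximation algorithm $B$ in the third step: supporting $B$ on exactly the $N$ witness points (rather than on the full ball $\ballostd{\epsilon_0}{\vec p}$) is what prevents the failure probability $\delta$ from being fatal — it can only pull a $\delta$-fraction of the $N$ already-distinct outputs into $S$, costing merely the harmless factor $\tfrac{1}{1-\delta}\le 2$ — while the a priori bound $\epsilon\ge\epsilon_0/2$ from the constant-function instantiation is what keeps the final logarithmic estimate clean and, in particular, keeps the argument valid even when $k$ is enormous relative to $d$.
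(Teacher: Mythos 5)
Your proof is correct and follows essentially the same route as the paper's: deduce $\norm{A(\vec{x})-\vec{x}}\le\epsilon$, bound the diameter of the fibers of $A$, apply \Autoref{epsilon-diameter-bounded-any-norm}, build a uniformly-supported adversarial $B$ to force $k$ large, then rearrange via $\ln(1+t)\ge t/2$. The one genuine (and pleasant) difference is that you take $f=\mathrm{id}$ so that $f(\vec{p})=\vec{p}$ is itself the center of the ball, which lets you use the full $\epsilon_0$-ball rather than the paper's $\epsilon_0/2$-ball-plus-triangle-inequality, and you get away with the weaker a priori bound $\epsilon\ge\epsilon_0/2$ since $\ln(1+t)\ge t/2$ still holds for $t\le 2$; this actually yields the sharper constant $\epsilon\ge\epsilon_0\cdot\frac{d}{2\ln(2k)}$ before you relax it to match the theorem's $\frac{d}{4\ln(2k)}$.
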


This theorem demonstrates that when $k(d)\in\poly(d)$, a blowup of at least $O\left(\frac{d}{\ln(d)}\right)$ is required of generic methods.

\subsection{Organization}
The rest of the document is organized as follows. \Autoref{sec:proof-outlines} contains proof outlines of our results. \Autoref{sec:notation} introduces the necessary notation. The rest of the sections are devoted to complete proofs. \Autoref{sec:general-bounds} contains proofs of degree $(k)$ lower bound results in terms of tolerance ($\epsilon$): \Autoref{epsilon-measure-bounded-any-norm}, \Autoref{epsilon-diameter-bounded-any-norm}, and \Autoref{k-exponential-in-dimension-measure}. \Autoref{sec:NewSperner} contains a proof of the neighborhood Sperner lemma (\Autoref{new-sperner-kkm-lebesgue}). \Autoref{sec:constructions} presents constructions of new secluded partitions (\Autoref{basic-reclusive-gluing}). Finally, \Autoref{sec:NoFreeLunch} proves the no-free-lunch result (\Autoref{no-free-lunch}).

\section{Proof Outlines of Main Results}
\label{sec:proof-outlines}

The main technical tools that we use come from measure theory and the geometry of numbers and include the \namefullref{generalized-brunn-minkowski-inequality}, the \namefullref{isodiametric-ineq}, and the ideas from the proof of Blichfeldt's theorem (\Autoref{lower-bound-cover-number-Rd}). The generalized Brunn-Minkowski inequality gives a lower bound on the measure of a Minkowski sum of sets ($A+B=\set{\vec{a}+\vec{b}\colon \vec{a}\in A,\vec{b}\in B})$ based on the measures of those sets. The isodiametric inequality gives a very generic way to convert from upper bounds on the diameters of sets to upper bounds on measures in a tight manner. A common technique in the proof of Blichfeldt's theorem is to use an averaging argument to show that if a set $A$ is covered by a large family of other sets, then some point in $A$ is covered many times.

\subsection{Proof Outline for the Lower Bound on the Degree}

We discuss the main techniques to prove \Autoref{epsilon-measure-bounded-any-norm}. Since the theorem  holds for any norm, the proof uses no specific properties of any norm. However, to give the main ideas of the proof here and avoid technical details, we will use the visual of \Autoref{fig:cover-technique} and focus on the $\ell_\infty$ norm (whose balls are high-dimensional cubes) and assume that the bound on the measures is $M=1$. Since the volume of the unit ball in $\R^d$ with respect to $\ell_\infty$ is $v_d=2^d$,  \Autoref{epsilon-measure-bounded-any-norm} reduces to \Autoref{k-exponential-in-dimension-measure} mentioned in the introduction and restated below.

\restatableKExponentialInDimensionMeasure*

\begin{proof}[Proof Outline]
\renewcommand{\C}{\mathcal{C}}
\renewcommand{\A}{\mathcal{A}}
The goal is to find some point $\vec{p}\in\R^d$ such that $\balloinf{\varepsilon}{\vec{p}}$ intersects at least $(1+2\varepsilon)^d$ members of the partition. Instead of directly trying to establish this,  
we take a critical change of perspective: for any $\vec{p}\in \R^d$ and $X\in \P$ (or really any $X\subseteq\R^d$), it holds that $\balloinf{\varepsilon}{\vec{p}} \cap X \neq \emptyset$ if and only if $\vec{p}\in\bigcup_{\vec{x}\in X}\balloinf{\varepsilon}{\vec{x}}$\footnote{  
If $\balloinf{\varepsilon}{\vec{p}}$ intersects $X$, then there is some some point $\vec{x}$ in the intersection of these sets which means $\vec{p}$ and $\vec{x}$ are distance at most $\varepsilon$ apart, and we could also say $\vec{p}\in\balloinf{\varepsilon}{\vec{x}}$. Then trivially $\vec{p}\in\bigcup_{\vec{x}\in X}\balloinf{\varepsilon}{\vec{x}}$.
The converse is similar. If $\vec{p}\in\bigcup_{\vec{x}\in X}\balloinf{\varepsilon}{\vec{x}}$, then for some fixed $\vec{x}\in X$, $\vec{p}\in\balloinf{\varepsilon}{\vec{x}}$. Again, this means that $\vec{p}$ and $\vec{x}$ are distance at most $\varepsilon$ apart, so $\vec{x}\in\balloinf{\varepsilon}{\vec{p}}$. Since $\vec{x}\in X$ also, we have that $\balloinf{\varepsilon}{\vec{p}}$ and $X$ intersect.
}. Thus, what we do is to ``replace'' every member $X$ of the partition with the enlarged set $\bigcup_{\vec{x}\in X}\balloinf{\varepsilon}{\vec{x}}$ and try to find a point $\vec{p}$ that belongs to at least $(1+2\varepsilon)^d$ of these enlarged sets.
To achieve this, we take inspiration from a common proof of Blichfeldt's theorem---specifically, the following result which says that if we have a collection of sets $A_1,A_2,A_3,\ldots$ which are subsets of another set $S$, then there is a point in $S$ occurring in multiple $A_i$s provided together the $A_i$s have enough volume/measure. We can in fact give a lower bound on the number of sets $A_i$s to which such a point belongs to.
The following is the formal claim of this known result.

        \begin{restatable*}[Continuous Multi-Pigeonhole Principle]{proposition}{restatableLowerBoundCoverNumberRd}\label{lower-bound-cover-number-Rd}
        \renewcommand{\A}{\mathcal{A}}
        Let $d\in\N$ and $S\subset\R^d$ be bounded and measurable. Let $\A$ be a family of measurable subsets of $S$, and let $k=\ceil*{\frac{\sum_{A\in\A}m(A)}{m(S)}}$. Then if $k<\infty$, there exists $\vec{p}\in S$ such that $\vec{p}$ belongs to at least $k$ members of $\A$. (And if $k=\infty$, then for any $n\in\N$ there exists $\vec{p}^{(n)}\in S$ such that $\vec{p}^{(n)}$ belongs to at least $n$ members of $\A$.)
        \end{restatable*}

~ 

There is an immediate issue we have to deal with to be able to use \Autoref{lower-bound-cover-number-Rd} for our application. We would like to take $\A$ to be the collection of enlarged partition members: $\A=\set{\bigcup_{\vec{x}\in X}\balloinf{\varepsilon}{\vec{x}}}_{X\in\P}$, but then all we know is that each of these is a subset of $S=\R^d$ which is not bounded. This is a simple enough issue to deal with using a standard measure theory technique of considering instead a sequence $S_1,S_2,S_3,\ldots$ of sets which {\em are} bounded and get larger and larger so that $\bigcup_{n=1}^\infty S_n=\R^d$; we work with each of these sets individually and then try to use a limiting argument to pass the result back to $S=\R^d$. Specifically, we will take $S_n=[-n,n]^d$ as illustrated in the first 2 panes of \Autoref{fig:cover-technique}. The third pane of \Autoref{fig:cover-technique} illustrates that we will specifically consider the partition of $S_n$ induced by $\P$ which we denote by $\S_n$. That is, the induced partition $\S_n$  is the set $\set{X\cap S_n\colon X\in\P\text{ and }X\cap S_n\not=\emptyset}$. Then for each $S_n$ we consider a collection $\A_n$ of the enlarged members of the induced partition: $\A_n=\set{A_Y}_{Y\in\S_n}$ where $A_Y\defeq \bigcup_{\vec{y}\in Y}\balloinf{\varepsilon}{\vec{y}}$. Note that each $A_Y$ is a subset of $S_n'\defeq [-(n+\varepsilon),n+\varepsilon]^d$.

However, there remains one other issue to deal with to utilize \Autoref{lower-bound-cover-number-Rd}---for each $n$, we have to have some lower bound on the expression $\ceil*{\frac{\sum_{A_Y\in\A_n}m(A_Y)}{m(S_n')}}$. We know that $m(S_n')=(2(n+\varepsilon))^d$, and using the fact that $\S_n$ is a partition of $S_n=[-n,n]^d$, we have the following if $\S_n$ is countable\footnote{
If $\S_n$ is uncountable, then the first equality below becomes an inequality, but it becomes the wrong inequality: $\sum_{Y\in\S_n}m(Y) \leq m\left(\bigsqcup_{Y\in\S_n}Y\right)$ by \Autoref{disjoint-uncountable} in \Autoref{sec:measure-theory}.
} (meaning finite or countably infinite):

\[
\sum_{A_Y\in\A_n}m(A_Y) \geq \sum_{Y\in\S_n}m(Y) = m\left(\bigsqcup_{Y\in\S_n}Y\right) = m(S_n),
\]
but this is not nearly good enough, because it just gives 
\[
\ceil*{\frac{\sum_{A_Y\in\A_n}m(A)}{m(S_n')}}
\geq \ceil*{\frac{m(S_n)}{m(S_n')}}
\geq \ceil*{\frac{(2n)^d}{(2(n+\varepsilon))^d}} 
= \ceil*{\left(\frac{n}{n+\varepsilon}\right)^d} 
= 1
\]
whereas we want it $\geq(1+2\varepsilon)^d$. Basically, this lower bound is terrible because we did not account for the fact that the elements of $\A_n$ are enlarged from what they were in the partition $\S_n$. Thus, we really want some way to give for each $Y\in\S_n$, a lower bound on the measure of the enlarged set $A_Y$. One might observe that enlarging with an $\varepsilon$-ball looks something like scaling by a factor of $1+\varepsilon$ (though it is not actually scaling\footnote{
See for example the smallest (purple) member in the last pane of \Autoref{fig:cover-technique} which is very circular, but upon enlarging looks more squarish. In fact, enlarging a single point by $\varepsilon$ results in a hypercube of diameter $2\varepsilon$.
}), and since the Lebesgue measure (i.e. typical notion of volume/measure in $\R^d$) has the property that scaling by $(1+\varepsilon)$ increases the measure by a factor of $(1+\varepsilon)^d$, we might be able to show that the enlarged version of each member increases by a factor of $(1+\varepsilon)^d$ (which is basically what we are looking to get).

This intuition holds, though the actual reason is not related to scaling, and is dependent on the members having measure at most $1$. Rather, we use a specialized adaption of the known \namefullref{generalized-brunn-minkowski-inequality} to show that
\begin{equation}\label{eq:brunn-minkowski-introduction}
m(A_Y) \geq m(Y) \cdot (1 + 2\varepsilon)^d
\end{equation}
holds\footnote{Technically speaking, the set $A_Y$ might not be measurable (though we suspect it is), so the expression may not even be a mathematically valid one to write down, but we deal with this technical detail later in the paper, and it really does not effect anything in this proof outline.}.
Now that we have dealt with both issues that arise with trying to apply \Autoref{lower-bound-cover-number-Rd}, we can consider a fixed $n$ and can continue. We proceed in two cases: (1) the interesting case in which $\S_n$ has only countably many members, and (2) the nearly trivial case in which the partition $\S_n$ contains uncountable many members. In case (1)  we have
\begin{align*}
    \ceil*{\frac{\sum_{A_Y\in\A_n}m(A_Y)}{m(S_n')}} &=    \ceil*{\frac{\sum_{Y\in\S_n}m(A_Y)}{m(S_n')}} \tag{Re-index}\\
    &\geq \ceil*{\frac{\sum_{Y\in\S_n}\left[m(Y) \cdot (1 + 2\varepsilon)^d\right]}{m(S_n')}} \tag{\Autoref{eq:brunn-minkowski-introduction}}\\
    &=\ceil*{\frac{(1 + 2\varepsilon)^d\cdot\sum_{Y\in\S_n}m(Y)}{m(S_n')}} \tag{Linearity of summation}\\
    &=\ceil*{\frac{(1 + 2\varepsilon)^d\cdot m\left(\bigsqcup_{Y\in\S_n}Y\right)}{m(S_n')}} \tag{Countable additivity of measures}\\
    &=\ceil*{\frac{(1 + 2\varepsilon)^d\cdot m(S_n)}{m(S_n')}} \tag{$S_n=\bigsqcup_{Y\in\S_n}Y$}\\
    &=\ceil*{(1 + 2\varepsilon)^d\cdot \left(\frac{n}{n+\varepsilon}\right)^d} \tag{$\frac{m(S_n)}{m(S_n')}=\frac{n}{n+\varepsilon}$ as above} \\
\end{align*}
Since $\lim_{n\to\infty}\left(\frac{n}{n+\varepsilon}\right)^d=1$, then $\lim_{n\to\infty}(1 + 2\varepsilon)^d\cdot \left(\frac{n}{n+\varepsilon}\right)^d=(1 + 2\varepsilon)^d$, so because there is a ceiling involved, we can take $N\in\N$ to be large enough that $\ceil*{(1 + 2\varepsilon)^d\cdot \left(\frac{N}{N+\varepsilon}\right)^d} = \ceil*{(1 + 2\varepsilon)^d}$ (see \hyperlink{hypertarget-smaller-ceiling}{\Autoref*{smaller-ceiling}} in \hyperlink{hypertarget-smaller-ceiling}{\Autoref*{sec:other}}), so by \Autoref{lower-bound-cover-number-Rd}, there is a point $\vec{p}\in S_n'$ that is contained in at least $(1 + 2\varepsilon)^d$ many sets in $\A_N$, and by our change of perspective, this point $\vec{p}$ has the property that $\balloinf{\varepsilon}{\vec{p}}$ intersects at least $(1 + 2\varepsilon)^d$ many members of $\P$.

In case (2) where some $\S_N$ contains uncountably many members, then we completely ignore the lower bound for $m(A_Y)$ in \Autoref{eq:brunn-minkowski-introduction} because it might be that lots of members $Y$ (possibly all of them) have measure $0$, and so that bound only tells us that $m(A_Y)\geq 0$. Instead, we note that $Y$ is at least non-empty, so contains at least one point $\vec{y}$, and thus $A_Y\supseteq \balloinf{\varepsilon}{\vec{y}} = \prod_{i=1}^{d}[y_i-\varepsilon, y_i+\varepsilon]$, and so $m(A_Y)\geq (2\varepsilon)^d$. Thus, $\ceil*{\frac{\sum_{A_Y\in\A_N}m(A_Y)}{m(S_N')}}=\infty$, so by \Autoref{lower-bound-cover-number-Rd}, there is a point $\vec{p}\in S_N'$ that is contained in at least $(1 + 2\varepsilon)^d$ many sets in $\A_N$, and by our change of perspective, this point $\vec{p}$ has the property that $\balloinf{\varepsilon}{\vec{p}}$ intersects at least $(1 + 2\varepsilon)^d$ many members of $\P$.
\end{proof}

We now outline how \Autoref{epsilon-diameter-bounded-any-norm} follows from the theorem. As mentioned in the introduction, for the $\ell_\infty$ norm, a bound of $D$ on the diameter of a set $X\subseteq\R^d$ implies a bound of $M=D^d$ on the outer measure of $X$ because a special property of the $\ell_\infty$ norm is that $X$ is actually contained within some shift of $[0,D]^d$ which has measure $M=D^d$ (see \Autoref{diameter-ball}), so in particular, diameter at most $1$ implies outer measure at most $1$. This containment fact does not hold in general for other norms.

Nonetheless, for any norm, a diameter bound of $D$ implies {\em some} measure bound. In particular, we can place a ball of {\em radius} $D$ at any point in the set and know that it is contained in the ball showing that the outer measure is at most $m(\ballcstd{D}{\vec{0}})$. However, we can get a better bound using the known \namefullref{isodiametric-ineq} which says that we can actually bound the measure by $m(\ballcstd{D/2}{\vec{0}})$. In other words, while the containment property of the $\ell_\infty$ norm does not hold for general norms, it at least holds {\em in spirit}; a set of diameter $D$ might not be contained in any ball of diameter $D$, but it can be cut up somehow to fit in the ball of diameter $D$ (radius $D/2$), so it has no greater measure. With this inequality in hand and the specific values of the diameter that it gives, \Autoref{epsilon-diameter-bounded-any-norm} follows from \Autoref{epsilon-measure-bounded-any-norm}.

\subsection{Proof Outline of the Neighborhood Sperner/KKM/Lebesgue}
The main idea behind the proof of \Autoref{new-sperner-kkm-lebesgue} is the same as discussed in the previous subsection. For each color $C$, let $X_C$ be the set of points that are colored $C$. We union an $\varepsilon$-ball at each point in $X_C$, to obtain an enlarged version $X_C$. Now, as before, by the \namefullref{lower-bound-cover-number-Rd} there is a point that belongs to many of the enlarged sets.  

However, there are some additional issues that arise on the unit cube that don't arise in $\R^d$. In the discussion above, the enlarged set was not contained in the original region (denoted $S_n'$ above) and we needed to consider a larger region (denote $S_n$ above) to contain them. In $\R^d$ we could deal with this via a limiting argument so that the ratio of the volume change $m(S_n')/m(S_n)$ tends to $1$ (i.e. it became negligible when ceilings were involved). If one enlarges every color in a unit cube $[-\frac12,\frac12]^d$ in the same way, the measure of each color is guaranteed to increase by a factor of $(1+2\epsilon)^d$ as before, but also the smallest set that contains all of these sums is the unit cube $[-\frac12-\epsilon,\frac12+\epsilon]^d$ which increased in measure by a factor of $(1+2\epsilon)^d$ compared to the original cube, so nothing has been gained! Obviously there will be an overlap of the sums, but the bounds given by the generalized Brunn-Minkowsi inequality tell us no additional information.

We deal with this by employing a trick of first extending the coloring directly to $[-\frac12-\epsilon,\frac12+\epsilon]^d$ in a natural way that ensures each color is bounded away from the boundary so that we can perform an enlargement using just one orthant of the $\epsilon$-ball (instead of the whole ball) and still have the enlarged color set contained in $[-\frac12-\epsilon,\frac12+\epsilon]^d$. This means we end up knowing that each color has increased in measure by at least a factor of $(1+\frac23\epsilon)^d$ and that the containing region has not changed in measure at all.

\subsection{Proof Outline of the Construction Result}

The approach to the construction of new secluded partitions of $\R^d$ is the following: (1) break up the coordinates and view $\R^d$ as $\R^{d_1}\times\R^{d_2}\times\cdots\times\R^{d_n}$ for some $d_1,d_2,\ldots,d_n$ with $d_1+d_2+\cdots+d_n=d$, (2) partition each $\R^{d_i}$ independently into some $\P_i$, and finally (3) combine the partitions into one partition $\P$ of $\R^d$ in a natural way by taking $\P$ to be $\set{\prod_{i=1}^{n}X_i\colon X_i\in\P_i}$. This is all described in \Autoref{defn:partition-product}. This construction can be done with arbitrary partitions, but it is straightforward to argue that if each $\P_i$ is $(k_i,\varepsilon_i)$-secluded, then $\P$ is $(k,\varepsilon)$-secluded for $k=\prod_{i=1}^n k_i$ and $\varepsilon=\min_{i\in[n]}\varepsilon_i$. This is shown in \Autoref{secluded-partition-product-proposition}.

We then simplify to the case where all $d_i$ are equal to $d'$ or $d'-1$ for some natural number $d'$, and we analyze the construction where each $\P_i$ is a $(d_i+1,\frac{1}{2d_i})$-secluded unit cube partition (as in \cite{geometry_of_rounding}) with $d_i = d'$ or $d'-1$. The resultant partition of $\R^d$ is $(k,\varepsilon)$-secluded for $k=\prod_{i=1}^n (d_i+1)\leq (d'+1)^{\ceil*{\frac{d}{d'}}}$ and $\varepsilon=\min_{i\in[n]}\frac1{2d_i}=\frac{1}{2d'}$. We use  $f:\N\to\N$ to prescribe $d'$ as a function of $d$ to obtain \Autoref{basic-reclusive-gluing}. 

A final note is that one does have to be careful to keep the ceiling function in this result because it keeps the exponent from becoming less than $1$.

\subsection{Proof Outline of the No-Free-Lunch Theorem}
We use \Autoref{epsilon-diameter-bounded-any-norm} to prove \Autoref{no-free-lunch}. Since $A$ is a deterministic algorithm, it can be viewed as a function $A:\R^d\to\R^d$. Such a  function induces a partition of its domain by considering points to be equivalent if they are mapped to the same value. We can show that because $A$ gives $\epsilon$-approximations (with respect to a norm $\norm{\cdot}$), each member of the partition has diameter at most $2\epsilon$ (with respect to $\norm{\cdot}$). By \Autoref{epsilon-diameter-bounded-any-norm}, this means that there is some point $\vec{p}$ in the domain $\R^d$ such that 
$\ballostd{\frac{\varepsilon_0}{2}}{\vec{p}}$  intersects at least $\left(1+\frac{2(\epsilon_0/2)}{2\epsilon}\right)^d=\left(1+\frac{\epsilon_0}{2\epsilon}\right)^d$ members of the partition. In other words, there is a point $\vec{p}$ and a set of at least $\left(1+\frac{\epsilon_0}{2\epsilon}\right)^d$ points in $\ballostd{\frac{\varepsilon_0}{2}}{\vec{p}}$ such that $A$ maps each of these points to a different value. 

Then we consider one specific problem and approximation algorithm and prove that $A$ can't perform very well on it. Let  $f:\Lambda\to\R^d$ be such that  there is  $\lambda\in\Lambda$ with $f(\lambda)=\vec{p}$. Then we consider an $(\epsilon_0,\delta)$-approximation algorithm $B$ for this function $f$ which has the specific property that the output distribution of $B(\lambda)$ is uniform over the $\left(1+\frac{\epsilon_0}{2\epsilon}\right)^d$ points above (which all $\epsilon_0$-approximate $f(\lambda)=\vec{p}$). We show that in order for $A\circ B$ to be $(k,\delta)$-pseudodeterministic requires very large $k$ because $B$ is distributing uniformly over a very large set. Specifically, $k\geq(1-\delta)\cdot \left(1+\frac{\epsilon_0}{2\epsilon}\right)^d$. 
Finally, we rearrange this expression to solve for $\epsilon$ and use the approximation $\ln(1+x)\geq\frac{x}2$ for small $x$, and the fact that $(1-\delta)\geq\frac12$ to arrive at the stated lower bound on $\epsilon$.

\begin{figure}
    \centering
    \includegraphics[width=\textwidth]{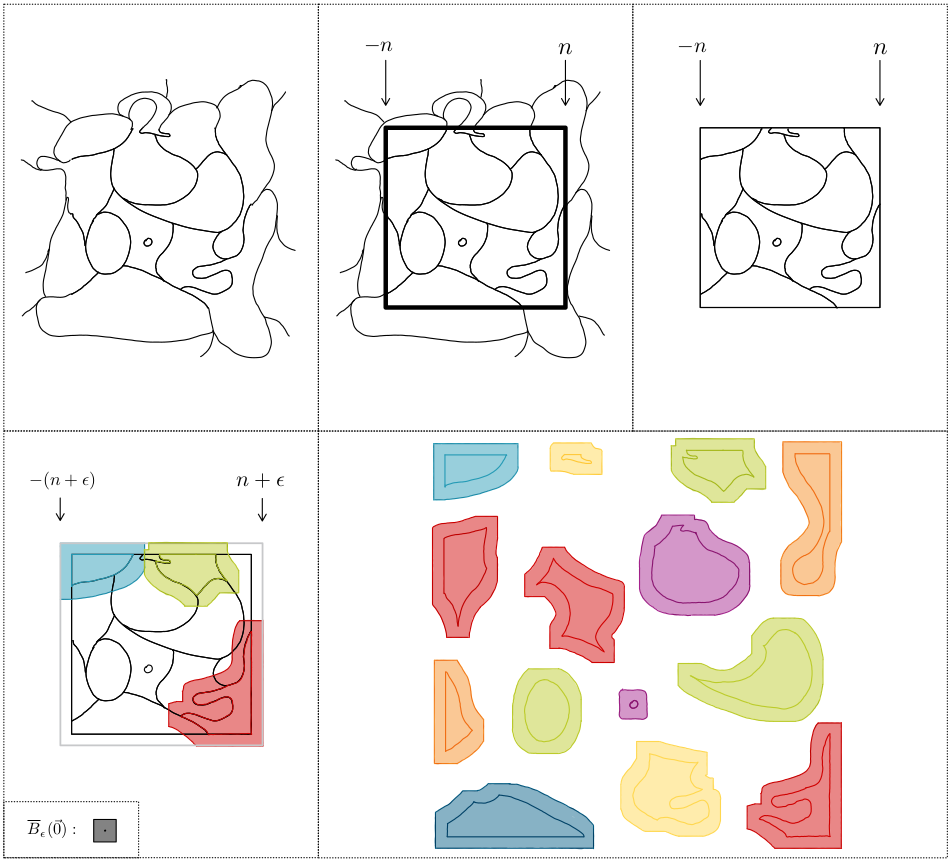}
    \caption{In the first pane, we have a partition of $\R^2$. In the second panes, we show that we will consider only members of the partition intersect $[-n,n]^2$, and in the third pane we show the partition that $\P$ induces on $[-n,n]^2$. In the fourth pane, we consider enlarging each member by placing at $\varepsilon$-ball at each point of the member and show that these enlarged elements are still contained within $[-(n+\varepsilon),n+\varepsilon]^2$. In the fifth pane, we see all of the expanded members and observe that the sum of the areas of the enlarged members is ``significantly'' more that the area of $[-n,n]^2$.}
    \label{fig:cover-technique}
\end{figure}

\newpage
\section{Notation}\label{sec:notation}

The following is a list of some of the notation we will use in this paper.

\begin{itemize}
    \item We use $\N$ to denote the natural numbers starting with $1$.
    \item We continue to use $\ballcstd{\epsilon}{\vec{p}}$, $\ballostd{\epsilon}{\vec{p}}$, $\ballcinf{\epsilon}{\vec{p}}$, and $\balloinf{\epsilon}{\vec{p}}$ as before.
    \item For two sets $A,B\subseteq\R^d$ we write $A+B$ to represent the Minkowski sum $A+B\defeq\set{\vec{a}+\vec{b}\colon \vec{a}\in A,\;\vec{b}\in B}$. We also may write $\vec{a}+B$ to mean $\set{\vec{a}+\vec{b}\colon \vec{b}\in B}$ for some fixed vector $\vec{a}$.
      \item We will use $\unitballmeasurestd$ to represent the Borel/Lebesgue measure of the unit radius ball in $\R^d$ with respect to a general norm $\norm{\cdot}$. This is a normalization factor that appears in some results.
 
\end{itemize}
\section{Lower Bound on the Degree Parameter}
\label{sec:general-bounds}

In this section, we present a complete proof of \Autoref{epsilon-measure-bounded-any-norm}. We begin with some prerequisite results in \Autoref{subsec:prerequisite-results}. Then, in \Autoref{epsilon-measure-bounded-any-norm} we present the proof of \Autoref{sec:Main-Results}. We follow this immediately by proving the two consequences mentioned in the introduction: \Autoref{epsilon-diameter-bounded-any-norm} and 
\Autoref{k-exponential-in-dimension-measure}.

\subsection{Prerequisite Results}
\label{subsec:prerequisite-results}

In this section, we will deal with arbitrary norms of $\R^d$. We  point out the well-known fact that all norms on $\R^d$ are equivalent in the sense that they all generate the same topology on $\R^d$. Given two norms $\norm{\cdot}^a$ and $\norm{\cdot}^b$ on $\R^d$, there exists fixed constants $c_d,C_d\in(0,\infty)$ such that for all vectors $\vec{x}\in\R^d$, it holds that $c_d\norm{\vec{x}}^a \leq \norm{\vec{x}}^b \leq C_d\norm{\vec{x}}^a$. Thus the collection of open sets in $\R^d$ is the same no matter which norm we are using. This also means that the Borel and Lebesgue $\sigma$-algebras on $\R^d$ are the same no matter which norm is used, and thus balls with respect to any norm on $\R^d$ are measurable.

We begin with four simple facts. All  have straightforward proofs which we provide in \hyperlink{hypertarget-smaller-ceiling}{\Autoref*{sec:other}} and  \Autoref{sec:minkowski-sums}. The first fact will later allow us to pass a result through a limit since the answer will be an integer.

\begin{restatable}{fact}{restatableSmallerCeiling}\label{smaller-ceiling}
For any $\alpha\in\R$, there exists $\gamma\in\R$ such that $\gamma<\alpha$ and $\ceil{\gamma}=\ceil{\alpha}$.
\end{restatable}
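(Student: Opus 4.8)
The plan is to use the elementary structure of the ceiling function: for any real $\alpha$, if we set $n = \ceil{\alpha}$, then $n$ is by definition the least integer that is $\geq \alpha$, which is exactly equivalent to saying $n - 1 < \alpha \leq n$. The key observation is that this forces $\alpha$ to lie strictly inside the half-open interval $(n-1, n]$, and in particular the open interval $(n-1, \alpha)$ is nonempty because it has positive length $\alpha - (n-1) > 0$. Any point of that interval will serve as the desired $\gamma$.

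Concretely, I would take $\gamma = \tfrac{\alpha + (n-1)}{2}$, the midpoint between $\alpha$ and $n-1$. First I would check $\gamma < \alpha$: since $n - 1 < \alpha$, averaging $\alpha$ with something strictly smaller than $\alpha$ yields something strictly smaller than $\alpha$. Next I would check $\ceil{\gamma} = n$: on one hand $\gamma < \alpha \leq n$, so $\gamma < n$; on the other hand, since $n - 1 < \alpha$, averaging $n-1$ with something strictly larger gives $\gamma > n - 1$. Thus $n - 1 < \gamma < n$, so $n$ is the least integer $\geq \gamma$, i.e. $\ceil{\gamma} = n = \ceil{\alpha}$, as required.

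There is essentially no obstacle here; the only point requiring the slightest care is ensuring that $\gamma$ does not drop to or below $n - 1$ (which would push $\ceil{\gamma}$ down to $n - 1$ or lower). Choosing $\gamma$ strictly between $n-1$ and $\alpha$ — rather than, say, blindly taking $\gamma = \alpha - 1$, which could undershoot when $\alpha$ is close to $n - 1$ — is exactly what avoids this, and the midpoint choice makes both inequalities immediate.
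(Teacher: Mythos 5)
Your proof is correct and follows essentially the same route as the paper's: both observe that $n-1 < \alpha \leq n$ for $n=\ceil{\alpha}$, so the interval $(n-1,\alpha)$ is nonempty, and any $\gamma$ in it works. The only cosmetic difference is that you pick the midpoint explicitly while the paper takes an arbitrary element of the interval.
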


The next fact says that the Minkowski sum of a set $X$ and an open ball at the origin can be viewed as a union of open balls positioned at each point of $X$.

\begin{restatable}{fact}{restatableMinkowskiBubbleUnion}\label{minkowski-bubble-union}
For any normed vector space, given a set $X$ and $\varepsilon\in[0,\infty)$, then
\[
X+\ballostd{\varepsilon}{\vec{0}} = \bigcup_{\vec{x}\in X}\ballostd{\varepsilon}{\vec{x}}.
\]
The same can be said replacing open balls with closed balls.
\end{restatable}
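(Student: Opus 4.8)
The plan is to prove the claimed set equality by a direct chain of logical equivalences at the level of membership, simply unwinding the definition of the Minkowski sum and of an open ball; no external results are needed.

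First I would fix an arbitrary point $\vec{y}$ in the ambient normed vector space and observe that, by definition of the Minkowski sum, $\vec{y}\in X+\ballostd{\varepsilon}{\vec{0}}$ holds exactly when there exist $\vec{x}\in X$ and $\vec{b}\in\ballostd{\varepsilon}{\vec{0}}$ with $\vec{y}=\vec{x}+\vec{b}$. Since such a $\vec{b}$, if it exists, is forced to equal $\vec{y}-\vec{x}$, this is equivalent to the existence of some $\vec{x}\in X$ with $\vec{y}-\vec{x}\in\ballostd{\varepsilon}{\vec{0}}$, i.e.\ with $\norm{\vec{y}-\vec{x}}<\varepsilon$. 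By the definition of the open ball centered at $\vec{x}$, that last condition says precisely that $\vec{y}\in\ballostd{\varepsilon}{\vec{x}}$ for some $\vec{x}\in X$, which is exactly membership of $\vec{y}$ in $\bigcup_{\vec{x}\in X}\ballostd{\varepsilon}{\vec{x}}$. Reading this chain of equivalences from both ends yields the desired equality of sets.

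The closed-ball version follows from the identical argument with the strict inequality $\norm{\vec{y}-\vec{x}}<\varepsilon$ replaced throughout by $\norm{\vec{y}-\vec{x}}\leq\varepsilon$. I would also remark that the degenerate cases need no separate treatment: if $X=\emptyset$, or (in the open case) $\varepsilon=0$, both sides are empty, and the equivalence chain above still holds since it asserts nothing nonvacuously. There is no genuine obstacle here; the only point worth attention is to keep the quantifier ``there exists $\vec{x}\in X$'' outside the translation step, since it is precisely the freedom to choose $\vec{x}$ that converts a single Minkowski sum with the origin-centered ball into the union of the translated balls.
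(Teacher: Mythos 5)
Your proof is correct and follows essentially the same elementary unwinding of the Minkowski-sum and ball definitions as the paper; the paper argues by double inclusion while you present it as a single chain of biconditionals, but the content is identical.
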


The third fact says that we can decompose a ball into a Minkowski sum of two smaller balls.

\begin{restatable}{fact}{restatableSumOfBalls}\label{sum-of-balls}
For any normed vector space, and any $\alpha,\beta\in(0,\infty)$, it holds that $\ballostd{\alpha}{\vec{0}}+\ballostd{\beta}{\vec{0}}=\ballostd{\alpha+\beta}{\vec{0}}$.
\end{restatable}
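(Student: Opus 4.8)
The statement to prove is \Autoref{sum-of-balls}: for any normed vector space and any $\alpha,\beta\in(0,\infty)$, $\ballostd{\alpha}{\vec{0}}+\ballostd{\beta}{\vec{0}}=\ballostd{\alpha+\beta}{\vec{0}}$.

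\medskip

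\textbf{Plan.} The plan is to prove this by double containment. For the inclusion $\ballostd{\alpha}{\vec{0}}+\ballostd{\beta}{\vec{0}}\subseteq\ballostd{\alpha+\beta}{\vec{0}}$, I would take an arbitrary element of the left-hand side, which by definition of the Minkowski sum has the form $\vec{x}+\vec{y}$ with $\norm{\vec{x}}<\alpha$ and $\norm{\vec{y}}<\beta$, and apply the triangle inequality: $\norm{\vec{x}+\vec{y}}\leq\norm{\vec{x}}+\norm{\vec{y}}<\alpha+\beta$, so $\vec{x}+\vec{y}\in\ballostd{\alpha+\beta}{\vec{0}}$. This direction is immediate and holds for open balls automatically because of the strict inequalities.

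\medskip

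For the reverse inclusion $\ballostd{\alpha+\beta}{\vec{0}}\subseteq\ballostd{\alpha}{\vec{0}}+\ballostd{\beta}{\vec{0}}$, I would take $\vec{z}$ with $\norm{\vec{z}}<\alpha+\beta$ and exhibit a decomposition $\vec{z}=\vec{x}+\vec{y}$ with $\norm{\vec{x}}<\alpha$ and $\norm{\vec{y}}<\beta$. The natural choice is to split $\vec{z}$ proportionally: set $\vec{x}=\frac{\alpha}{\alpha+\beta}\vec{z}$ and $\vec{y}=\frac{\beta}{\alpha+\beta}\vec{z}$, so that $\vec{x}+\vec{y}=\vec{z}$ and, using homogeneity of the norm, $\norm{\vec{x}}=\frac{\alpha}{\alpha+\beta}\norm{\vec{z}}<\frac{\alpha}{\alpha+\beta}(\alpha+\beta)=\alpha$ and similarly $\norm{\vec{y}}<\beta$. (One should note this works even when $\vec{z}=\vec{0}$, since then $\vec{x}=\vec{y}=\vec{0}$ and $0<\alpha$, $0<\beta$.) The "same can be said replacing open balls with closed balls" remark (which actually appears in \Autoref{minkowski-bubble-union}, not this fact, but is the analogous statement) follows by replacing every strict inequality with a non-strict one throughout both directions.

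\medskip

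\textbf{Main obstacle.} Honestly there is no real obstacle here — this is one of the "four simple facts." The only point requiring the slightest care is making sure the proportional-splitting argument is valid in the degenerate case $\vec{z}=\vec{0}$ (trivially fine) and that the strict inequalities propagate correctly. The proof uses only the triangle inequality and absolute homogeneity of the norm, so it is valid in any normed vector space as claimed.
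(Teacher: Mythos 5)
Your proof is correct, and the reverse inclusion takes a genuinely different — and cleaner — route than the paper's. The forward inclusion via the triangle inequality is identical in both. For the reverse inclusion $\ballostd{\alpha+\beta}{\vec{0}}\subseteq\ballostd{\alpha}{\vec{0}}+\ballostd{\beta}{\vec{0}}$, you split $\vec{z}$ proportionally as $\vec{x}=\frac{\alpha}{\alpha+\beta}\vec{z}$ and $\vec{y}=\frac{\beta}{\alpha+\beta}\vec{z}$, and absolute homogeneity immediately gives $\norm{\vec{x}}<\alpha$ and $\norm{\vec{y}}<\beta$ with no case analysis. The paper instead normalizes: it first disposes of the cases $\norm{\vec{z}}<\alpha$ and $\norm{\vec{z}}<\beta$ by writing $\vec{z}=\vec{z}+\vec{0}$ or $\vec{0}+\vec{z}$, then for $\norm{\vec{z}}\geq\alpha,\beta$ (so $\vec{z}\neq\vec{0}$) sets $\varepsilon=\alpha+\beta-\norm{\vec{z}}$ and decomposes along the unit direction $\frac{\vec{z}}{\norm{\vec{z}}}$ with coefficients $\alpha-\frac{\varepsilon}{2}$ and $\beta-\frac{\varepsilon}{2}$. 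That works, but the normalization forces the case split to avoid dividing by $\norm{\vec{z}}=0$; your proportional scheme sidesteps this entirely, since the coefficients $\frac{\alpha}{\alpha+\beta},\frac{\beta}{\alpha+\beta}$ depend only on $\alpha,\beta$ and work uniformly including at $\vec{z}=\vec{0}$. Your version is the one I would keep. One minor clarification: you correctly note that the ``same for closed balls'' remark is not actually part of this fact's statement (it appears in \Autoref{minkowski-bubble-union} and \Autoref{antisymmetry-containment-pre-claim-minkowski-sum}), so it need not be addressed here; your proportional split would carry over verbatim with non-strict inequalities if one wanted it.
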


The fourth fact, while also very simple, is the key change of perspective that allowed us to prove the main results of this section. It says that if we are checking which sets $X$ in our partition intersect an $\varepsilon$-ball located at $\vec{p}$ (in order to see how many there are), we can instead enlarge each member of the partition by taking its Minkowski sum with the origin-centered $\varepsilon$-ball, and check which of these enlarged members contain the point $\vec{p}$.

\begin{restatable}{fact}{restatableAntisymmetryContainmentPreClaimMinkowskiSum}\label{antisymmetry-containment-pre-claim-minkowski-sum}
For any normed vector space, for any set $X$, for any vector $\vec{p}$, and any $\epsilon>0$, the following are equivalent:
\begin{enumerate}
    \item $\ballostd{\varepsilon}{\vec{p}}\cap X\not=\emptyset$
    \item $\vec{p}\in X+\ballostd{\varepsilon}{\vec{0}}$
\end{enumerate}
The same can be said replacing both open balls with closed balls.
\end{restatable}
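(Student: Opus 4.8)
The plan is to derive this directly from \Autoref{minkowski-bubble-union}, which rewrites the Minkowski sum $X+\ballostd{\varepsilon}{\vec{0}}$ as the union $\bigcup_{\vec{x}\in X}\ballostd{\varepsilon}{\vec{x}}$, together with the single elementary property of norms that makes the equivalence ``tick'': symmetry, i.e. $\norm{\vec{p}-\vec{x}} = \norm{\vec{x}-\vec{p}}$ for all $\vec{p},\vec{x}$. The whole argument is a short chain of ``if and only if'' steps, and I would present it as such.

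Concretely, I would argue: $\vec{p}\in X+\ballostd{\varepsilon}{\vec{0}}$ holds iff, by \Autoref{minkowski-bubble-union}, $\vec{p}\in\bigcup_{\vec{x}\in X}\ballostd{\varepsilon}{\vec{x}}$, which by definition of union holds iff there exists $\vec{x}\in X$ with $\vec{p}\in\ballostd{\varepsilon}{\vec{x}}$, i.e. with $\norm{\vec{p}-\vec{x}}<\varepsilon$. By symmetry of the norm this last condition is equivalent to $\norm{\vec{x}-\vec{p}}<\varepsilon$, i.e. to $\vec{x}\in\ballostd{\varepsilon}{\vec{p}}$. So the statement becomes: there exists $\vec{x}\in X$ with $\vec{x}\in\ballostd{\varepsilon}{\vec{p}}$, which is precisely $\ballostd{\varepsilon}{\vec{p}}\cap X\neq\emptyset$. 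Reading the chain backwards gives the other implication for free, so items (1) and (2) are equivalent.

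For the closed-ball version I would note that the identical chain applies verbatim, invoking the closed-ball form of \Autoref{minkowski-bubble-union} and replacing the strict inequality $\norm{\vec{p}-\vec{x}}<\varepsilon$ by $\norm{\vec{p}-\vec{x}}\leq\varepsilon$ throughout; symmetry of the norm is again the only fact used. One could even avoid re-deriving it by observing that ``open'' versus ``closed'' plays no role in any step.

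As for the main obstacle: there genuinely isn't one — this is a bookkeeping lemma whose only content is the symmetry of the norm, and the only thing to be careful about is that the case $\varepsilon=0$ is excluded by hypothesis (the statement assumes $\epsilon>0$, though in fact for open balls the equivalence still holds trivially at $\varepsilon=0$ since both sides are false, and \Autoref{minkowski-bubble-union} is stated for $\varepsilon\in[0,\infty)$). I would simply cite \Autoref{minkowski-bubble-union} and write out the four-line equivalence chain.
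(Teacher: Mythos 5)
Your proof is correct and is essentially the paper's argument: both hinge on the symmetry of the norm to flip $\norm{\vec{p}-\vec{x}}<\varepsilon$ into $\norm{\vec{x}-\vec{p}}<\varepsilon$, and both unpack the Minkowski sum into ``there exists $\vec{x}\in X$ with $\vec{p}\in\ballostd{\varepsilon}{\vec{x}}$.'' The only cosmetic difference is that you route that unpacking through \autoref{minkowski-bubble-union} and present it as a single chain of biconditionals, whereas the paper unpacks the Minkowski sum definition directly and proves the two implications separately (and does the closed-ball case first).
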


Now we introduce the result which is the connection to the above-mentioned key change of perspective. The result says to consider a bounded, (measurable) subset $S\subseteq\R^d$ (so it has finite measure) and a collection $\mathcal{A}$ of (measurable) subsets of $S$. If we compute the sum of measures of all members in the collection $\mathcal{A}$ (i.e. intuitively the total volume that they take up), and compare this to the measure/volume of $S$, then whatever this ratio is, we can find a point in $S$ covered by that many members of the collection $\mathcal{A}$. 
For example, in the simplest case that the total measure of members of $\mathcal{A}$ is larger than the measure of $S$, then there is no way for all of the members of $\mathcal{A}$ to be disjoint, so there has to be some point covered by two members. This simple case can be viewed as a continuous version of the pigeonhole principle.

In the more generic case, this result should be intuitively true by an averaging argument: if every point of $S$ is covered only $n$ times, then the total measure of members in $\mathcal{A}$ is at most $n\cdot m(S)$, so if the ratio of total measure in $\mathcal{A}$ to the measure of $S$ is large, then $n$ must also be large. This more general version is a sort of continuous multi-pigeonhole principle.

\restatableLowerBoundCoverNumberRd*

While this result may be intuitive, proving it formally does require some effort. We first encountered this result as the main ingredient in the standard proof of Blichfeldt's theorem (which was the source of motivation for our main technique), but many of the sources we found where proofs of Blichfeldt's theorem are presented did not prove the result above except in special cases, so for convenience and completion, we provide a proof in \Autoref{sec:measure-theory} in three parts: \Autoref{exact-measure-of-multiplicity}, \Autoref{upper-bound-measure-of-multiplicity}, and \Autoref{lower-bound-cover-number}.

The next ingredient that we need is a way to measure how large the Minkowski sum in \Autoref{antisymmetry-containment-pre-claim-minkowski-sum} is. In order to utilize \Autoref{lower-bound-cover-number-Rd} we need a lower bound on the measures, and we can obtain one using the generalized Brunn-Minkowski inequality stated below.

\begin{restatable}[Generalized Brunn-Minkowski Inequality]{theorem}{restatable-generalized-brunn-minkowski-inequality}\label{generalized-brunn-minkowski-inequality}
Let $d\in\N$ and $A,B\subseteq\R^d$ be Lebesgue measurable such that $A+B$ is also Lebesgue measurable. Then
\[
m(A+B) \geq \left[m(A)^\frac1d + m(B)^\frac1d\right]^d.
\]
\end{restatable}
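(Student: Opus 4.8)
The plan is to prove the Hadwiger--Ohmann form of the inequality: establish it first for boxes (products of intervals), then for finite unions of boxes by an induction, and finally pass to arbitrary measurable sets by approximation. First I would dispose of the degenerate cases (where $A$ and $B$ must be taken nonempty, since otherwise $A+B=\emptyset$): if $m(A)=\infty$ then $A+B$ contains a translate of $A$, so $m(A+B)=\infty$ and the bound is trivial; if $m(A)=0$ then $A+B$ contains a translate of $B$, so $m(A+B)\ge m(B)=\left[0+m(B)^{1/d}\right]^d$; symmetrically for $B$. Hence I may assume $0<m(A),m(B)<\infty$. For the box case, by translation invariance take $A=\prod_{i=1}^d[0,a_i]$ and $B=\prod_{i=1}^d[0,b_i]$ with $a_i,b_i>0$; then $A+B=\prod_{i=1}^d[0,a_i+b_i]$, and dividing the desired inequality by $m(A+B)=\prod_i(a_i+b_i)$ reduces it to
\[
\left(\prod_{i=1}^d \frac{a_i}{a_i+b_i}\right)^{1/d}+\left(\prod_{i=1}^d \frac{b_i}{a_i+b_i}\right)^{1/d}\le 1,
\]
which follows from the AM--GM inequality applied to each product together with $\sum_{i=1}^d\left(\frac{a_i}{a_i+b_i}+\frac{b_i}{a_i+b_i}\right)=d$.

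Next, the inductive step on the total number of boxes. Suppose $A$ is a union of $p$ boxes with pairwise disjoint interiors and $B$ a union of $q$ such boxes, with $p+q\ge 3$ and, without loss of generality, $p\ge 2$. Two of the boxes of $A$ have disjoint interiors, so there is a coordinate direction $j$ and a value $t$ for which the hyperplane $\{x_j=t\}$ has one of these boxes in $\{x_j\le t\}$ and the other in $\{x_j\ge t\}$. Setting $A^-=A\cap\{x_j\le t\}$ and $A^+=A\cap\{x_j\ge t\}$, each is a union of at most $p-1$ boxes, $m(A^-)+m(A^+)=m(A)$, and $\theta\defeq m(A^-)/m(A)$ lies in $(0,1)$. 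By continuity of $s\mapsto m(B\cap\{x_j\le s\})$, choose $s$ with $m(B\cap\{x_j\le s\})=\theta\,m(B)$ and set $B^-=B\cap\{x_j\le s\}$, $B^+=B\cap\{x_j\ge s\}$, each a union of at most $q$ boxes. Translating $B$ by $(t-s)\vec{e}_j$ we may assume $s=t$; then $A^-+B^-$ and $A^++B^+$ lie in opposite closed half-spaces bounded by $\{x_j=s\}$, so they are disjoint up to a measure-zero hyperplane, and both are finite unions of boxes contained in $A+B$. Applying the induction hypothesis to $(A^-,B^-)$ and $(A^+,B^+)$ (each with strictly fewer boxes in total) and using $m(A^-)=\theta\,m(A)$, $m(B^-)=\theta\,m(B)$, and so on, together with the homogeneity $\left[(cu)^{1/d}+(cv)^{1/d}\right]^d=c\left[u^{1/d}+v^{1/d}\right]^d$, gives
\[
m(A+B)\ge m(A^-+B^-)+m(A^++B^+)\ge \bigl(\theta+(1-\theta)\bigr)\left[m(A)^{1/d}+m(B)^{1/d}\right]^d.
\]
The base case $p=q=1$ is exactly the box case.

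The passage to arbitrary sets proceeds in stages. For open $A,B$ of finite measure, each is an increasing union of finite unions of almost-disjoint dyadic cubes $U_n\uparrow A$, $V_n\uparrow B$ with $m(U_n)\to m(A)$ and $m(V_n)\to m(B)$; since $U_n+V_n\subseteq A+B$ and $A+B$ is open, the finite-union case plus a limit gives the bound. For compact $K,L$, the open $\varepsilon$-neighborhoods satisfy $K_\varepsilon+L_\varepsilon\subseteq(K+L)_{2\varepsilon}$, and as $\varepsilon\downarrow 0$ one has $m(K_\varepsilon)\to m(K)$, $m(L_\varepsilon)\to m(L)$, and $m((K+L)_{2\varepsilon})\to m(K+L)$ by continuity from above (all measures being finite), so the open case passes to compact sets. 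Finally, for measurable $A,B$ with $A+B$ measurable, inner regularity yields compact $K_n\subseteq A$, $L_n\subseteq B$ with $m(K_n)\to m(A)$, $m(L_n)\to m(B)$; since $K_n+L_n$ is compact and contained in the measurable set $A+B$, monotonicity gives $m(A+B)\ge m(K_n+L_n)\ge\left[m(K_n)^{1/d}+m(L_n)^{1/d}\right]^d\to\left[m(A)^{1/d}+m(B)^{1/d}\right]^d$.

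I expect the main obstacle to be the inductive step: one must choose the separating hyperplane so that \emph{both} half-space pieces of $A$ lose at least one box, and then split $B$ in the \emph{same} proportion $\theta$ so that the homogeneity constants recombine as $\theta+(1-\theta)=1$ with no loss. The limiting arguments are routine, but the compact-to-measurable stage quietly uses the hypothesis that $A+B$ is measurable (so that $m(K_n+L_n)\le m(A+B)$). An alternative I would mention is to derive the inequality from the Pr\'ekopa--Leindler inequality by applying it to indicator functions, which replaces the box-induction by an induction on dimension via Fubini but costs roughly the same effort overall.
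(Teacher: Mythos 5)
The paper does not give a proof of this theorem: it states it and defers to Gardner's survey (\cite[Equation~11]{gardner_brunn-minkowski_2002}), noting only that Gardner's boundedness assumption can be dropped. So there is no paper proof to compare against; your submission is the only complete argument on the table, and it should be judged on its own terms.

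Your proof is the classical Hadwiger--Ohmann argument and it is essentially correct: boxes by AM--GM, finite almost-disjoint unions by the separating-hyperplane induction on the total box count with a proportional split of $B$, then open sets by exhaustion with dyadic cubes, compact sets by shrinking open neighborhoods, and finally measurable sets by inner regularity. You correctly flag the two places where care is needed: the degenerate cases ($A$ or $B$ empty, infinite or zero measure) and the fact that the last stage uses the hypothesis that $A+B$ is measurable. One small cosmetic slip: after translating $B$ so that $s=t$, the sets $A^-+B^-$ and $A^++B^+$ lie on opposite sides of the hyperplane $\{x_j=2t\}$ (equivalently $\{x_j=s+t\}$), not $\{x_j=s\}$; alternatively you can translate both $A$ and $B$ so that $s=t=0$ and then the separating hyperplane for the sums is again $\{x_j=0\}$. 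Also worth making explicit, though easily repaired: before the induction one should discard measure-zero (degenerate) boxes, since otherwise the induction count $p+q$ might fail to drop and the claim $\theta\in(0,1)$ could fail. Neither issue affects the soundness of the overall argument. Your closing remark about deriving the result instead from Pr\'ekopa--Leindler is also a legitimate alternative route.
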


This version of the statement can be obtained from \cite[Equation~11]{gardner_brunn-minkowski_2002}; in that survey, Gardner states this theorem with a requirement that the sets be bounded, but in the following paragraph notes that this is not necessary and the requirement is only stated for convenience of the presentation in that survey.

In the theorem, the requirement that $A+B$ is Lebesgue measurable is not a triviality; Gardner discusses that there exist known Lebesgue measurable sets $A$ and $B$ such that the Minkowski sum $A+B$ is not Lebesgue measurable as shown in \cite{sierpinski_sur_1920}. The next result gives us a way to circumvent this issue in our application even if the members of our partition are not measurable by taking $B$ to be an open set so that the sum $A+B$ is open (and thus measurable), and using the outer measure of $A$ so that we don't need the assumption that $A$ is measurable.

\begin{lemma}\label{outer-measure-brunn-minkowsi-bound}
Let $d\in\N$ and let $\R^d$ be equipped with any norm $\norm{\cdot}$. Let $Y\subseteq\R^d$, and $\varepsilon\in(0,\infty)$. Then $Y+\ballostd{\varepsilon}{\vec{0}}$ is open (and thus Borel measurable), and $m(Y+\ballo{\varepsilon}{\vec{0}})\geq \left(m_{out}(Y)^\frac1d+\varepsilon\cdot (\unitballmeasurestd)^\frac1d\right)^d$.
\end{lemma}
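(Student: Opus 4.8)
The plan is to treat the two assertions of the lemma separately and to deal with the nonmeasurability of $Y$ by a sub‑ball absorption trick. For openness, I would simply invoke \Autoref{minkowski-bubble-union} to rewrite $Y+\ballostd{\varepsilon}{\vec{0}}=\bigcup_{\vec{y}\in Y}\ballostd{\varepsilon}{\vec{y}}$, which is a union of open balls, hence open, hence Borel and so Lebesgue measurable. (This already covers the degenerate case $Y=\emptyset$; for the measure inequality one should read $Y$ as nonempty, which is the only case used in the paper, since otherwise the Minkowski sum is empty and the claimed bound fails trivially.)

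The main obstacle for the measure bound is that $Y$ need not be Lebesgue measurable, so the \namefullref{generalized-brunn-minkowski-inequality} cannot be applied to the pair $(Y,\ballostd{\varepsilon}{\vec{0}})$ directly. The fix I would use is to absorb a sliver of the ball into $Y$, obtaining a measurable set while leaving the total Minkowski sum unchanged. Concretely, fix $t\in(0,\varepsilon)$ and put $Z_t:=Y+\ballostd{\varepsilon-t}{\vec{0}}$. By the openness argument above, $Z_t$ is open, hence Lebesgue measurable; since $\vec{0}\in\ballostd{\varepsilon-t}{\vec{0}}$ we have $Y\subseteq Z_t$, so $m(Z_t)=m_{out}(Z_t)\ge m_{out}(Y)$ by monotonicity of outer measure; and by \Autoref{sum-of-balls} together with associativity of the Minkowski sum, $Z_t+\ballostd{t}{\vec{0}}=Y+\ballostd{\varepsilon}{\vec{0}}$, which is open and hence Lebesgue measurable. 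Now all three of $Z_t$, $\ballostd{t}{\vec{0}}$, and their Minkowski sum are Lebesgue measurable, so the generalized Brunn--Minkowski inequality applies and gives
\[
m\bigl(Y+\ballostd{\varepsilon}{\vec{0}}\bigr)=m\bigl(Z_t+\ballostd{t}{\vec{0}}\bigr)\ge\Bigl(m(Z_t)^{1/d}+m\bigl(\ballostd{t}{\vec{0}}\bigr)^{1/d}\Bigr)^{d}.
\]

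To finish, I would substitute $m\bigl(\ballostd{t}{\vec{0}}\bigr)=t^{d}\,\unitballmeasurestd$ (Lebesgue measure scales by $t^{d}$ under dilation by $t$, and $\unitballmeasurestd=m(\ballostd{1}{\vec{0}})$ by definition), and then use $m(Z_t)\ge m_{out}(Y)$ together with monotonicity of $x\mapsto x^{1/d}$ and of $a\mapsto(a+b)^{d}$ to obtain
\[
m\bigl(Y+\ballostd{\varepsilon}{\vec{0}}\bigr)\ge\bigl(m_{out}(Y)^{1/d}+t\,\unitballmeasurestd^{1/d}\bigr)^{d}\qquad\text{for every }t\in(0,\varepsilon).
\]
Letting $t\uparrow\varepsilon$ and using continuity of the right‑hand side in $t$ yields the claimed bound $m\bigl(Y+\ballostd{\varepsilon}{\vec{0}}\bigr)\ge\bigl(m_{out}(Y)^{1/d}+\varepsilon\,\unitballmeasurestd^{1/d}\bigr)^{d}$. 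The case $m_{out}(Y)=\infty$ needs no special treatment, since then $m\bigl(Y+\ballostd{\varepsilon}{\vec{0}}\bigr)\ge m_{out}(Y)=\infty$ already because $Y\subseteq Y+\ballostd{\varepsilon}{\vec{0}}$. The only genuinely delicate point, worth flagging in the write‑up, is the measurability bookkeeping: one must not feed the possibly nonmeasurable $Y$ into Brunn--Minkowski, and enlarging $Y$ by the sub‑ball $\ballostd{\varepsilon-t}{\vec{0}}$ is exactly the right move because it manufactures an open (hence measurable) set while leaving the eventual sum $Y+\ballostd{\varepsilon}{\vec{0}}$ unchanged.
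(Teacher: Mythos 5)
Your proof is correct and follows essentially the same route as the paper: the paper also writes $Y+\ballostd{\varepsilon}{\vec{0}}=\left[Y+\ballostd{\varepsilon-\varepsilon'}{\vec{0}}\right]+\ballostd{\varepsilon'}{\vec{0}}$ (your $Z_t$ with $\varepsilon'=t$), applies generalized Brunn--Minkowski to the two open summands, lower-bounds $m(Y+\ballostd{\varepsilon-\varepsilon'}{\vec{0}})$ by $m_{out}(Y)$, and lets $\varepsilon'\to\varepsilon$. Your side remark that the measure inequality degenerates for $Y=\emptyset$ is a valid nit that the paper leaves implicit, but it changes nothing of substance.
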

\begin{proof}
By \Autoref{minkowski-bubble-union}, for any $\epsilon''\in(0,\infty)$, $Y+\ballostd{\varepsilon''}{\vec{0}}=\bigcup_{\vec{y}\in Y}\ballostd{\varepsilon''}{\vec{y}}$  which is a union of open sets, so is itself open and thus Borel measurable. Now, for any $\varepsilon'\in(0,\varepsilon)$, observe that by \Autoref{sum-of-balls}, $\ballostd{\varepsilon}{\vec{0}}=\ballostd{\varepsilon-\varepsilon'}{\vec{0}}+\ballostd{\varepsilon'}{\vec{0}}$ and thus, this sum is measurable because it is an open ball. Using this equality and the associativity of the Minkowski sum, we have 
\[
Y+\ballostd{\varepsilon}{\vec{0}} = Y + \left[\ballostd{\varepsilon-\varepsilon'}{\vec{0}}+\ballostd{\varepsilon'}{\vec{0}}\right] = \left[Y + \ballostd{\varepsilon-\varepsilon'}{\vec{0}}\right] +\ballostd{\varepsilon'}{\vec{0}}.
\]
Thus, we have the following chain of inequalities (each justified after it is stated):
\begin{align*}
    m\left(Y+\ballostd{\varepsilon}{\vec{0}}\right) &= m\left(\left[Y + \ballostd{\varepsilon-\varepsilon'}{\vec{0}}\right] +\ballostd{\varepsilon'}{\vec{0}}\right) \tag{Open, measurable, equality above} \\
    &\geq 
    \left( 
        m\left(Y + \ballostd{\varepsilon-\varepsilon'}{\vec{0}}\right)^{\frac1d}
        +
        m\left(\ballostd{\varepsilon'}{\vec{0}}\right)^{\frac1d}
    \right)^d
    \intertext{The above comes from the \namefullref{generalized-brunn-minkowski-inequality} noting that as demonstrated above, terms of the sum $\left[Y + \ballostd{\varepsilon-\varepsilon'}{\vec{0}}\right]$ and $\ballostd{\varepsilon'}{\vec{0}}$ are open and thus measurable, and the same holds for the sum itself $\left(Y+\ballostd{\varepsilon}{\vec{0}}\right)$. We continue.}
    &\geq
    \left( 
        m_{out}\left(Y\right)^{\frac1d}
        +
        m\left(\ballostd{\varepsilon'}{\vec{0}}\right)^{\frac1d}
    \right)^d
    \intertext{The above inequality comes from the definition of the outer measure of $Y$ being the infimum of the measures of all measurable supersets of $Y$. Since $Y\subseteq Y+\ballostd{\varepsilon'}{\vec{0}}$, we get the inequality above. Continuing, we have the following:}
    &=
    \left( 
        m_{out}\left(Y\right)^{\frac1d}
        +
        m\left(\varepsilon'\cdot\ballostd{1}{\vec{0}}\right)^{\frac1d}
    \right)^d
    \tag{Scaling of norm-based balls} \\
    &=
    \left( 
        m_{out}\left(Y\right)^{\frac1d}
        +
        \left[(\varepsilon')^d \cdot m\left(\ballostd{1}{\vec{0}}\right)\right]^{\frac1d}
    \right)^d
    \tag{Scaling for Lebesgue measure} \\
    &=
    \left( 
        m_{out}\left(Y\right)^{\frac1d}
        +
        \varepsilon' \cdot (\unitballmeasurestd)^{\frac1d}
    \right)^d
    \tag{Algebra and $\unitballmeasurestd\defeq m\left(\ballostd{1}{\vec{0}}\right)$}
\end{align*}
Since the inequality above holds for all $\varepsilon'\in(0,\varepsilon)$, it must also hold in the limit (keeping $d$ and $Y$ fixed):
\[
m\left(Y+\ballostd{\varepsilon}{\vec{0}}\right) \geq \lim_{\varepsilon'\to\varepsilon}\left[\left(m_{out}\left(Y\right)^{\frac1d}+\varepsilon' \cdot (\unitballmeasurestd)^{\frac1d}\right)^d\right] = \left(m_{out}\left(Y\right)^{\frac1d}+\varepsilon \cdot (\unitballmeasurestd)^{\frac1d}\right)^d
\]
which concludes the proof.
\end{proof}

At this point we are nearly in a position to prove the main result of this section, but we need one last result which gives an inequality that we will compose with the \namefullref{generalized-brunn-minkowski-inequality}. The result below can be interpreted as saying that for appropriate parameters, we can essentially factor our the ``$x$'' in $(x^{1/d}+\alpha)^d$ to get the no larger expression $x(1+\alpha)^d$.

\begin{restatable}{lemma}{restatableBrunnMinkowskiBoundLemma}\label{brunn-minkowski-bound-lemma}
For $d\in[1,\infty)$, $x\in[0,1]$, and $\alpha\in[0,\infty)$, it holds that $(x^{1/d}+\alpha)^d\geq x(1+\alpha)^d$.
\end{restatable}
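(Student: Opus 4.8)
The plan is to reduce the inequality to a one-line comparison via the substitution $t=x^{1/d}$. Since $x\in[0,1]$ and $d\in[1,\infty)$, the quantity $t=x^{1/d}$ is well defined and lies in $[0,1]$, and (using $t^d=x$) the claimed inequality $(x^{1/d}+\alpha)^d\geq x(1+\alpha)^d$ is exactly
\[
(t+\alpha)^d \;\geq\; t^d(1+\alpha)^d .
\]

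First I would observe that because $t\leq 1$ and $\alpha\geq 0$ we have $t\alpha\leq\alpha$, hence
\[
t+\alpha \;\geq\; t+t\alpha \;=\; t(1+\alpha) \;\geq\; 0 .
\]
Then I would raise both sides to the $d$-th power. Since $d\geq 1$, the map $y\mapsto y^d$ is nondecreasing on $[0,\infty)$, and both sides above are nonnegative, so the inequality is preserved:
\[
(t+\alpha)^d \;\geq\; \bigl(t(1+\alpha)\bigr)^d \;=\; t^d(1+\alpha)^d \;=\; x(1+\alpha)^d .
\]
Recalling $t+\alpha=x^{1/d}+\alpha$ gives the claim. The boundary case $x=0$ is covered automatically (it reads $\alpha^d\geq 0$), so no separate case analysis is needed.

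There is essentially no hard step here; the only point to be careful about is the monotonicity of $y\mapsto y^d$ on the nonnegative reals, which uses $d\geq 1$ together with the nonnegativity of both sides, both guaranteed by the hypotheses. An equivalent alternative is, for $x>0$, to divide through by $x$ and write the left-hand side as $\bigl(1+\alpha x^{-1/d}\bigr)^d$, noting $x^{-1/d}\geq 1$ since $x\leq 1$; but the substitution $t=x^{1/d}$ is cleaner because it disposes of the $x=0$ case uniformly.
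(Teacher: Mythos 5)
Your proof is correct, and it takes a genuinely different and substantially more elementary route than the paper. The paper defines $f(x)=(x^{1/d}+\alpha)^d-x(1+\alpha)^d$, verifies $f(0)\geq 0$ and $f(1)=0$, and then computes the second derivative to argue that $f$ is concave on $[0,1]$ (the paper actually writes ``convex,'' which is a slip --- the computed sign of $f''$ is nonpositive, and concavity is what one needs together with nonnegativity at the endpoints; the paper also writes $f(0)=\alpha^2$ where it means $\alpha^d$). Your substitution $t=x^{1/d}$ reduces everything to the one-line observation $t+\alpha\geq t(1+\alpha)\geq 0$ followed by monotonicity of $y\mapsto y^d$ on $[0,\infty)$, avoiding differentiation entirely. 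What you gain is brevity and robustness: there is no case split, no need to justify differentiability at the endpoint $x=0$, and no chance of a sign or algebra error in a two-line second-derivative computation. What the paper's route buys is nothing essential here --- the calculus machinery is overkill for this particular inequality --- though framing the lemma as a convexity/concavity statement does fit naturally alongside the Brunn--Minkowski material elsewhere in the paper. For this lemma specifically, your argument is the one I would prefer.

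One minor observation: your argument does not actually use the hypothesis $d\geq 1$ in any essential way beyond ensuring $x^{1/d}\in[0,1]$ and $y\mapsto y^d$ nondecreasing, both of which already hold for every real $d>0$; so your proof silently establishes the inequality for all $d\in(0,\infty)$, which is slightly more than the lemma claims. That is a feature, not a bug.
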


A proof of the inequality using standard calculus techniques can be found in \hyperlink{hypertarget-brunn-minkowski-bound-lemma}{\Autoref*{sec:other}}.

\subsection{Proofs of  Theorem~\ref*{epsilon-measure-bounded-any-norm}, Corollary~\ref*{epsilon-diameter-bounded-any-norm}, and Theorem~\ref*{k-exponential-in-dimension-measure}}\label{sec:Main-Results}
Now we restate and prove the main result.

\restatableMeasureBoundedAnyNorm*
\begin{proof}
\renewcommand{\A}{\mathcal{A}}
\newcommand{\F}{\mathcal{F}}

Throughout the proof, all lengths will be with respect to $\norm{\cdot}$, so we will eliminate the clutter by neglecting to use the $\norm{\cdot}$ subscript anywhere in the proof. We will also be working in a single dimension $d$ throughout the proof, so we also drop the $d$ subscript from $v$ throughout.

Consider the following for each $n\in\N$. Let $S_n=\ballo{n}{\vec{0}}$ and $S_n'=\ballo{n+\varepsilon}{\vec{0}}=S_n+\ballo{\varepsilon}{\vec{0}}$ and $\S$ be the partition of $S_n$ induced\footnote{
I.e. $\S=\set{X\cap S_n\colon X\in\P\text{ and }X\cap S_n\not=\emptyset}$. That is, $\S$ is the partition of $S_n$ obtained by intersecting every member of $\P$ with the new domain $S_n$ and keeping those that have non-empty intersections.
} by $\P$. For each $Y\in\S_n$, let $C_{Y}=Y+\ballo{\varepsilon}{\vec{0}}$. By \Autoref{outer-measure-brunn-minkowsi-bound}, $C_Y$ is measurable, and $m(C_Y)\geq \left(m_{out}\left(Y\right)^{\frac1d} + \varepsilon\cdot v^{\frac1d}\right)^d$. Also observe that $C_Y\subseteq S_n'$. Now consider the following inequalities:
\begin{align*}
    m(C_Y) &\geq \left(m_{out}\left(Y\right)^{\frac1d} + \varepsilon\cdot v^{\frac1d}\right)^d \tag{Above} \\  %
    &= \left(M^{1/d}\left[\frac{m_{out}\left(Y\right)^{\frac1d}}{M^{1/d}} + \frac{\varepsilon\cdot v^{\frac1d}}{M^{1/d}}\right]\right)^d \tag{Algebra} \\
    &= M\left(\left[\frac{m_{out}\left(Y\right)}{M}\right]^{\frac1d} + \frac{\varepsilon\cdot v^{\frac1d}}{M^{1/d}}\right)^d \tag{Algebra} \\
    &\geq M\cdot \frac{m_{out}\left(Y\right)}{M} \cdot \left(1+\frac{\varepsilon\cdot v^{\frac1d}}{M^{1/d}}\right)^d \tag{\Autoref{brunn-minkowski-bound-lemma} since $\frac{m_{out}\left(Y\right)}{M}\in[0,1]$} \\
    &= m_{out}\left(Y\right)\cdot\left(1+\frac{\varepsilon\cdot v^{\frac1d}}{M^{1/d}}\right)^d \tag{Simplify} \\
\end{align*}

Informally, the above shows that for each $Y\in\S_n$, the set $Y+\ballo{\varepsilon}{\vec{0}}$ has substantially more (outer) measure than $Y$ does---specifically a factor of $\left(1+\frac{\varepsilon\cdot v^{\frac1d}}{M^{1/d}}\right)^d$. We will extend this to unsurprisingly show that this implies that $\set{Y+\ballo{\varepsilon}{\vec{0}}}_{Y\in\S_n}$ also has this same factor more (outer) measure than $\S_n$ does, observing that $\S_n$ has total (outer) measure of about $m(S_n)$ since $\S_n$ is a partition of $S_n$ (any discrepancy is due to non-measurable members in $\S_n$)

Formally, we claim that there exists a finite subfamily $\F_n\subseteq\S_n$ such that
\[
\sum_{Y\in\F_n}m\left(Y+\ballo{\varepsilon}{\vec{0}}\right) \geq \left(1+\frac{\varepsilon\cdot v^{\frac1d}}{M^{1/d}}\right)^d\cdot m(S_n).
\]
To see this, first consider the case that $\S_n$ has infinite cardinality. Let $\F_n\subset\S_n$ be any subfamily of finite cardinality at least $\left(1+\frac{\varepsilon\cdot v^{\frac1d}}{M^{1/d}}\right)^d\cdot m(S_n)\cdot \frac{1}{\varepsilon^d v}$. This gives
\begin{align*}
    \sum_{Y\in\F_n}m\left(Y+\ballo{\varepsilon}{\vec{0}}\right) &\geq \sum_{Y\in\F_n}m\left(\ballo{\varepsilon}{\vec{0}}\right) \intertext{where this inequality is because $Y+\ballo{\varepsilon}{\vec{0}}$ is a superset of some translation of $\ballo{\varepsilon}{\vec{0}}$ since $Y\not=\emptyset$. Continuing, we use the standard fact that $m\left(\ballo{\varepsilon}{\vec{0}}\right)=m\left(\varepsilon\cdot\ballo{1}{\vec{0}}\right)=\varepsilon^d\cdot m\left(\ballo{1}{\vec{0}}\right)=\varepsilon^d v$:}
    &\geq \sum_{Y\in\F_n}\varepsilon^d v \\
    &= \left[\left(1+\frac{\varepsilon\cdot v^{\frac1d}}{M^{1/d}}\right)^d\cdot m(S_n)\cdot \frac{1}{\varepsilon^d v}\right] \cdot \varepsilon^d v \tag{Cardinality of $\F_n$}\\
    &= \left(1+\frac{\varepsilon\cdot v^{\frac1d}}{M^{1/d}}\right)^d\cdot m(S_n) \tag{Simplify}.
\end{align*}
Now consider the other (more interesting) case where $\S_n$ has finite cardinality\footnote{In fact this case also works if $\S_n$ is countable even though we have already dealt with that case.}. Take $\F_n=\S_n$ so that
\begin{align*}
    \sum_{Y\in\F_n}m\left(Y+\ballo{\varepsilon}{\vec{0}}\right) &=     \sum_{Y\in\S_n}m\left(Y+\ballo{\varepsilon}{\vec{0}}\right) \tag{$\F_n=\S_n$} \\
    &\geq \sum_{Y\in\S_n}m_{out}\left(Y\right)\cdot\left(1+\frac{\varepsilon\cdot v^{\frac1d}}{M^{1/d}}\right)^d \tag{Above}\\
    &= \left(1+\frac{\varepsilon\cdot v^{\frac1d}}{M^{1/d}}\right)^d \cdot \sum_{Y\in\S_n}m_{out}\left(Y\right) \tag{Linearity of summation} \\
    &\geq \left(1+\frac{\varepsilon\cdot v^{\frac1d}}{M^{1/d}}\right)^d \cdot m_{out}\left(\bigcup_{Y\in\S_n}Y\right) \intertext{where the above inequality is due the the countable subaddativity property of outer measures which states that a countable sum of outer measures of sets is at least as large as the outer measure of the union of the sets. In the last step we get}
    &= \left(1+\frac{\varepsilon\cdot v^{\frac1d}}{M^{1/d}}\right)^d \cdot m\left(S_n\right) \tag{$\bigsqcup_{Y\in\F_n}Y=S_n$ is measurable}
\end{align*}


Thus, regardless of whether $\S_n$ has infinite or finite cardinality, there exists a finite subfamily $\F_n\subseteq\S_n$ such that
\[
\sum_{Y\in\F_n}m\left(Y+\ballo{\varepsilon}{\vec{0}}\right) \geq \left(1+\frac{\varepsilon\cdot v^{\frac1d}}{M^{1/d}}\right)^d\cdot m(S_n).
\]
Fix such a subfamily $\F_n$, and let $\A_n=\set{Y+\ballo{\varepsilon}{\vec{0}}}_{Y\in\F_n}$ be a family indexed\footnote{
We require this to be an indexed family rather than just a set, because it is possible that there are distinct $Y,Y'\in\S_n$ such that $Y+\ballo{\varepsilon}{\vec{0}}=Y'+\ballo{\varepsilon}{\vec{0}}$.
} by $\F_n$. Note that for each $A_Y\defeq Y+\ballo{\varepsilon}{\vec{0}}\in\A_n$, since  $Y\subseteq S_n=\ballo{n}{\vec{0}}$, we have $A_Y\subseteq S_n+\ballo{\varepsilon}{\vec{0}}=S_n'$. Thus, by \Autoref{lower-bound-cover-number}\footnote{
We are taking $X$ in \Autoref{lower-bound-cover-number} to be $S_n'$ in this proof, and taking $\mu$ to be $m$ and taking $\A$ to be $\A_n$. We have that $\sum_{A\in\A_n}m(A)<\infty$ because $\abs{\A_n}=\abs{\F_n}$ is finite, and each $A\in\A_n$ is a subset of $S_n'$, so has finite measure.
}, there is a point $\vec{p}^{(n)}\in S_n'$ which belongs to at least $k_n$-many sets in $\A_n$ where
\begin{align*}
    k_n\defeq\ceil*{\frac{\sum_{Y\in\F_n}m\left(Y+\ballo{\varepsilon}{\vec{0}}\right)}{m(S_n')}} &\geq \frac{\left(1+\frac{\varepsilon\cdot v^{\frac1d}}{M^{1/d}}\right)^d \cdot m\left(S_n\right)}{m(S_n')} \tag{Above}\\
    &= \left(1+\frac{\varepsilon\cdot v^{\frac1d}}{M^{1/d}}\right)^d \cdot \frac{m\left(\ballo{n}{\vec{0}}\right)}{m\left(\ballo{n+\varepsilon}{\vec{0}}\right)} \tag{Def'n of $S_n$ and $S_n'$}\\
    &= \left(1+\frac{\varepsilon\cdot v^{\frac1d}}{M^{1/d}}\right)^d \cdot \frac{n^d\cdot v}{(n+\varepsilon)^d\cdot v} \tag{Standard scaling fact}\\
    &= \left(1+\frac{\varepsilon\cdot v^{\frac1d}}{M^{1/d}}\right)^d \cdot \left(\frac{n}{n+\varepsilon}\right)^d \tag{Simplify}.
\end{align*}
Since $\vec{p}^{(n)}$ belongs to at least $k_n$-many sets in $\A_n$, this by definition means that there are at least $k_n$-many sets $Y\in\F_n$ such that $\vec{p}^{(n)}\in Y+\ballo{\varepsilon}{\vec{0}}$, so by \Autoref{antisymmetry-containment-pre-claim-minkowski-sum}, we have $Y\cap\ballo{\varepsilon}{\vec{p}^{(n)}}\not=\emptyset$. For each such $Y$ (since $Y\in\F_n\subseteq \S_n)$, there is a distinct\footnote{I.e. for $Y\not=Y'\in\S_n$ we have that $X_Y,X_{Y'}\in\P$ and $X_Y\not=X_{Y'}$ so this mapping of $Y$'s to $X$'s is injective, so we have at least the same cardinality of $X$'s with the desired property as $Y$'s with the desired property.} $X_Y\in\P$ such that $Y\subseteq X_Y$ and thus $X_Y\cap\ballo{\varepsilon}{\vec{0}}\not=\emptyset$ showing that there are at least $k_n$-many sets in $\P$ which intersect $\ballo{\epsilon}{\vec{0}}$.

For the last step of the proof, we perform a limiting process on $n$. By \Autoref{smaller-ceiling}, let $\gamma\in\R$ such that $\gamma<\left(1+\frac{\varepsilon\cdot v^{\frac1d}}{M^{1/d}}\right)^d$ and $\ceil{\gamma}=\ceil*{\left(1+\frac{\varepsilon\cdot v^{\frac1d}}{M^{1/d}}\right)^d}$. Then, because 
\[
\lim_{n\to\infty} k_n \geq
\lim_{n\to\infty}\left(1+\frac{\varepsilon\cdot v^{\frac1d}}{M^{1/d}}\right)^d \cdot \left(\frac{n}{n+\varepsilon}\right)^d = \left(1+\frac{\varepsilon\cdot v^{\frac1d}}{M^{1/d}}\right)^d > \gamma,
\]
we can take $N\in\N$ sufficiently large so that
\[
k_N\geq\left(1+\frac{\varepsilon\cdot v^{\frac1d}}{M^{1/d}}\right)^d \cdot \left(\frac{N}{N+\varepsilon}\right)^d > \gamma.
\]
Then considering the point $\vec{p}^{(N)}$, we have by the choice of $\gamma$ and the fact that $k_N$ is an integer that
\[
k_N = \ceil{k_N} \geq \ceil{\gamma} = \ceil*{\left(1+\frac{\varepsilon\cdot v^{\frac1d}}{M^{1/d}}\right)^d}
\]
showing that $\ballo{\epsilon}{\vec{p}^{(N)}}$ intersects at least $k_N\geq \ceil*{\left(1+\frac{\varepsilon\cdot v^{\frac1d}}{M^{1/d}}\right)^d}$ members of $\P$ as claimed which completes the proof.
\end{proof}

\begin{remark}
\renewcommand{\A}{\mathcal{A}}
\newcommand{\F}{\mathcal{F}}
If one carefully follows the proof above with minor modification, one can show that if there is any bounded set $S\subseteq\R^d$ that intersects infinitely many elements of $\P$, then for each $i\in\N_0$, there is a point $\vec{p}^{(i)}\in\R^d$ such that $\ballostd{\epsilon}{\vec{p}^{(i)}}$ intersects at least $i$ members of $\P$. To see this, take $n_i\in\N$ large enough so that $S_{n_i}\supseteq S$, and then note that $\S_{n_i}$ has infinite cardinality, so take$\F_{n_i}$ to be a large enough subfamily of $\S_{n_i}$ that $\abs{\F_{n_i}}\cdot(\varepsilon\cdot v^{\frac1d})^d$) exceeds $i\cdot m(S_{n_i}')$ so $\sum_{Y\in\F_{n_i}}m\left(Y+\ballostd{\varepsilon}{\vec{0}}\right)\geq \abs{\F_{n_i}}\cdot(\varepsilon\cdot v^{\frac1d})^d$) exceeds $i\cdot m(S_{n_i}')$, so there is some point $\vec{p}^{(i)}$ that belongs to $Y+\ballostd{\varepsilon}{\vec{0}}$ for at least $i$-many sets $Y\in\S_{n_i}$ so that $\ballostd{\epsilon}{\vec{p}^{(i)}}$ intersects $X_Y$ for at least $i$-many sets $X_Y\in\P$.

However, one can do better than this. If there is some bounded set $S\subseteq\R^d$, then $\clos{S}$ is also bounded because the diameter is no larger. Consider the standard open cover $\set{\ballostd{\epsilon}{\vec{x}}}_{\vec{x}\in \clos{S}}$ of $\clos{S}$. Since $\clos{S}$ is closed and bounded, by the Heine-Borel theorem there is a finite subcover $\mathcal{C}\subseteq\set{\ballostd{\epsilon}{\vec{x}}}_{\vec{x}\in \clos{S}}$ of $\clos{S}$. Since $\clos{S}$ intersects infitely many sets in $\P$, the fact that $\mathcal{C}$ has finite cardinality implies one of the $\epsilon$-balls in $\mathcal{C}$ must intersect infinitely many members of $\P$. Thus, for each $\epsilon\in(0,\infty$, there is in fact some $\vec{p}\in\R^d$ such that $\ballostd{\epsilon}{\vec{p}}$ intersects infinitely many members of $\P$.
\end{remark}

With the main result now proven, we introduce one last tool which will allow us to convert the measure bound hypothesis of \Autoref{epsilon-measure-bounded-any-norm} to a diameter bound hypothesis instead. The isodiamteric inequality (given below) states that no bounded set has greater measure than the ball of the same diameter. Thus, an upper bound on the diameters of members (in any norm) immediately gives an upper bound on the measures of the members.

\begin{restatable}[Isodiametric Inequality]{theorem}{restatableIsodiametricIneq}\label{isodiametric-ineq}
Let $d\in\N$ and $\norm{\cdot}$ be any norm on $\R^d$. Let $A\subseteq\R^d$ be a bounded set. Then the outer Lebesgue measure of $A$ is at most the Lebesgue measure of the ball of the same diameter. That is, (in three equivalent forms) if $\diamstd(A)=D$, then
\begin{align*}
    m_{out}(A) &\leq m\left(\ballostd{D/2}{\vec{0}}\right) \\
    &= \left(\tfrac{D}{2}\right)^d\cdot m\left(\ballostd{1}{\vec{0}}\right) \\
    &= \left(\tfrac{D}{2}\right)^d\cdot \unitballmeasurestd.
\end{align*}
\end{restatable}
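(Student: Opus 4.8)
The plan is to reduce to a compact set and then apply the classical difference-body symmetrization together with the \namefullref{generalized-brunn-minkowski-inequality}. First I would dispose of the degenerate case $\diamstd(A)=0$ (then $A$ is empty or a singleton and both sides are $0$), so from now on assume $D\defeq\diamstd(A)\in(0,\infty)$. Since taking the closure does not increase the diameter, and since $m_{out}(A)\leq m_{out}(\clos{A})=m(\clos{A})$ (the closure is closed, hence Borel, hence Lebesgue measurable), it suffices to establish the bound for the compact set $\clos{A}$, which has diameter exactly $D$. So I may assume throughout that $A$ is compact (in particular measurable) with $\diamstd(A)=D$.

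Next I would work with the difference body $A-A=A+(-A)=\set{\vec{a}-\vec{a}'\colon \vec{a},\vec{a}'\in A}$ and record two facts. (i) Because $A$ is compact, $-A$ is compact, and $A+(-A)$ is the image of the compact set $A\times(-A)$ under the continuous addition map, so $A-A$ is compact and in particular Lebesgue measurable; this is precisely where I sidestep the measurability pitfall that the excerpt flags for Minkowski sums. (ii) For all $\vec{a},\vec{a}'\in A$ we have $\norm{\vec{a}-\vec{a}'}\leq D$, so $A-A\subseteq\ballcstd{D}{\vec{0}}$, and therefore $m(A-A)\leq m(\ballcstd{D}{\vec{0}})=D^d\cdot\unitballmeasurestd$ (the unit sphere has Lebesgue measure zero, so closed and open balls of a given radius have equal measure).

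Now apply \Autoref{generalized-brunn-minkowski-inequality} to the measurable sets $A$ and $-A$, whose Minkowski sum $A-A$ was just shown to be measurable: $m(A-A)\geq\bigl[m(A)^{1/d}+m(-A)^{1/d}\bigr]^d$. Lebesgue measure is invariant under $\vec{x}\mapsto-\vec{x}$, so $m(-A)=m(A)$ and the right-hand side equals $\bigl(2\,m(A)^{1/d}\bigr)^d=2^d m(A)$. Chaining this with fact (ii) gives $2^d m(A)\leq D^d\cdot\unitballmeasurestd$, i.e. $m(A)\leq(D/2)^d\,\unitballmeasurestd=m(\ballostd{D/2}{\vec{0}})$. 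Undoing the reduction to the closure yields $m_{out}(A)\leq m(\ballostd{D/2}{\vec{0}})$ for the original $A$, and the remaining two displayed forms are then just the scaling identity $m(\ballostd{D/2}{\vec{0}})=(D/2)^d\,m(\ballostd{1}{\vec{0}})=(D/2)^d\,\unitballmeasurestd$.

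The only genuine obstacle is the measurability of the difference body, which is exactly the technical hazard the paper highlights for Brunn--Minkowski; passing to a compact $A$ at the very start makes $A-A$ automatically compact and resolves it cleanly. A secondary point needing a sentence of care is that the closure has the same diameter and no smaller outer measure, which is what licenses the reduction; everything else (the containment $A-A\subseteq\ballcstd{D}{\vec{0}}$, reflection invariance $m(-A)=m(A)$, and the final algebra) is routine. I would also note that this argument proves only the inequality and does not characterize the extremal sets --- in a general norm the ball need not be the unique maximizer --- but that is all the statement asserts.
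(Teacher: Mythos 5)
Your proof is correct and follows essentially the same route as the paper's: reduce to the closure, form the difference body, bound it inside $\ballcstd{D}{\vec{0}}$, apply the generalized Brunn--Minkowski inequality with reflection invariance, and divide. Your containment step is in fact a bit cleaner than the paper's --- you observe directly that any $\vec{a}-\vec{a}'$ with $\vec{a},\vec{a}'\in A$ has norm at most $D$ by the definition of diameter, whereas the paper takes a small detour through central symmetry plus a $2D$ diameter bound on the difference body --- and you are slightly more explicit about why the difference body is measurable (compactness of $A\times(-A)$ under the addition map) and why open and closed balls have equal measure; all of these are minor refinements of the same argument rather than a genuinely different approach.
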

In some sources, this inequality is only proved for the $\ell_2$ norm using Steiner symmetrization since it is geometrically quite intuitive, but there is a more general proof for all norms using the \namefullref{generalized-brunn-minkowski-inequality}. For convenience, we offer a standard proof using the latter technique in \hyperlink{hypertarget-iso-diametric-ineq}{\Autoref*{sec:measure-theory}}.

\begin{remark}
If one uses the standard observation that for a bounded set $A$ with $\diamstd(A)=D$ and any point $\vec{a}\in A$ it holds that $A\subseteq \ballostd{D}{a}$, then one immediately obtains
\begin{align*}
    m_{out}(A) &\leq m\left(\ballostd{D}{\vec{a}}\right) \\
    &= m\left(\ballostd{D}{\vec{0}}\right) \\
    &= D^d\cdot m\left(\ballostd{1}{\vec{0}}\right) \\
    &= D^d\cdot \unitballmeasurestd
\end{align*}
so the \nameref{isodiametric-ineq} gives a factor of $2$ smaller diameter and a factor of $2^d$ smaller measure.
\end{remark}

\restatableEpsilonDiameterBoundedAnyNorm*

\begin{proof}
Let $M=\left(\frac D2\right)^d\cdot \unitballmeasurestd$. For any $X\in\P$, since $\diamstd(X)\leq D$, then by the \namefullref{isodiametric-ineq}, $m_{out}(X)\leq M$. Since every member of $\P$ has outer Lebesgue measure at most $M$, then by \Autoref{epsilon-measure-bounded-any-norm}, there exists $\vec{p}\in\R^d$ such that $\ballostd{\epsilon}{\vec{p}}$ intersects at least
\[
    \ceil*{\left(1+\varepsilon\left(\frac{\unitballmeasurestd}{M}\right)^{1/d}\right)^d} \\
    =\ceil*{\left(1+\frac{2\varepsilon}{D}\right)^d}
\]
members of $\P$ as claimed.
\end{proof}

We also have that \Autoref{k-exponential-in-dimension-measure} follows as a simple corollary of \Autoref{epsilon-measure-bounded-any-norm}.

\restatableKExponentialInDimensionMeasure*
\begin{proof}
Consider the $\ell_\infty$ norm and $M=1$ noting that for each 
$d\in\N$, ${\unitballmeasureinf}=2^d$ (i.e. the volume of the $\ell_\infty$ unit ball in $\R^d$ is $2^d$).

Then by \Autoref{epsilon-measure-bounded-any-norm}, there is a point $\vec{p}\in\R^d$ such that $\balloinf{\epsilon}{\vec{p}}$ intersects at least
\[
\left(1+\varepsilon\left(\frac{v_{\norm{\cdot}_\infty,d}}{M}\right)^{1/d}\right)^d = \left(1+\varepsilon\left(\frac{2^d}{1}\right)^{1/d}\right)^d = (1+2\varepsilon)^d
\]
members of $\P$, and thus trivially the closed ball $\ballcinf{\epsilon}{\vec{p}}$ does as well. Thus, if $\P$ is $(k,\epsilon)$-secluded (meaning by definition that for every $\vec{p}\in\R^d$ it holds that $\ballcinf{\epsilon}{\vec{p}}$ intersects at most $k$ members of $\P$) then it must be that $k\geq(1+2\epsilon)^d$.

For the consequence, if $k\leq 2^d$, then this implies $\epsilon\leq\frac12$. Then taking the logarithm of both sides of our inequality and using the fact that $\ln(1+x)\geq\frac x2$ for $x\in[0,1]$, we have
\[
\ln(k) \geq d\ln(1+2\epsilon) \geq d\epsilon
\]
showing that $\epsilon\leq\frac{\ln(k)}{d}$.
\end{proof}

We state the following interesting corollary when $k(d)$ is polynomial.
\begin{corollary}
Let $\langle \P_d\rangle_{d=1}^\infty$ be a sequence of $(k(d), \varepsilon(d))$-secluded partitions of $\R^d$ such that every member of each $\P_d$ has outer Lebesgue measure at most 1. If $k(d) \in \poly(d)$ then $\varepsilon(d) \in O(\frac{\ln d}{d})$ (where the hidden constant can be taken to be anything exceeding the polynomial degree of $k$).
\end{corollary}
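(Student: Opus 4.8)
The plan is to invoke \Autoref{k-exponential-in-dimension-measure} in each dimension separately and then unpack what the consequence $\varepsilon(d)\le\frac{\ln k(d)}{d}$ says once $k(d)$ is known to be polynomially bounded. First I would pin down the meaning of $k(d)\in\poly(d)$: there are a constant $C\in(0,\infty)$ and an integer $a$ (which may be taken to be any fixed integer strictly exceeding the polynomial degree of $k$) with $k(d)\le C\,d^a$ for all sufficiently large $d$, say $d\ge d_0$. Since every polynomial is eventually dominated by $2^d$, I can then fix a threshold $d_1\ge d_0$ so that $k(d)\le 2^d$ for all $d\ge d_1$.

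Next I would apply the theorem. For each $d\ge d_1$, the partition $\P_d$ of $\R^d$ is $(k(d),\varepsilon(d))$-secluded and each of its members has outer Lebesgue measure at most $1$, so the hypotheses of \Autoref{k-exponential-in-dimension-measure} hold; since moreover $k(d)\le 2^d$, the stated consequence of that theorem yields $\varepsilon(d)\le\frac{\ln k(d)}{d}$. Substituting the polynomial bound gives, for all $d\ge d_1$,
\[
\varepsilon(d)\ \le\ \frac{\ln k(d)}{d}\ \le\ \frac{\ln C + a\ln d}{d},
\]
which is $O\!\left(\frac{\ln d}{d}\right)$, establishing the corollary.

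To justify the parenthetical remark about the hidden constant, I would observe that for any real $b>a$ (hence any $b$ exceeding the polynomial degree of $k$) we have $(b-a)\ln d\to\infty$, so $\ln C + a\ln d\le b\ln d$ for all large enough $d$, whence $\varepsilon(d)\le b\cdot\frac{\ln d}{d}$ eventually; thus the constant in the $O(\cdot)$ can be taken to be any $b$ strictly larger than the polynomial degree of $k$. There is no real obstacle in this argument — \Autoref{k-exponential-in-dimension-measure} does all the work — and the only point demanding a little care is the bookkeeping around ``sufficiently large $d$'' and the two thresholds $d_0,d_1$ used to convert the hypothesis $k\in\poly(d)$ into the clean chain $k(d)\le C d^a\le 2^d$.
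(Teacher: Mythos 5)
Your proposal is correct and follows essentially the same route as the paper: invoke \Autoref{k-exponential-in-dimension-measure} to obtain $\varepsilon(d)\le\frac{\ln k(d)}{d}$ once $k(d)\le 2^d$ holds for sufficiently large $d$, then substitute the polynomial bound $k(d)\le Cd^a$ and absorb the $\ln C$ term into a slightly larger constant $b>a$ on $\ln d$. The only cosmetic difference is that you expand $\ln(Cd^a)=\ln C+a\ln d$ directly, while the paper uses the marginally looser bound $n\ln(Cd)$; both yield the identical conclusion and the same characterization of the hidden constant.
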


\begin{proof}
Since $k(d)\in\poly(d)$, then there are constants $C,n$ such that for sufficiently large $d$, we have $k(d)\leq Cd^n$ which for sufficiently large $d$ is less than $2^d$ so by \Autoref{k-exponential-in-dimension-measure}, for sufficiently large $d$ we have
\[
\epsilon(d)\leq\frac{\ln(k(d))}{d}\leq\frac{n\ln(Cd)}{d}\in O\left(\frac{\ln(d)}{d}\right).
\]

More specifically, for any $n'>n$ we have for large enough $d$ that $(n'-n)\ln(d)\geq n\ln(C)$, so for large enough $d$ we have
\[
\epsilon(d)
\leq \frac{n\ln(Cd)}{d}
= \frac{n[\ln(C)+\ln(d)]}{d}
\leq \frac{(n'-n)\ln(d)+n\ln(d)}{d}
= \frac{n'\ln(d)}{d}
\]
showing that the hidden constant can be taken to be any $n'$ larger than the degree $n$ of $k$.
\end{proof}

\subsection{Specific Norms}


While we won't do an elaborate analysis with any specific norms, we will at least mention how \Autoref{epsilon-measure-bounded-any-norm} evaluates when we consider some of the most common norms. For the $\ell_1$, $\ell_2$, and $\ell_\infty$ norms, the value of $v_d$ is respectively $\frac{2^d}{d!}$, $\frac{\pi^{d/2}}{\Gamma\left(\frac d2+1\right)}$, and $2^d$. Here, $\Gamma$ denotes the gamma-function which generalizes the factorial (specifically, for natural numbers $n$, $\Gamma(n+1)=n!$). The volumes of the $\ell_2$ and $\ell_\infty$ ball are well-known, and the volume of the $\ell_1$ ball can be be obtained from \cite{wang_volumes_2005}, or one can recognize that the $\ell_1$ unit ball is (disregarding boundaries) a disjoint union of $2^d$ copies of the standard simplex---one in each orthant, and each simplex has measure $\frac{1}{d!}$.

\begin{table}[h]
    \centering
    $\begin{array}{|c|ll|}\hline
    \text{Norm} & \text{Lower Bound}\\\hline
    \ell_1 & \left(1+\frac{2\varepsilon}{\left(M\cdot(d!)\right)^{1/d}}\right)^d &\approx \left(1+\varepsilon\frac{2 e}{M^{1/d}\cdot d}\right)^d \\\hline
    \ell_2 & \left(1+\frac{\varepsilon\sqrt{\pi}}{\left(M\cdot \Gamma\left(\frac d2+1\right)\right)^{1/d}}\right)^d &\approx \left(1+\varepsilon\frac{\sqrt{2\pi e}}{M^{1/d}\cdot \sqrt{d}}\right)^d \\\hline
    \ell_\infty & \left(1+\varepsilon\frac{2}{M^{1/d}}\right)^d & \\\hline
    \end{array}$
    \caption{Lower bounds evaluated for some common norms. The approximate value for $\ell_1$ uses Stirling's approximation to say that $(d!)^{1/d}\approx \left(\sqrt{2\pi d}\left(\frac d e\right)^d\right)^{1/d}\approx \frac de$. The approximate value for $\ell_2$ uses Stirling's approximation and $\Gamma\left(\frac d2+1\right)\approx \ceil{\frac d2}!$ to get $\left(\Gamma\left(\frac d2+1\right)\right)^{1/d} \approx \left(\ceil{\frac d2}!\right)^{1/d} \approx \left(\sqrt{2\pi \ceil{\frac d2}}\left(\frac {\ceil{\frac d2}} e\right)^{\ceil{\frac d2}}\right)^{1/d} \approx \left(\left(\frac {\ceil{\frac d2}} e\right)^{\ceil{\frac d2}}\right)^{1/d} \approx \sqrt{\frac{d}{2e}}$.}
    \label{tab:common-norm-lower-bounds}
\end{table}

The lower bounds on $k(d)$ based on $\epsilon(d)$ stated in \Autoref{epsilon-measure-bounded-any-norm} for these three specific norms can be found in \Autoref{tab:common-norm-lower-bounds}. The main observation that we want to make is that when using norms other than $\ell_\infty$, the factor of of $\varepsilon$ is no longer a constant, but a decreasing function of the dimension. This should not be too surprising since for $\ell_p$ norms which are not $\ell_\infty$, the unit $\ell_p$-ball is a subset of the unit $\ell_\infty$ ball, so the measure will be smaller and actually tend to $0$ as $d$ tends to $\infty$, so we should expect to not be able to intersect as many members of the partition.

\section{A Neighborhood Sperner/KKM/Lebesgue Theorem}\label{sec:NewSperner}

In this section, we restate and prove our neighborhood variant of the Sperner/KKM/Lebesgue result on the cube. The proof is illustrated in \Autoref{fig:sperner-kkm-coloring}.

\begin{figure}[p]
\sbox1{%
    \begin{subfigure}{.3\textwidth}
        \includegraphics[width=\textwidth]{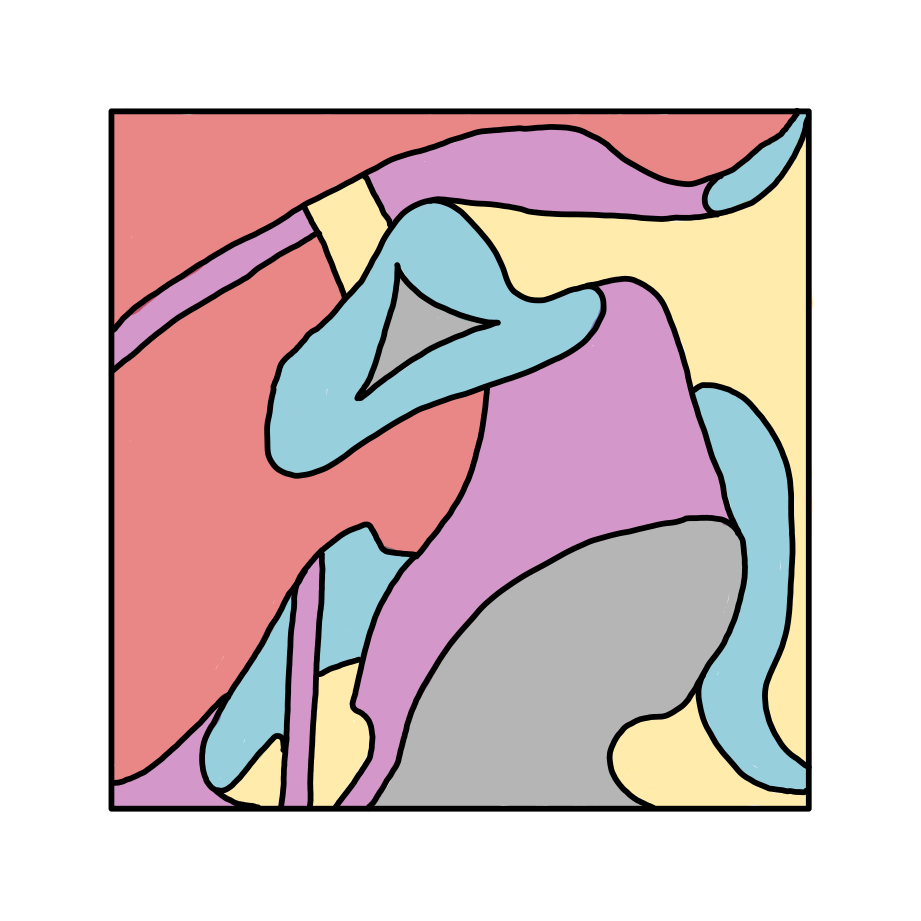}
        \subcaption{Initial coloring $\chi$}
        \label{subfig:1_non_spanning_coloring}
    \end{subfigure}
}
\sbox2{%
    \begin{subfigure}{.3\textwidth}
        \includegraphics[width=\textwidth]{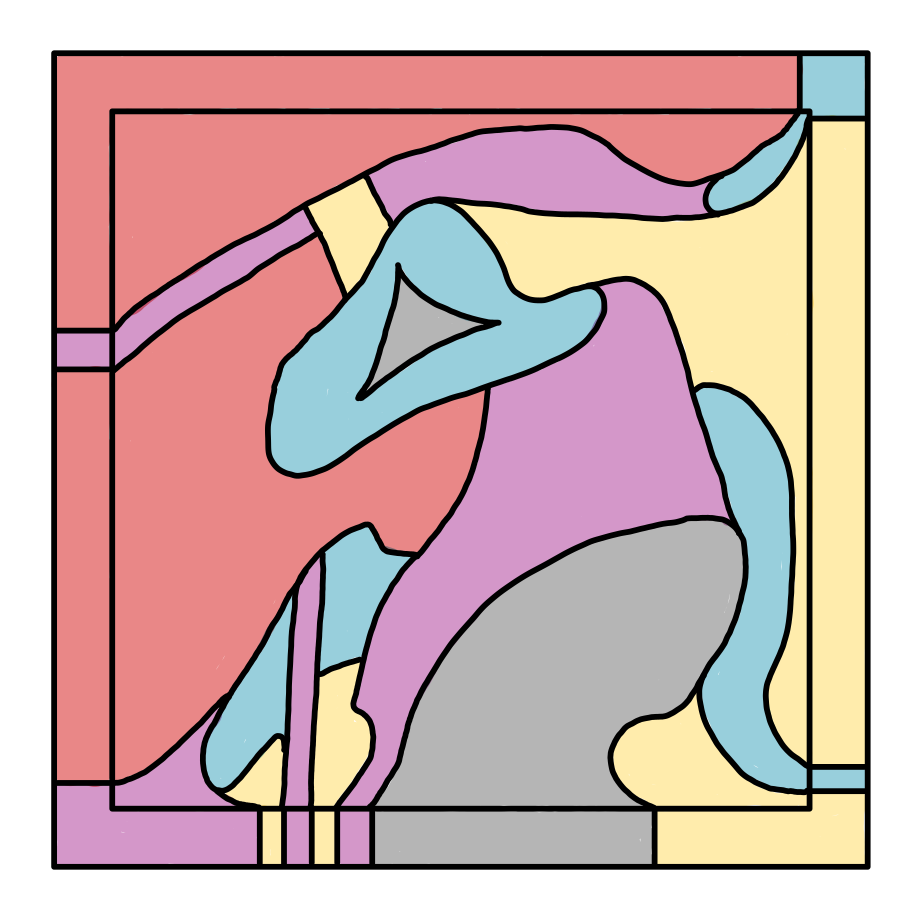}
        \subcaption{Extended coloring $\gamma$}
        \label{subfig:2_extended_coloring}
    \end{subfigure}
}
\sbox3{
    \begin{subfigure}{.3\textwidth}
        \includegraphics[width=\textwidth]{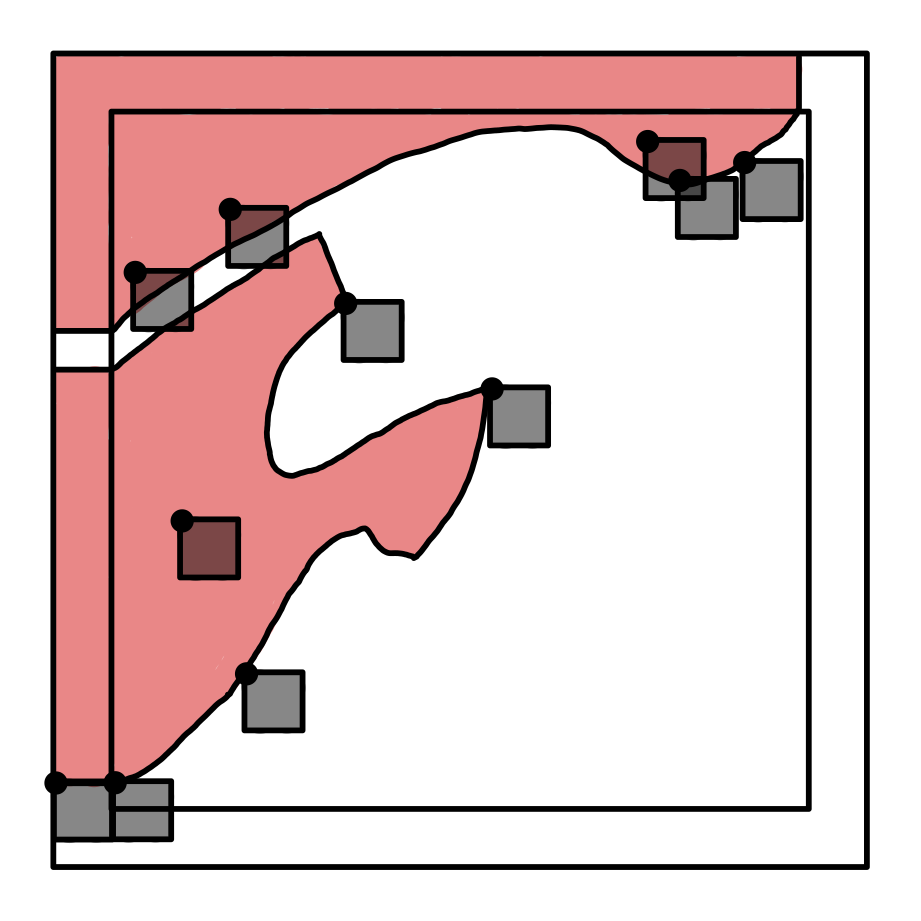}
        \subcaption{Red points ($Y_{\text{red}}$)}
        \label{subfig:3_red}
    \end{subfigure}
}
\sbox4{
    \begin{subfigure}{.3\textwidth}
        \includegraphics[width=\textwidth]{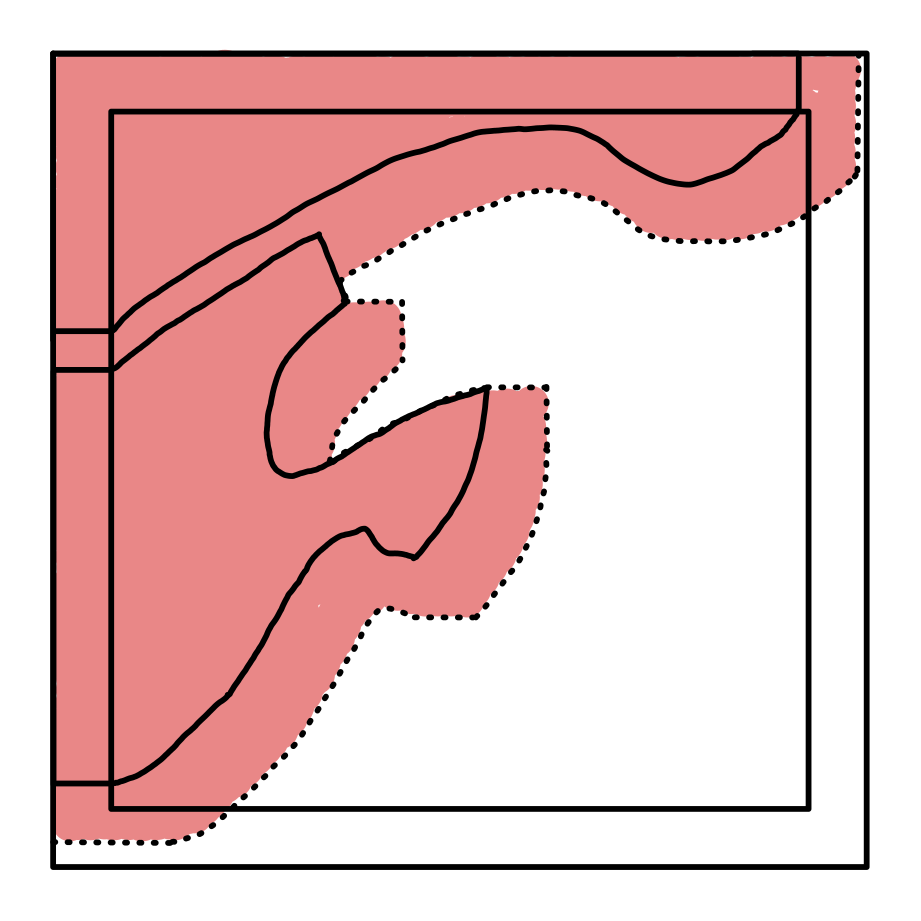}
        \subcaption{Ball added ($Y_{\text{red}}+B_{\vec{v}}$)}
        \label{subfig:4_red}
    \end{subfigure}
}
\sbox5{
    \begin{subfigure}{.3\textwidth}
        \includegraphics[width=\textwidth]{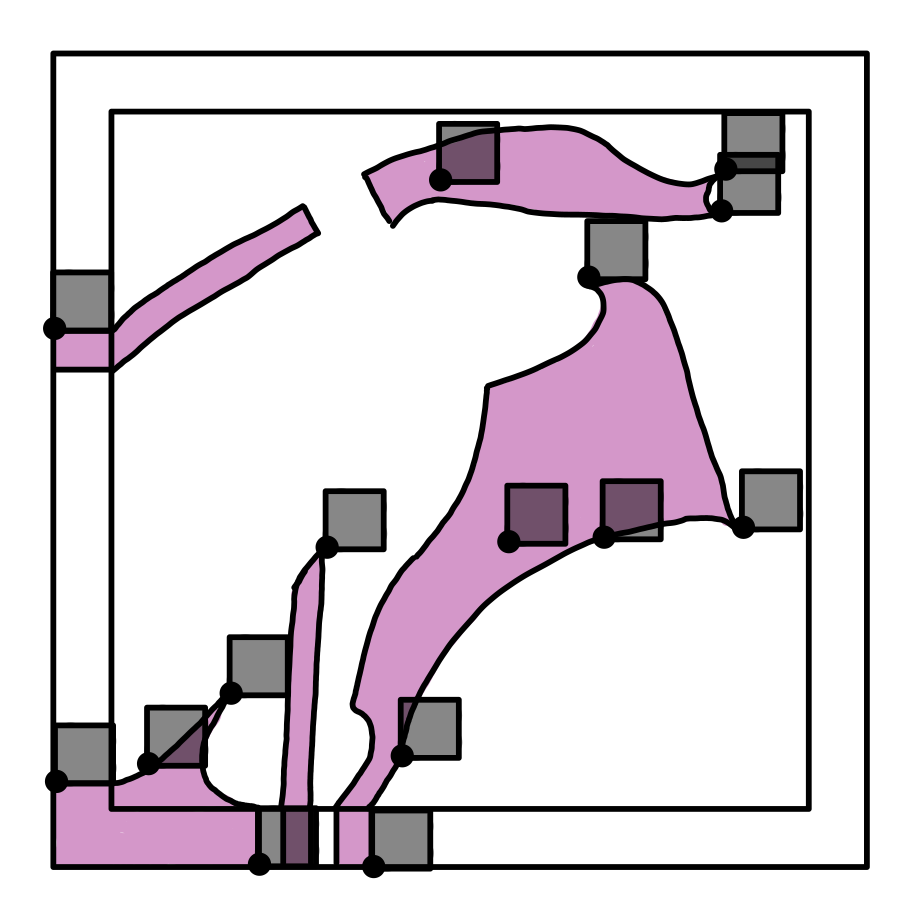}
        \subcaption{Purple points ($Y_{\text{purple}}$)}
        \label{subfig:5_purple}
    \end{subfigure}
}
\sbox6{
    \begin{subfigure}{.3\textwidth}
        \includegraphics[width=\textwidth]{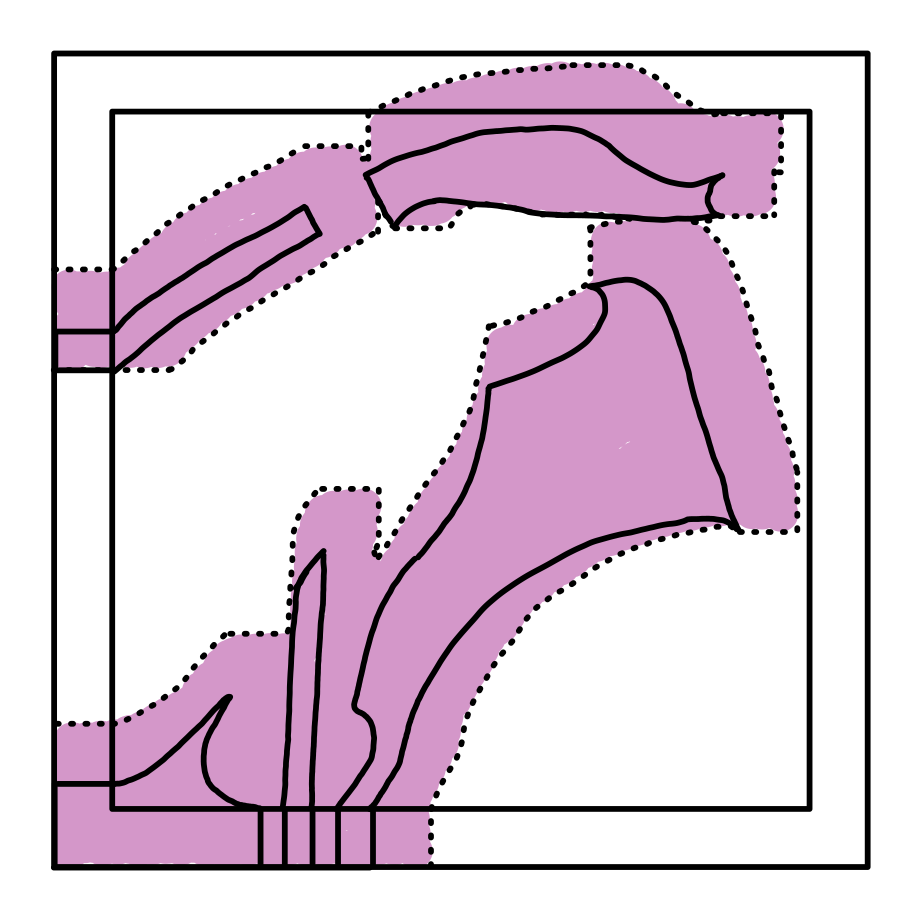}
        \subcaption{Ball added ($Y_{\text{purple}}+B_{\vec{v}}$)}
        \label{subfig:6_purple}
    \end{subfigure}
}
\sbox7{
    \begin{subfigure}{.3\textwidth}
        \includegraphics[width=\textwidth]{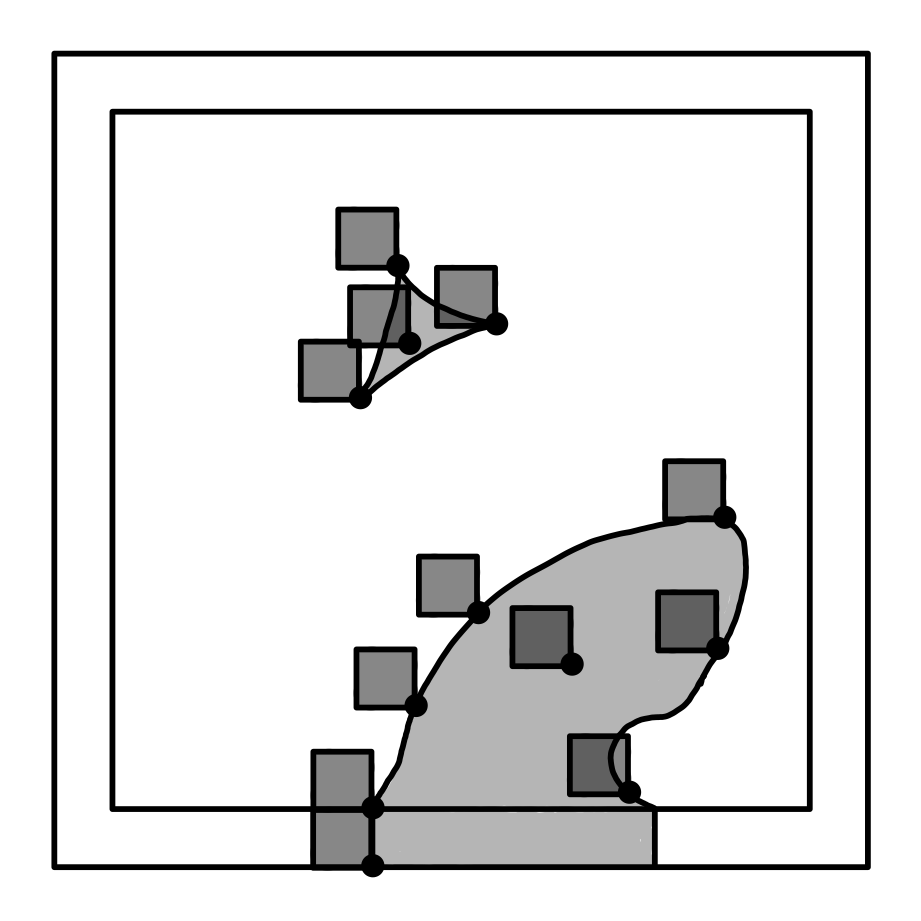}
        \subcaption{Gray points ($Y_{\text{gray}}$)}
        \label{subfig:7_gray}
    \end{subfigure}
}
\sbox8{
    \begin{subfigure}{.3\textwidth}
        \includegraphics[width=\textwidth]{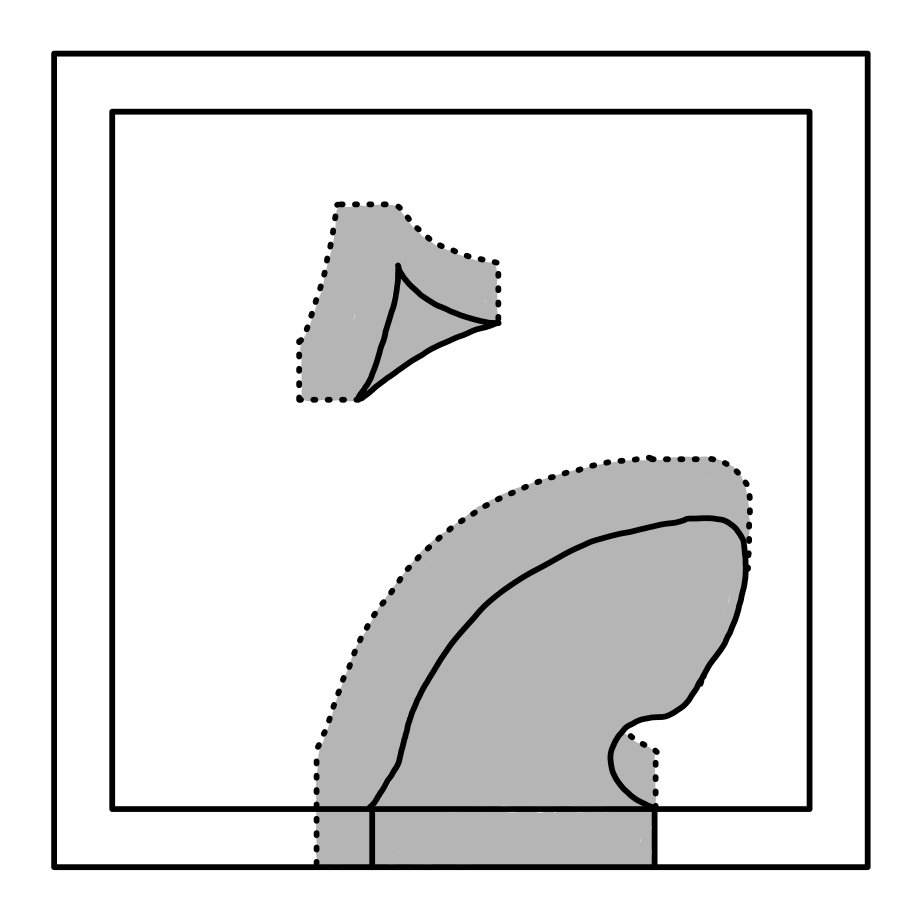}
        \subcaption{Ball added ($Y_{\text{gray}}+B_{\vec{v}}$)}
        \label{subfig:8_gray}
    \end{subfigure}
}
\sbox0{%
    \begin{subfigure}{.15\textwidth}
        \includegraphics[width=\textwidth]{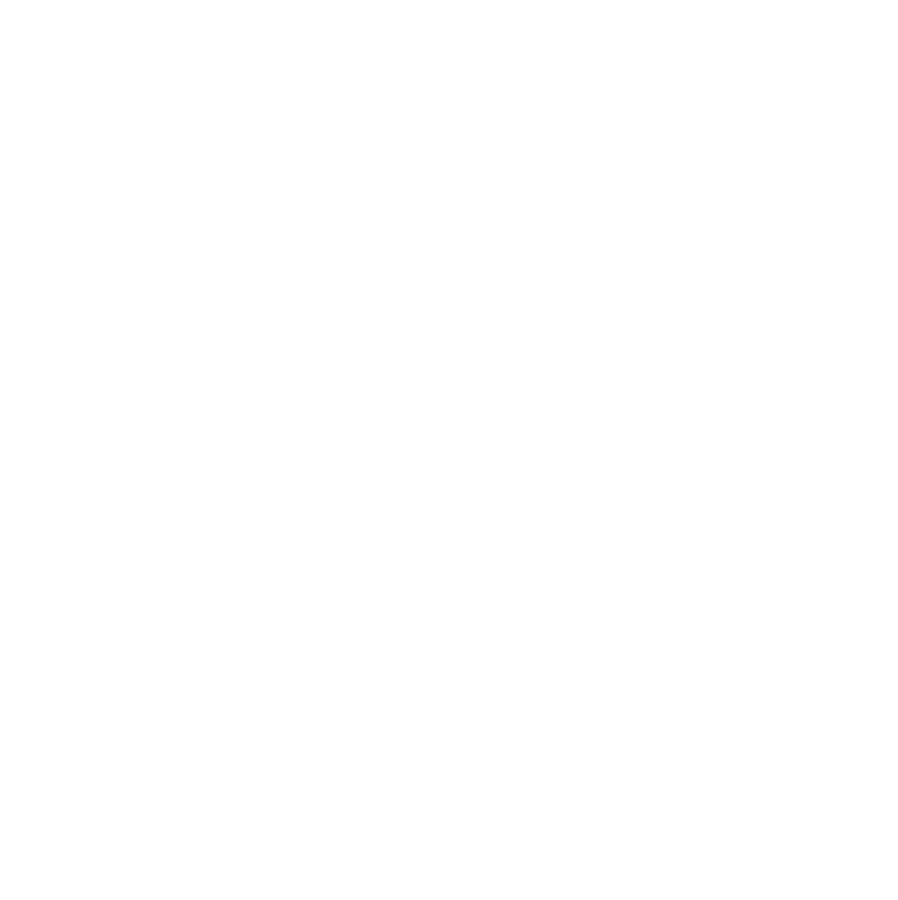}
    \end{subfigure}
}

\centering
\usebox0 \hfill \usebox1 \hfill \usebox2 \hfill \usebox0
\usebox3 \hfill \usebox5 \hfill \usebox7
\usebox4 \hfill \usebox6 \hfill \usebox8

\caption{
(\subref{subfig:1_non_spanning_coloring}) shows a (finite) coloring $\chi$ of the unit cube $[-\frac12,\frac12]^2$ such that no color includes points on opposite edges.
(\subref{subfig:2_extended_coloring}) shows the natural extension $\gamma$ of that coloring to $[-\frac12-\epsilon,\frac12+\epsilon]^2$. The extension is obtained by mapping each point $\vec{y}$ to the point $\vec{x}$ for which each coordinate value $y_i$ is restricted to be within $[-\frac12,\frac12]$, and then $\vec{y}$ is given whatever color $\vec{x}$ had.
(\subref{subfig:3_red}), (\subref{subfig:5_purple}), and (\subref{subfig:7_gray}) show three of the five colors and demonstrate that there is at least one quadrant of the $\epsilon$-ball that can be Minkowski summed with the color so that the sum remains a subset of the extended cube. For red it is the lower right quadrant, for purple it is the upper right, and for gray it could be the upper left (shown) or the upper right. (\subref{subfig:4_red}), (\subref{subfig:6_purple}), and (\subref{subfig:8_gray}) show the resulting Minkowski sum for each color. Utilizing the Brunn-Minkowski inequality, this set will have substantially greater area---by a factor of at least $(1+\frac{\epsilon}{1+\epsilon})^2$.
}

\label{fig:sperner-kkm-coloring}

\end{figure}

\restatableNewSpernerKKMLebesgue*
\begin{proof}
\renewcommand{\A}{\mathcal{A}}
For convenience, we will assume that the cube is $[-\frac12,\frac12]^d$ rather than $[0,1]^d$. Let $C$ be a set (of colors) and $\chi\colon[-\frac12,\frac12]^d\to C$ be such a coloring of the unit cube $[-\frac12,\frac12]^d$. Note that if $C$ has infinite cardinality then because we can cover the cube with finitely many $\epsilon$-balls one of them must intersect infinitely many color sets so the result is true. Thus we assume that $C$ has finite cardinality.

For each color $c\in C$ we will let $X_c$ denote the set of points assigned color $c$ by $\chi$---that is, $X_c=\chi^{-1}(c)$. Note that the hypothesis that no color includes points of opposite faces formally means that for every color $c\in C$, the set $X_c$ has the property that for each coordinate $i\in[d]$, the projection $\pi_i(X_c)=\set{x_i:\vec{x}\in X_c}$ does not contain both $-\frac12$ and $\frac12$.

The first step in the proof is to extend the coloring $\chi$ to the larger cube $[-\frac12-\epsilon,\frac12+\epsilon]^d$ in a natural way. Consider the following function $f$ which truncates points in the larger interval to be in the smaller interval:
\begin{align*}
&f\colon[-\tfrac12-\epsilon,\tfrac12+\epsilon]\to[-\tfrac12,\tfrac12] \\
&f(y)\defeq\begin{cases}
-\frac12 & y\leq-\frac12 \\
y        & y\in(-\frac12,\frac12) \\
\frac12 & y\geq\frac12
\end{cases}
\end{align*}
Let $\vec{f}\colon[-\tfrac12-\epsilon,\tfrac12+\epsilon]^d\to[-\tfrac12,\tfrac12]^d$ be the function which is $f$ in each coordinate: $\vec{f}(\vec{y})\defeq\langle f(y_i)\rangle_{i=1}^d$.

Now extend the coloring $\chi$ to the coloring $\gamma\colon[-\frac12-\epsilon,\frac12+\epsilon]^d\to C$ defined by
\[
\gamma(\vec{x}) \defeq \chi\left(\vec{f}\left(\vec{x}\right)\right).
\]

For each color $c\in C$, let $Y_c=\gamma^{-1}(c)$ denote the points assigned color $c$ by $\gamma$ and note that $Y_{c}\subseteq X_{c}$. Consistent with this notation, we will typically refer to a point in the unit cube as $\vec{x}$ and a point in the extended cube as $\vec{y}$.

We make the following claim which implies that for each color $c\in C$, the set $Y_c$ of points of that color in the extended coloring are contained in a set bounded away from one side of the extended cube $[-\frac12-\epsilon,\frac12+\epsilon]^d$ in each coordinate.

\begin{subclaim}\label{subclaim-1}
For each color $c\in C$ there exists an orientation $\vec{v}\in\set{-1,1}^d$ such that $Y_c\subseteq\prod_{i=1}^d v_i\cdot(-\tfrac12,\tfrac12+\epsilon]$.
\end{subclaim}
\begin{subclaimproof}
Fix an arbitrary coordinate $i\in[d]$. Note that for every $\vec{y}\in Y_c$ we have $f(\vec{y}\in X_c$ which is to say that the $\vec{y}$ has the same color in the extended coloring as $f(\vec{y})$ does in the original coloring (see justification\footnote{\label{footnote:matching-colors}
    For every $\vec{y}\in Y_c$ we have (by definition of $Y_c$) that $\gamma(\vec{y})=c$ and (by definition of $\gamma$) that $\gamma(\vec{y})=\chi(f(\vec{y}))$ showing that $\chi(f(\vec{y}))=c$ and thus (by definition of $X_c$) that $f(\vec{y})\in X_c$. 
}).

Note that if there is some $\vec{y}\in Y_c$ with $y_i\leq-\frac12$, then $f(y_i)=-\frac12$ so $\pi_i(X_c)\ni f(y_i)=-\frac12$. Similarly, if there is some $\vec{y}\in Y_c$ with $y_i\geq\frac12$, then $\pi_i(X_c)\ni\frac12$. Recall that by hypothesis, $\pi_i(X_c)$ does not contain both $-\frac12$ and $\frac12$ which means it is either the case that for all $\vec{y}\in Y_c$ we have $y_i>-\frac12$ (so $\pi_i(Y_c)\subseteq(-\frac12,\frac12+\epsilon]$) or it is the case that for all $\vec{y}\in Y_c$ we have $y_i<\frac12$ (so $\pi_i(Y_c)\subseteq[-\frac12-\epsilon,\frac12)$).

Thus we can choose $v_i\in\set{-1,1}$ such that $\pi_i(Y_c)\subseteq v_i\cdot(-\frac12,\frac12+\epsilon]$. Since this is true for each coordinate $i\in[d]$ we can select $\vec{v}\in\set{-1,1}^d$ such that
\[
Y_c \subseteq \prod_{i=1}^d\pi_i(Y_c) \subseteq \prod_{i=1}^d v_i\cdot(-\frac12,\frac12+\epsilon]
\]
as claimed.
\end{subclaimproof}

For an orientation $\vec{v}\in\set{-1,1}^d$, let $B_{\vec{c}}$ denote the set $B_{\vec{v}}\defeq\prod_{i=1}^d -v_i\cdot(0,\epsilon)$ which should be interpreted as an open orthant of the $\ell_\infty$ $\epsilon$-ball centered at the origin---specifically the orthant opposite the orientation $\vec{v}$. Building on \Autoref{subclaim-1}, we get the following:

\begin{subclaim}\label{subclaim-2}
For each color $c\in C$, there exists an orientation $\vec{v}\in\set{-1,1}^d$ such that $Y_c+B_{\vec{v}} \subseteq [-\tfrac12-\epsilon,\tfrac12+\epsilon]^d$.
\end{subclaim}
\begin{subclaimproof}
Let $\vec{v}$ be an orientation given in \Autoref{subclaim-1} for color $c$. We get the following chain of containments:
\begin{align*}
Y_c+B_{\vec{v}} &= Y_c + \left(\prod_{i=1}^d -v_i\cdot(0,\epsilon)\right) \tag{Def'n of $B_{\vec{v}}$} \\
&\subseteq \left( \prod_{i=1}^d v_i\cdot(-\tfrac12,\tfrac12+\epsilon] \right) + \left(\prod_{i=1}^d -v_i\cdot(0,\epsilon)\right) \tag{\Autoref{subclaim-1}} \\
&= \left( \prod_{i=1}^d v_i\cdot(-\tfrac12,\tfrac12+\epsilon] \right) + \left(\prod_{i=1}^d v_i\cdot(-\epsilon,0)\right) \tag{Factor a negative} \\
&= \prod_{i=1}^d v_i\cdot(-\tfrac12-\epsilon,\tfrac12+\epsilon) \tag{Minkowski sum of rectangles} \\
%
&\subseteq [-\tfrac12-\epsilon,\tfrac12+\epsilon]^d \tag{$v_i\in\set{-1,1}$}.
\end{align*}
This proves the claim.
\end{subclaimproof}

We also claim that $Y_c+B_{\vec{v}}$ has a substantial measure.

\begin{subclaim}\label{subclaim-3}
For each color $c\in C$ and any orientation $\vec{v}\in\set{-1,1}^d$, the set $Y_c+B_{\vec{v}}$ is Borel measurable and $m(Y_c+B_{\vec{v}})\geq m_{out}(Y_c) \cdot \left(1+\frac{\epsilon}{1+\epsilon}\right)^d$.
\end{subclaim}
\begin{subclaimproof}
Let $M=(1+\epsilon)^d$ which is the measure of $\prod_{i=1}^d v_i\cdot(-\tfrac12,\tfrac12+\epsilon]$, and because by \Autoref{subclaim-1}, $Y_c$ is a subset of this set, we have $m_{out}(Y_c)\leq M$.

We have that $Y_c+B_{\vec{v}}$ is Borel measurable and that $m\left(Y_c+B_{\vec{v}}\right)\geq \left(m_{out}(Y_c)^\frac1d+\epsilon\right)^d$ by \Autoref{outer-measure-brunn-minkowsi-bound} (see details\footnote{
    Note that for the $\ell_\infty$ norm, the measure of the unit ball is $v_{\norm{\cdot}_\infty,d}=2^d$. Then note that $B_{\vec{v}}$ is an open orthant of an $\epsilon$ ball with respect to $\ell_\infty$, so is in fact itself an $\frac\epsilon2$ ball with respect to $\ell_\infty$. This is why we get ``$\epsilon$'' instead of the $``2\epsilon$'' in \Autoref{outer-measure-brunn-minkowsi-bound}. We could translate this open ball to the origin and translate the set $Y_c$ accordingly to get the same Minkowski sum without changing the measures, and after doing so we could apply  \Autoref{outer-measure-brunn-minkowsi-bound} verbatim.
}). Thus, we have the following chain of inequalities:
\begin{align*}
    m(Y_c+B_{\vec{v}}) &\geq \left(m_{out}(Y_c)^{1/d} + \epsilon\right)^d \tag{Above}\\
    &= M\cdot \left(\frac{m_{out}(Y_c)^{1/d}}{M^{1/d}} + \frac{\epsilon}{M^{1/d}}\right)^d \tag{Factor out $M$} \\
    &\geq M\cdot \left(\frac{m_{out}(Y_c)}{M}\right) \cdot \left(1+\frac{\epsilon}{M^{1/d}}\right)^d \tag{\Autoref{brunn-minkowski-bound-lemma}} \\
    &= m_{out}(Y_c) \cdot \left(1+\frac{\epsilon}{1+\epsilon}\right)^d \tag{Simplify and use $M=(1+\epsilon)^d$}
\end{align*}
\end{subclaimproof}

Now, consider the indexed family $\A=\set{Y_c+B_{\vec{v}(c)}}_{c\in C}$ (where $\vec{v}(c)$ is an orientation for $c$ as in \Autoref{subclaim-1} and \Autoref{subclaim-2}) noting that it has finite cardinality because $C$ has finite cardinality. Considering the sum of measures of sets in $\A$, we have the following:
\begin{align*}
    \sum_{A\in\A}m(A) &= \sum_{c\in C}m\left(Y_c+B_{\vec{v}(c)}\right) \tag{Def'n of $\A$; measurability was shown above}\\
    &\geq \left(1+\frac{\epsilon}{1+\epsilon}\right)^d\cdot\sum_{c\in C}m_{out}(Y_c) \tag{\Autoref{subclaim-3} and linearity of summation}\\
    &\geq \left(1+\frac{\epsilon}{1+\epsilon}\right)^d\cdot m_{out}\left(\bigcup_{c\in C}Y_c\right) \tag{Countable/finite subaddativity of outer measures}\\
    &= \left(1+\frac{\epsilon}{1+\epsilon}\right)^d\cdot m_{out}\left([-\tfrac12-\epsilon,\tfrac12+\epsilon]^d\right) \tag{The $Y_c$'s partition $[-\tfrac12-\epsilon,\tfrac12+\epsilon]^d$}\\
    &= \left(1+\frac{\epsilon}{1+\epsilon}\right)^d\cdot (1+2\epsilon)^d \tag{Evaluate outer measure}\\
\end{align*}

By \Autoref{subclaim-2}, each member of $\A$ is a subset of $[-\frac12-\epsilon,\frac12+\epsilon]^d$, so by \Autoref{lower-bound-cover-number-Rd}, there exists a point $\vec{p}\in[-\tfrac12-\epsilon,\tfrac12+\epsilon]^d$ that belongs to at least
\[
\ceil*{\frac{\left(1+\frac{\epsilon}{1+\epsilon}\right)^d\cdot (1+2\epsilon)^d}{(1+2\epsilon)^d}} = \ceil*{\left(1+\frac{\epsilon}{1+\epsilon}\right)^d}
\]
sets in $\A$. That is, $\vec{p}$ belongs to $Y_c+B_{\vec{v}(c)}$ for at least $\ceil*{\left(1+\frac{\epsilon}{1+\epsilon}\right)^d}$ colors $c\in C$. For each such color $c$, it follows that $\vec{p}+(-\epsilon,\epsilon)^d$ intersects $Y_c$ (see justification\footnote{
    If $\vec{p}\in Y_c+B_{\vec{v}(c)} \subseteq Y_c + (-\epsilon,\epsilon)^d$, then by definition of Minkowski sum there exists $\vec{y}\in Y_c$ and $\vec{b}\in (-\epsilon,\epsilon)^d$ such that $\vec{p}=\vec{y}+\vec{b}$ so $Y_c\ni \vec{y}=\vec{p}-\vec{b}\in \vec{p}+(-\epsilon,\epsilon)^d$ demonstrating that these two sets contain a common point.
}). Note that with respect to the $\ell_\infty$ norm,  $\vec{p}+(-\epsilon,\epsilon)^d=\balloinf{\epsilon}{\vec{p}}$ showing that $\balloinf{\epsilon}{\vec{p}}$ contains points of at least $\ceil*{\left(1+\frac{\epsilon}{1+\epsilon}\right)^d}$ colors (according to the coloring of $\gamma$ since we are discussing sets $Y_c$).

What we really want, though, is a point in the unit cube that has this property rather than a point in the extended cube, and we want it with respect to the original coloring $\chi$ rather than the extended coloring $\gamma$. We will show that the point $\vec{f}\left(\vec{p}\right)$ suffices.

\begin{subclaim}\label{subclaim-4}
If $c\in C$ is a color for which $\balloinf{\epsilon}{\vec{p}}\cap Y_c \not=\emptyset$, then also $\balloinf{\epsilon}{\vec{f}\left(\vec{p}\right)}\cap X_c \not=\emptyset$.
\end{subclaim}
\begin{subclaimproof}
Let $\vec{y}\in \balloinf{\epsilon}{\vec{p}}\cap Y_{\vec{c}}$. Then because $\vec{y}\in \balloinf{\epsilon}{\vec{p}}$, we have $\norm{\vec{y}-\vec{p}}_\infty<\epsilon$, so for each coordinate $i\in[d]$, $\abs{y_i-p_i}<\epsilon$. It is easy to analyze the $9$ cases (or $3$ by symmetries) arising in the definition of $f$ to see that this implies $\abs{f(y_i)-f(p_i)}<\epsilon$ as well (i.e. $f$ maps pairs of values in its domain so that they are no farther apart), thus $\norm{\vec{f}\left(\vec{y}\right)-\vec{f}\left(\vec{p}\right)}_\infty<\epsilon$ and thus $\vec{f}\left(\vec{y}\right)\in \balloinf{\epsilon}{\vec{f}\left(\vec{p}\right)}$.

Also, as justified in a prior footnote$^{10}$, for any $\vec{y}\in Y_c$ we have $\vec{f}(\vec{y})\in X_c$ so that $\vec{f}\left(\vec{y}\right)\in\balloinf{\epsilon}{\vec{f}\left(\vec{p}\right)}\cap X_c$ which shows that the intersection is non-empty.
\end{subclaimproof}

Thus, because $\balloinf{\epsilon}{\vec{p}}$ intersects $Y_c$ for at least $\ceil*{\left(1+\frac{\epsilon}{1+\epsilon}\right)^d}$ choices of color $c\in C$, by \Autoref{subclaim-4} $\vec{f}\left(\vec{p}\right)$ is a point in the unit cube which intersects $X_c$ for at least $\ceil*{\left(1+\frac{\epsilon}{1+\epsilon}\right)^d}$ different colors $c\in C$. That is, this ball contains points from at least this many of the original color sets.

The final step in the proof of the theorem is to clean up the expression with an inequality. Note that $C$ must contain of at least $2^d$ colors because each of the $2^d$ corners of the unit cube must be assigned a unique color since any pair of corners belong to an opposite pair of faces on the cube. For this reason it is trivial that for $\epsilon>\frac12$ there is a point $\vec{p}$ such that $\balloinf{\epsilon}{\vec{p}}$ intersects at least $2^d$ colors: just let $\vec{p}$ be the midpoint of the unit cube. Thus, the only interesting case is $\epsilon\in(0,\frac12]$, and for such $\epsilon$ we have $1+\epsilon\leq\frac32$ and thus $\frac{\epsilon}{1+\epsilon}\geq\frac23\epsilon$ showing that $\left(1+\frac{\epsilon}{1+\epsilon}\right)^d\geq (1+\frac23\epsilon)^d$. This completes the proof of the theorem.
\end{proof}

\section{New Constructions}\label{sec:constructions}

The partitions that we construct in this section are of a very natural form: we build a partition of a large dimension $d$ space, by splitting up the coordinates into smaller sets, and separately partitioning each set of coordinates. In the end, the smaller partitions will be known partition constructions~\cite{geometry_of_rounding}. We will define the construction very generically. We need two basic results. The following observation notes that if a partition is $(k,\varepsilon)$-secluded, then we can {\em increase} $k$ to $k'$ and {\em decrease} $\varepsilon$ to $\varepsilon'$ and the partition is trivially $(k',\varepsilon')$-secluded.
\begin{observation}[Monotonicity in $k$ and $\varepsilon$]\label{monotonicity}
Let $d\in\N$, $k,k'\in\N$ with $k'\geq k$, $\varepsilon,\varepsilon'\in[0,\infty)$ with $\varepsilon'\leq\varepsilon$, and $\P$ a $(k,\varepsilon)$-secluded partition of $\R^d$. Then $\P$ is also a $(k',\varepsilon')$-secluded partition of $\R^d$.
\end{observation}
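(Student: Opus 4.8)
The plan is to unwind the definition of a secluded partition and observe that both relaxations — increasing the allowed degree and decreasing the tolerance — only make the defining condition easier to satisfy. Concretely, I would fix an arbitrary point $\vec{p}\in\R^d$ and compare the two balls $\ballcinf{\varepsilon'}{\vec{p}}$ and $\ballcinf{\varepsilon}{\vec{p}}$ that appear in the definition of seclusion for the parameters $(k',\varepsilon')$ and $(k,\varepsilon)$ respectively.

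The first key step is the containment $\ballcinf{\varepsilon'}{\vec{p}}\subseteq\ballcinf{\varepsilon}{\vec{p}}$, which is immediate from $\varepsilon'\leq\varepsilon$: any $\vec{q}$ with $\norm{\vec{q}-\vec{p}}_\infty\leq\varepsilon'$ also satisfies $\norm{\vec{q}-\vec{p}}_\infty\leq\varepsilon$. The second step is monotonicity of ``number of members of $\P$ intersected'' under set containment: if $\ballcinf{\varepsilon'}{\vec{p}}\subseteq\ballcinf{\varepsilon}{\vec{p}}$ and a member $X\in\P$ meets the smaller ball, it certainly meets the larger one, so the set of members of $\P$ meeting $\ballcinf{\varepsilon'}{\vec{p}}$ is a subset of the set of members meeting $\ballcinf{\varepsilon}{\vec{p}}$, hence has cardinality no larger. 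The third step simply chains the inequalities: since $\P$ is $(k,\varepsilon)$-secluded, $\ballcinf{\varepsilon}{\vec{p}}$ meets at most $k$ members, so $\ballcinf{\varepsilon'}{\vec{p}}$ meets at most $k\leq k'$ members. As $\vec{p}$ was arbitrary, $\P$ is $(k',\varepsilon')$-secluded.

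There is essentially no obstacle here; this is a routine observation included for later reference (it is used, e.g., when combining partitions of coordinate blocks of differing dimensions, so that one can standardize the degree and tolerance across blocks via \Autoref{monotonicity}). The only minor point worth a half-sentence is the edge case $\varepsilon'=0$: the ball $\ballcinf{0}{\vec{p}}=\set{\vec{p}}$ is a single point, and since $\P$ is a partition, $\set{\vec{p}}$ meets exactly one member, which is trivially at most $k'\geq k\geq 1$; the general argument above already covers this, so no separate case is actually needed.
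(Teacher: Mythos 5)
Your proof is correct and follows the same route as the paper's: unwind the definition, use the ball containment $\ballcinf{\varepsilon'}{\vec{p}}\subseteq\ballcinf{\varepsilon}{\vec{p}}$ to conclude monotonicity of the intersection count, and chain with $k\leq k'$. The paper states this in a single sentence; you have simply written out the same argument in more detail, and the $\varepsilon'=0$ remark is correct but, as you note, unnecessary.
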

\begin{proof}
Since $\P$ is $(k,\varepsilon)$-secluded, by definition every $\varepsilon$-ball intersects at most $k$ members of $\P$, so trivially every (no larger) $\varepsilon'$-ball intersects at most $k'\geq k$ members of $\P$.
\end{proof}

We will frequently refer to the above observation just using the phrase ``by monotonicity, $\P$ is $(k',\varepsilon')$-secluded"

\begin{fact}[Trivial $k$ for Unit Cube Partitions]\label{trivial-unit-cube-partition-seclusion}
Let $d\in\N$, $\varepsilon\in[0,\infty)$, and $\P$ be a unit cube partition of $\R^d$. Then $\P$ is $(k,\varepsilon)$-secluded for $k=\floor*{(2+2\varepsilon)^d}$.
\end{fact}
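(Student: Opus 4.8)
The plan is a short packing argument. Fix an arbitrary centre $\vec p\in\R^d$ and consider the closed $\ell_\infty$-ball $B=\ballcinf{\varepsilon}{\vec p}=\vec p+[-\varepsilon,\varepsilon]^d$; I must show that $B$ meets at most $\floor{(2+2\varepsilon)^d}$ members of $\P$. The case $\varepsilon=0$ is trivial, since then $B=\set{\vec p}$ meets exactly one member of the partition and $1\le 2^d$, so I assume $\varepsilon>0$.

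First I would localise every member of $\P$ that meets $B$ inside one bounded box. If $X\in\P$ meets $B$, choose $\vec x\in X\cap B$. Since $X$ is a unit cube it has $\ell_\infty$-diameter at most $1$, hence $X\subseteq\ballcinf{1}{\vec x}=\vec x+[-1,1]^d$; and $\vec x\in B$ gives $\vec x\in\vec p+[-\varepsilon,\varepsilon]^d$. Adding these containments, $X\subseteq\vec p+[-\varepsilon,\varepsilon]^d+[-1,1]^d=\vec p+[-(1+\varepsilon),1+\varepsilon]^d$; call this box $Q$. It is an axis-aligned cube of side length $2+2\varepsilon$, so $m(Q)=(2+2\varepsilon)^d$. (If one prefers, this containment can instead be quoted from \Autoref{diameter-ball} with $D=1$, together with the trivial identity $[-\varepsilon,\varepsilon]^d+[-1,1]^d=[-(1+\varepsilon),1+\varepsilon]^d$.)

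Next I would count by volume. Let $\mathcal M=\set{X\in\P:X\cap B\neq\emptyset}$. The members of $\mathcal M$ are pairwise disjoint (being members of a partition), each has Lebesgue measure exactly $1$ (a unit cube has measure $1$ regardless of which boundary faces it contains), and each is contained in $Q$. Hence for every finite subfamily $\mathcal M_0\subseteq\mathcal M$ we have $\abs{\mathcal M_0}=\sum_{X\in\mathcal M_0}m(X)=m\left(\bigsqcup_{X\in\mathcal M_0}X\right)\le m(Q)=(2+2\varepsilon)^d$. Consequently $\mathcal M$ is itself finite with $\abs{\mathcal M}\le(2+2\varepsilon)^d$, and since $\abs{\mathcal M}$ is a nonnegative integer, $\abs{\mathcal M}\le\floor{(2+2\varepsilon)^d}$. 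As $\vec p$ was arbitrary, this is precisely the statement that $\P$ is $(\floor{(2+2\varepsilon)^d},\varepsilon)$-secluded.

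There is essentially no obstacle here; the paper itself labels the statement a Fact. The only points needing (minor) care are the two one-line observations that a unit cube has $\ell_\infty$-diameter at most $1$ and Lebesgue measure exactly $1$ no matter which of its faces are included, and the routine bookkeeping that upgrades ``every finite subfamily of $\mathcal M$ has size $\le(2+2\varepsilon)^d$'' to ``$\mathcal M$ is finite of size $\le(2+2\varepsilon)^d$''. Everything else is a single elementary measure computation.
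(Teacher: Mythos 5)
Your proof is correct and follows essentially the same approach as the paper: localize any cube of $\P$ that meets $\ballcinf{\varepsilon}{\vec p}$ inside the fixed box $\vec p+[-(1+\varepsilon),1+\varepsilon]^d$ using the diameter bound, then count by volume since unit cubes are disjoint and have measure~$1$. Your write-up is slightly more explicit than the paper's on the step upgrading ``every finite subfamily has size $\le(2+2\varepsilon)^d$'' to ``the whole family is finite,'' but the argument is the same.
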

\begin{proof}
Consider any point $\vec{p}\in\R^d$. Observe that any $X\in\P$, $X$ is a unit cube, so $\diaminf(X)=1$, so if $X$ intersects $\ballcinf{\epsilon}{\vec{p}}$, then $X\subseteq\ballcinf{1+\epsilon}{\vec{p}}$.

Because (1) each $X\in\P$ has measure $1$, and (2) every pair of members are disjoint (because $\P$ is a partition), and (3) the measure of $\ballcinf{1+\varepsilon}{\vec{p}}=\vec{p}+[-1-\epsilon,1+\epsilon]^d$ is $(2+2\varepsilon)^d$, it follows that at most $\floor*{(2+2\varepsilon)^d}$ members of $\P$ are a subset of $\ballcinf{1+\epsilon}{\vec{p}}$ and thus at most $\floor*{(2+2\varepsilon)^d}$ members of $\P$ intersect $\ballcinf{\epsilon}{\vec{p}}$ which shows that $\P$ is $(k,\varepsilon)$-secluded for $k=\floor*{(2+2\varepsilon)^d}$ as claimed.
\end{proof}

\subsection{Construction}

\begin{definition}[Partition Product]\label{defn:partition-product}
Let $d_1,\ldots,d_n\in\N$ and $\P_1,\ldots,\P_n$ be partitions of $\R^{d_1},\ldots,\R^{d_n}$ respectively. Letting $d=\sum_{i=1}^{n}d_n$ we define the product partition of $\R^d$ as
\[
    \prod_{i=1}^{n}\P_i\defeq\set{\prod_{i=1}^{n}X_i\colon X_i\in\P_i}
\]
where $\prod_{i=1}^{n}X_i$ is viewed as a subset of $\R^d$.
\end{definition}

We specifically stated that $\prod_{i=1}^{n}X_i$ is viewed as a subset of $\R^d$, because technically it is a subset of $\prod_{i=1}^n \R^{d_i}$, but this is naturally isomorphic to $\R^d=\R^{\sum_{i=1}^n d_i}$.
For example, technically, if $d_1=d_2=d_3=2$, then the elements of $\prod_{i=1}^n \R^{d_i}$ are of the form $\langle \langle x_1, x_2\rangle, \langle x_3, x_4\rangle, \langle x_5, x_6\rangle \rangle$, but this is trivially isomorphic to $\R^6$ by instead considering the element as $\langle x_1, x_2, x_3, x_4, x_5, x_6\rangle$.

Also observe (shown below) that if the original partitions were unit cube partitions, then the product partition is also a unit cube partition. 

\begin{fact}[Unit Cube Preservation]\label{unit-cube-preservation}
If $d_1,\ldots,d_n\in\N$ and $\P_1,\ldots,\P_n$ are {\em unit cube} partitions of $\R^{d_1},\ldots,\R^{d_n}$ respectively, then $\prod_{i=1}^n \P_i$ is also a unit cube partition.
\end{fact}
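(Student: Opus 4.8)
The plan is simply to unwind the definitions. Recall that a \emph{unit cube partition} of $\R^m$ is a partition all of whose members are unit cubes, where a unit cube in $\R^m$ is a translate of a product of $m$ half-open unit intervals, i.e.\ a set of the form $\prod_{j=1}^{m}[a_j,a_j+1)$ for some $a_1,\dots,a_m\in\R$ (the same half-open hypercubes as in the standard grid partition discussed in the introduction). With this in hand, the entire content of the fact is that concatenating a unit cube of $\R^{d_1}$, a unit cube of $\R^{d_2}$, \dots, and a unit cube of $\R^{d_n}$ yields, under the natural identification $\prod_{i=1}^{n}\R^{d_i}\cong\R^{d}$ with $d=\sum_{i=1}^{n}d_i$, again a product of $d$ half-open unit intervals.

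Concretely, first I would fix an arbitrary member $Z$ of $\prod_{i=1}^{n}\P_i$. By Definition~\ref{defn:partition-product}, $Z=\prod_{i=1}^{n}X_i$ for some choice of $X_i\in\P_i$. Since each $\P_i$ is a unit cube partition, each $X_i$ is a unit cube in $\R^{d_i}$, so $X_i=\prod_{j=1}^{d_i}I_{i,j}$ for suitable half-open unit intervals $I_{i,j}\subseteq\R$.

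Next I would observe that, under the natural isomorphism $\prod_{i=1}^{n}\R^{d_i}\cong\R^{d}$ used in Definition~\ref{defn:partition-product}, the set $Z$ is exactly $\prod_{i=1}^{n}\prod_{j=1}^{d_i}I_{i,j}$, which is a product of precisely $d=\sum_{i=1}^{n}d_i$ half-open unit intervals, hence a unit cube in $\R^d$. As $Z$ was an arbitrary member of $\prod_{i=1}^{n}\P_i$, every member of $\prod_{i=1}^{n}\P_i$ is a unit cube; and since $\prod_{i=1}^{n}\P_i$ is already a partition of $\R^d$ by Definition~\ref{defn:partition-product}, it is a unit cube partition.

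There is essentially no real obstacle here; the only point requiring any care is the bookkeeping around the identification $\prod_{i=1}^{n}\R^{d_i}\cong\R^d$ and the elementary claim that concatenating tuples turns a product of products of intervals into a single product with the index sets concatenated. If one wants to be fully rigorous about that last claim, it can be reduced to the $n=2$ case and iterated: a product $A\times B$ with $A\subseteq\R^{a}$ and $B\subseteq\R^{b}$ each a product of half-open unit intervals is a product of $a+b$ half-open unit intervals in $\R^{a+b}$, and one then inducts on $n$.
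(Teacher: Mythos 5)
Your proof is correct and takes essentially the same approach as the paper: a member of the product partition is $\prod_{i=1}^n X_i$ with each $X_i$ a product of half-open unit intervals, hence the product is again a product of half-open unit intervals under the natural identification $\prod_{i=1}^n\R^{d_i}\cong\R^d$. You simply spell out the bookkeeping that the paper compresses into one sentence.
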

\begin{proof}
Each member of $\prod_{i=1}^n \P_i$ is of the form $\prod_{i=1}^{n}X_i$ where $X_i\in\P_i$. Since $\P_i$ is a unit cube partition, each $X_i$ is a product of translates of $[0,1)$, and thus $\prod_{i=1}^{n}X_i$ is also a product of translates of $[0,1)$, so the member is a unit cube.
\end{proof}

We can now present the main result of this section which is that if we take a product of partitions, and we have a guarantee for each $\P_i$ that it is $(k_i,\varepsilon_i)$-secluded, then we can guarantee the product partition is $(k,\varepsilon)$-secluded where $k$ is the product of the $k_i$'s and $\varepsilon$ is the minimum of the $\varepsilon_i$'s.

\begin{proposition}[Product Partition Seclusion Guarantees]\label{secluded-partition-product-proposition}
Let $n\in\N$. For each index $i\in[n]$, let $d_i,k_i\in\N$, $\varepsilon_i\in(0,\infty)$ and $\P_i$ be a $(k_i, \varepsilon_i)$-secluded partition of $\R^{d_i}$. Then the product partition $\P=\prod_{i=1}^{n}\P_i$ is a $(k,\varepsilon)$-secluded partition of $\R^d$ where $d=\sum_{i=1}^{n}d_i$, and $k=\prod_{i=1}^{n}k_i$, and $\varepsilon=\min_{i\in[n]}\varepsilon_i$.
\end{proposition}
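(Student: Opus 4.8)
The plan is to exploit the fact that closed $\ell_\infty$ balls factor as products across coordinate blocks, and that for product sets with nonempty factors, intersection commutes with the product operation. First I would fix an arbitrary point $\vec{p}\in\R^d$ and write it as $\vec{p}=(\vec{p}_1,\dots,\vec{p}_n)$ with $\vec{p}_i\in\R^{d_i}$, according to the identification $\R^d\cong\prod_{i=1}^n\R^{d_i}$ used in \Autoref{defn:partition-product}. The first key observation is the factorization $\ballcinf{\varepsilon}{\vec{p}}=\prod_{i=1}^n\ballcinf{\varepsilon}{\vec{p}_i}$, which is immediate from the $\ell_\infty$ norm being a coordinatewise maximum. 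The second observation is purely set-theoretic: for any $X_i,B_i\subseteq\R^{d_i}$ one has $\bigl(\prod_i X_i\bigr)\cap\bigl(\prod_i B_i\bigr)=\prod_i(X_i\cap B_i)$, and a product of sets is nonempty exactly when every factor is nonempty. Consequently a member $X=\prod_{i=1}^n X_i$ of $\P$ (with $X_i\in\P_i$) satisfies $X\cap\ballcinf{\varepsilon}{\vec{p}}\neq\emptyset$ if and only if $X_i\cap\ballcinf{\varepsilon}{\vec{p}_i}\neq\emptyset$ for every $i\in[n]$.

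Next I would invoke monotonicity (\Autoref{monotonicity}): since $\varepsilon=\min_{i\in[n]}\varepsilon_i\leq\varepsilon_i$ and $\P_i$ is $(k_i,\varepsilon_i)$-secluded, it is also $(k_i,\varepsilon)$-secluded, so the set $\mathcal{T}_i\defeq\set{X_i\in\P_i\colon X_i\cap\ballcinf{\varepsilon}{\vec{p}_i}\neq\emptyset}$ satisfies $\abs{\mathcal{T}_i}\leq k_i$. By the previous paragraph, the members of $\P$ that meet $\ballcinf{\varepsilon}{\vec{p}}$ are precisely the sets $\prod_{i=1}^n X_i$ with $X_i\in\mathcal{T}_i$ for all $i$. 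Finally, because every member of a partition is nonempty, the map $(X_1,\dots,X_n)\mapsto\prod_{i=1}^n X_i$ is injective---each $X_i$ is recovered by projecting the product onto the $i$-th block of coordinates---so the number of members of $\P$ meeting $\ballcinf{\varepsilon}{\vec{p}}$ is at most $\prod_{i=1}^n\abs{\mathcal{T}_i}\leq\prod_{i=1}^n k_i=k$. Since $\vec{p}\in\R^d$ was arbitrary, $\P$ is $(k,\varepsilon)$-secluded, which is the claim.

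There is no deep obstacle in this argument; the only step that warrants a moment's care is the injectivity of the product map, which is where nonemptiness of partition members is used (without it the factor sets could not be recovered from the product and the count could collapse). Everything else is the elementary bookkeeping of the two commutation facts for products together with a single application of \Autoref{monotonicity}.
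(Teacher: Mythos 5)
Your proof is correct and follows essentially the same route as the paper's own proof sketch: decompose the $\ell_\infty$ ball as a product across the coordinate blocks, observe that a product member meets the ball if and only if each factor meets the corresponding block ball, apply monotonicity so each factor is one of at most $k_i$ members, and multiply. You merely make explicit some details the paper leaves implicit (the set-theoretic product/intersection commutation and the injectivity of the product map); note, though, that injectivity is not actually needed for the \emph{upper} bound, since the number of distinct images can only be at most the number of tuples, so a ``collapse'' would only help.
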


\begin{proof}[Proof Sketch]
The basic idea is that for any point $\vec{p}\in\R^d$, we consider how many members of $\P$ intersect $\closure{B}_{\varepsilon}(\vec{p})$. 
Conceptually\footnote{In other words we identify the set $\R^d$ with $\R^{d_1}\times\R^{d_2}\times\cdots\times\R^{d_{n-1}}\times\R^{d_n}$}, we think of $\vec{p}$ as a sequence $\langle \vec{p}^{(i)} \rangle_{i=1}^{n}$ of $n$ points where the $i$th point $\vec{p}^{(i)}$ belongs to $\R^{d_i}$. 
Because we are working with the $\ell_\infty$ norm (that is the norm used by the definition of secluded), the $\varepsilon$ ball around $\vec{p}$ is the product of the $\varepsilon$ balls around each $\vec{p}^{(i)}$ which is smaller than the product of $\varepsilon_i$ balls around each $\vec{p}^{(i)}$ because we chose $\varepsilon$ as the minimum size. 
Thus, if the $\varepsilon$ ball around $\vec{p}$ intersects a member $X$ of the partition $\P$, then conceptually viewing $X$ as a sequence $\langle X_i \rangle_{i=1}^{n}$ where $X_i$ is a member of $\P_i$, it must be for each $i\in[n]$ that the $\varepsilon$ ball around $\vec{p}^{(i)}$ intersects $X_i$ (and thus so does the $\varepsilon_{i}$ ball since $\varepsilon_{i}\geq\varepsilon$). This means (for each $i\in[n]$) that $X_i$ is one of at most $k_i$ members of $\P_i$ because at most $k_i$ members of $\P_i$ intersect the $\varepsilon_i$ ball around $\vec{p}^{(i)}$ (by definition of $\P_i$ being $(k_i,\varepsilon_i)$-secluded). Thus $X$ is one of at most $\prod_{i=1}^{n}k_i=k$ members of $\P$. That is, there are at most $k$ members of $\P$ that intersect the $\varepsilon$ ball around $\vec{p}$ which is the definition of $\P$ being $(k,\varepsilon)$-secluded.
\end{proof}

Utilizing the construction above, we will now take a unit cube partition of \cite{geometry_of_rounding} for each $\R^{d_i}$ and take the product to obtain a new partition. Since the dimension of each $d_i$ is smaller than the dimension $d$, this allows us to get a larger value of $\varepsilon_i$ for each partition, and thus a larger value of $\varepsilon$ for the partition of $\R^d$ than if we had used one of the original partitions. The price we pay for this is that the value of $k$ also increases. The following result is nothing more than \Autoref{secluded-partition-product-proposition} where each partition in the product is specifically one of the partitions of \cite{geometry_of_rounding}. 

\restatableGlueConstruction*

\begin{proof}
Fix $d\in\N$. Let $d'=f(d)$ and $n=\ceil{\frac{d}{f(d)}}=\ceil{\frac{d}{d'}}$. Let $\P'$ be a $(d'+1, \frac{1}{2d'})$-secluded unit cube partition of $\R^{d'}$ (use the results of \cite{geometry_of_rounding}). 

By \Autoref{secluded-partition-product-proposition} and \Autoref{unit-cube-preservation}, $\P=\prod_{i=1}^n\P'$ is a $(k,\varepsilon)$-secluded unit cube partition of $\R^{n\cdot d'}$ where $k=(d'+1)^n$ and $\varepsilon=\frac{1}{2d'}$. Since $n\cdot d'=\ceil{\frac{d}{d'}}\cdot d'\geq d$, this trivially (by ignoring extra coordinates) gives a partition of $\R^d$ with these same properties (alternatively, see footnote\footnote{An alternate perspective is to let $d_1,\ldots,d_n$ be such that $\sum_{i=1}^n d_i=d$ and the first portion of the list $d_i=d'$, and the second portion of the list $d_i=d''\defeq d'-1$. Then let $\P'$ a $(d'+1,\frac{1}{2d'})$-secluded partition of $\R^{d'}$ as before, and let $\P''$ a $(d''+1,\frac{1}{2d''})$-secluded partition of $\R^{d''}$. Since $d''<d'$, $\P''$ is (by monotonicity) a $(d'+1,\frac{1}{2d'})$-secluded partition. Then take $\P_i=\P'$ when $d_i=d'$ and $\P_i=\P''$ when $d_i=d''$. Again, we get that $\P$ is $(k,\varepsilon)$-secluded for $k=(d'+1)^n$ and $\varepsilon=\frac{1}{2d'}$.}).
Recalling the definitions of $d'=f(d)$ and $n=\ceil{\frac{d}{f(d)}}$ gives the stated result.
\end{proof}

The above construction is very general. However, we can instantiate with various choices of parameters to get the following theorem. As discussed in the introduction, for these constructions the tolerance parameter $
\varepsilon(d)$ achieved is optimal up to a $O(\ln d)$ factor.  Below, $\weaksubexp(d)$ is $2^{o(d)}$.

\begin{restatable}{theorem}{restatableAchievableKEpsilonAllTogether}\label{achievable-k-epsilon-all-together}
Let $\varepsilon:\N\to(0,\infty)$. Then there exists $k:\N\to\N$ such that for every $d\in\N$ there exists a $(k(d),\varepsilon(d))$-secluded unit hypercube partition of $\R^d$, and $k$ has the following properties:
\begin{enumerate}
 \item\label{achieve-exp} If $\varepsilon(d)\in O(1)$, then $k(d)\in\eexp(d)$
   \item\label{achieve-weaksubexp} If $\varepsilon(d)\in o(1)$, then $k(d)\in\weaksubexp(d)$
  \item\label{achieve-poly} If $\varepsilon(d)\in O(\frac1d)$, then $k(d)\in\poly(d)$
 
\end{enumerate}
\end{restatable}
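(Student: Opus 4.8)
This result is a repackaging of \Autoref{basic-reclusive-gluing} together with the monotonicity facts \Autoref{monotonicity} and \Autoref{trivial-unit-cube-partition-seclusion}; the whole content is to choose the function $f$ fed into \Autoref{basic-reclusive-gluing} as a suitable function of the given $\varepsilon$. \Autoref{basic-reclusive-gluing} hands us a $\bigl((f(d)+1)^{\ceil{d/f(d)}},\ \tfrac{1}{2f(d)}\bigr)$-secluded unit cube partition of $\R^d$, and to turn this into one of tolerance exactly $\varepsilon(d)$ I would want $\tfrac{1}{2f(d)}\ge\varepsilon(d)$, i.e.\ $f(d)\le\tfrac{1}{2\varepsilon(d)}$, so that \Autoref{monotonicity} can lower the tolerance down to $\varepsilon(d)$; I would \emph{also} cap $f(d)$ at $d$, because as soon as $f(d)\ge d$ the exponent $\ceil{d/f(d)}$ becomes $1$ and the degree is just $f(d)+1$, which blows up uncontrollably when $\varepsilon(d)$ is extremely small. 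Concretely: for each $d$ with $\varepsilon(d)<\tfrac12$ I would put
\[
f(d) \;:=\; \min\Bigl\{\, d,\ \bigl\lfloor \tfrac{1}{2\varepsilon(d)} \bigr\rfloor \,\Bigr\},
\]
observe that $1\le f(d)\le\tfrac{1}{2\varepsilon(d)}$ (the left inequality because $\varepsilon(d)<\tfrac12$ forces $\lfloor\tfrac{1}{2\varepsilon(d)}\rfloor\ge1$), feed it to \Autoref{basic-reclusive-gluing} and then apply \Autoref{monotonicity} to get a $(k(d),\varepsilon(d))$-secluded unit cube partition with $k(d):=(f(d)+1)^{\ceil{d/f(d)}}$; for $d$ with $\varepsilon(d)\ge\tfrac12$ I would instead take the standard grid, which by \Autoref{trivial-unit-cube-partition-seclusion} is $(k(d),\varepsilon(d))$-secluded with $k(d):=\lfloor(2+2\varepsilon(d))^d\rfloor$. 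This defines $k$ and a partition for every $d\in\N$, and it remains only to read off the three asymptotic regimes.

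\textbf{The one estimate worth isolating.} I would prove the crude uniform bound $k(d)\le 4^d$ in the main case ($\varepsilon(d)<\tfrac12$): since $d/f(d)\ge1$ we have $\ceil{d/f(d)}\le 2d/f(d)$, and since $f+1\le 2^f$ for every integer $f\ge1$ we have $(f(d)+1)^{1/f(d)}\le2$, hence $k(d)\le(f(d)+1)^{2d/f(d)}=\bigl[(f(d)+1)^{1/f(d)}\bigr]^{2d}\le 4^d$. This immediately gives item~\ref{achieve-exp}: if $\varepsilon(d)\in O(1)$, say $\varepsilon(d)\le c$, then $k(d)\le4^d$ in the main case and $k(d)\le(2+2c)^d$ in the grid case, so $k(d)\in\eexp(d)$.

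\textbf{The two sharper regimes.} For item~\ref{achieve-weaksubexp}: if $\varepsilon(d)\in o(1)$ then eventually $\varepsilon(d)<\tfrac12$ and $\tfrac{1}{2\varepsilon(d)}\to\infty$, so $f(d)\to\infty$ (both arguments of the $\min$ tend to infinity). Taking logarithms, $\ln k(d)=\ceil{d/f(d)}\cdot\ln(f(d)+1)\le \tfrac{d\ln(f(d)+1)}{f(d)}+\ln(f(d)+1)$; since $f(d)\to\infty$ the first summand is $d\cdot o(1)=o(d)$ and the second is at most $\ln(d+1)=o(d)$, so $\ln k(d)=o(d)$, i.e.\ $k(d)\in\weaksubexp(d)$. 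For item~\ref{achieve-poly}: if $\varepsilon(d)\le c/d$ (enlarge $c$ so $c\ge1$) then for large $d$ we are in the main case and $\lfloor\tfrac{1}{2\varepsilon(d)}\rfloor\ge\lfloor\tfrac{d}{2c}\rfloor$, which is $\le d$, so $f(d)\ge\lfloor\tfrac{d}{2c}\rfloor\ge\tfrac{d}{4c}$ (for $d\ge4c$), whence $\ceil{d/f(d)}\le 4c+1$ and $k(d)\le(d+1)^{4c+1}\in\poly(d)$.

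\textbf{Where the care goes.} Nothing above is deep; the only subtlety is making the estimates hold \emph{uniformly} in $d$ in the last two regimes. In the $o(1)$ case the active branch of the $\min$ defining $f(d)$ may alternate with $d$, so I would avoid a case split and instead run the whole argument through the single fact that $f(d)\to\infty$ regardless of which branch is active; and I would be careful to keep the $\min\{d,\cdot\}$ cap, since without it the stated conclusion actually fails for pathologically small $\varepsilon$.
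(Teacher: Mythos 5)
Your proposal is correct and follows exactly the approach the paper intends: the paper states this theorem immediately after \Autoref{basic-reclusive-gluing} with only the remark ``we can instantiate with various choices of parameters to get the following theorem,'' and does not supply an explicit proof, so your choice of $f(d)=\min\{d,\lfloor\tfrac{1}{2\varepsilon(d)}\rfloor\}$ together with \Autoref{monotonicity} (and \Autoref{trivial-unit-cube-partition-seclusion} for the large-$\varepsilon$ fallback) is precisely the instantiation the authors had in mind. The one point worth flagging as genuinely essential, which you correctly identified, is the cap $f(d)\le d$: without it the exponent $\ceil{d/f(d)}$ collapses to $1$ but the base $f(d)+1$ is uncontrolled, and item~\ref{achieve-weaksubexp} would fail for rapidly decaying $\varepsilon$.
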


If all partitions $\P_1,\ldots,\P_n$ are ``efficiently computable'' in the sense that given an arbitrary point, $\vec{x}\in\R^{d_i}$ there is an efficient algorithm that computes a representation of the member of $\P_i$ containing $\vec{x}$, then the product partition is also ``efficiently computable'' because given some point $\vec{y}\in\R^d$, the member that it is contained in can be found by determining which member of $\P_1$ the point $\langle y_i \rangle_{i=1}^{d_1}$ is in, and independently determining which member of $\P_2$ the point $\langle y_i \rangle_{i=d_1+1}^{d_1+d_2}$ is in, etc. The member of $\prod_{i=1}^d \P_i$ that contains $\vec{y}$ is just the product of members.
This is an important property for using partitions as the basis of rounding schemes because an algorithm must determine which member/equivalence class a point is in (even if just implicitly). Because the partitions of \cite{geometry_of_rounding} are ``efficiently computable'' (see Proposition 11.2 in version 1), so are the partitions in this construction of \Autoref{basic-reclusive-gluing}.

\section{A No-Free-Lunch Theorem}\label{sec:NoFreeLunch}

\restatableNoFreeLunch*

\begin{proof}[Proof of \Autoref{no-free-lunch}]
Because $A$ is a deterministic algorithm mapping any point in $\R^d$ to $\R^d$ we can consider $A$ to be a mathematical function $A:\R^d\to\R^d$. Every mathematical function induces a natural partition of its domain which consists of the preimages/fibers of the function; that is 
\[
\P_A \defeq \set{A^{-1}(\vec{y}): \vec{y}\in\range(A)}.
\]
In other words, $\P_A$ is the partition defined by the equivalence relation on the domain $\R^d$ defined by $\vec{x}\sim \vec{x}\,'$ if and only if $A(\vec{x})=A(\vec{x}\,')$. Now we make a few claims about this partition.

\begin{subclaim}\label{no-free-lunch-subclaim-rounding-distance}
 For all $\vec{x}\in\R^d$, $\norm{A(\vec{x})-\vec{x}}\leq\epsilon$.
\end{subclaim}
\begin{subclaimproof}
Suppose for contradiction that there is some $\vec{x}\in\R^d$ such that $\norm{A(\vec{x})-\vec{x}}>\epsilon$. Let $\Lambda$ be some set (i.e. problem domain) and $f:\Lambda\to\R^d$ some function with $\vec{x}\in\range(f)$, and then let $\lambda\in\Lambda$ be some element which witnesses this (i.e. $f(\lambda)=\vec{x}$). Let $B$ be an $(\epsilon_0,\delta)$-approximation algorithm for $f$ (with respect to $\norm{\cdot}$) which has the property that on input $\lambda$, $B$ always returns $f(\lambda)=\vec{x}$ (i.e. $B$ approximates $f(\lambda)$ perfectly with probability $1$). Thus, on input $\lambda$, the algorithm $A\circ B$ always returns $A(\vec{x})$. But by hypothesis $\norm{A(\vec{x})-\vec{x}}>\epsilon$ which means that on input $\lambda$, $A\circ B$ always returns a value which is not an $\epsilon$-approximation to $f(\lambda)=\vec{x}$, which contradicts that $A\circ B$ is an $(\epsilon,\delta)$-approximation algorithm for $f$.
\end{subclaimproof}

If we were more careful we could actually get the bound above to $\epsilon-\epsilon'$, but we won't be that concerned. This allows us to show a bound on the diameter of all members of the partition $\P_A$.

\begin{subclaim}\label{no-free-lunch-subclaim-diameter}
Each member of $\P_A$  has diameter (with respect to $\norm{\cdot}$) at most $2\epsilon$.
\end{subclaim}
\begin{subclaimproof}
For any member $X\in\P_A$ we have by definition that for all $\vec{x},\vec{x}\,'\in A$ that $A(\vec{x})=A(\vec{x}\,')$. By the triangle inequality and \Autoref{no-free-lunch-subclaim-rounding-distance} we have
\begin{align*}
\norm{\vec{x}-\vec{x}\,'} 
&\leq \norm{\vec{x}-A(\vec{x})} + \norm{A(\vec{x})-A(\vec{x}\,')} + \norm{A(\vec{x}\,')-\vec{x}\,'}
\leq \epsilon + 0 + \epsilon
\end{align*}
which proves the claim.
\end{subclaimproof}

Then, by \Autoref{epsilon-diameter-bounded-any-norm} and \Autoref{no-free-lunch-subclaim-diameter}, there exists some point $\vec{p}\in\R^d$ such that $\ballcstd{\epsilon_0/2}{\vec{p}}$ intersects at least $\left(1+\frac{2(\epsilon_0/2)}{(2\epsilon)}\right)^d=\left(1+\frac{\epsilon_0}{2\epsilon}\right)^d$-many members of $\P_A$. Let this $\vec{p}$ be fixed for the remainder of the proof. We use this fact to put a lower bound on $k$.

\begin{subclaim}\label{no-free-lunch-subclaim-k-bound}
It holds that $k\geq(1-\delta)\cdot\left(1+\frac{\epsilon_0}{2\epsilon}\right)^d$.
\end{subclaim}
\begin{subclaimproof}
Let $T\subseteq\R^d$ be a set containing exactly one point in $X\cap\ballcstd{\epsilon_0/2}{\vec{p}}$ for each $X\in\P_A$ which intersects $\ballcstd{\epsilon_0/2}{\vec{p}}$. Because $\P_A$ is a partition, distinct $X,Y\in\P_A$ which intersect $\ballcstd{\epsilon_0/2}{\vec{p}}$, give distinct points regardless of the choice. Thus $\abs{T}\geq\left(1+\frac{\epsilon_0}{2\epsilon}\right)^d$.

Let $\Lambda$ be some set (i.e. problem domain) and $f:\Lambda\to\R^d$ some function with $\range(f)\cap T\not=\emptyset$, and then let $\lambda\in\Lambda$ be some element which witnesses this (i.e. $f(\lambda)\in T$). Let $B$ be an $(\epsilon_0,\delta)$-approximation algorithm for $f$ (with respect to $\norm{\cdot}$) which has the property that on input $\lambda$, $B$ returns a point selected uniformly\footnote{
    We discuss in a later footnote that the proof will still work even if perfectly uniform selection cannot be attained algorithmically.
} at random from $T$. This is a valid $(\epsilon_0,\delta)$-approximation because $f(\lambda)\in T$ and for all $\vec{x}\in T$, $\vec{x}\in\ballcstd{\epsilon_0/2}{\vec{p}}$ so by the triangle inequality $\norm{\vec{x}-f(\lambda)}\leq\epsilon_0$ which means $B$ always returns an $\epsilon_0$-estimate on input $\lambda$.

Because $A\circ B$ is a $(k,\delta)$-pseudodeterministic algorithm, there must be some set $S_\lambda\subseteq T$ with $\abs{S_\lambda}\leq k$ such that $\Pr[B(\lambda)\in S_\lambda]\geq 1-\delta$. Since $B(\lambda)$ is uniform over $T$, we have
\[
 1-\delta \leq \Pr[B(\lambda)\in S_\lambda] = \frac{\abs{S_\lambda}}{\abs{T}} \leq \frac{k}{\abs{T}}
\]
showing that
\[
k \geq (1-\delta)\cdot\abs{T} \geq (1-\delta)\cdot\left(1+\frac{\epsilon_0}{2\epsilon}\right)^d
\]
as claimed\footnote{
    As alluded to in a prior footnote, if perfectly uniform selection can't be achieved algorithmically, we instead can consider a sequence $B_1,B_2,B_3,\ldots$ of approximation algorithms for $f$ each defined the same way as $B$ but requiring only that $B_i$ distribute solutions close enough to uniformly that the probability of returning any of the $\abs{T}$ elements is at most $\frac{1}{\abs{T}}(1+\frac1i)$ so that $1-\delta \leq \Pr[B_i(\lambda)\in S_\lambda]  \leq \frac{\abs{S_\lambda}}{\abs{T}}(1+\frac1i) \leq \frac{k}{\abs{T}}(1+\frac1i)$. Since this is true for all $i\in\N$ the inequality passes through the limit and we get the same conclusion that $ 1-\delta \leq \frac{k}{\abs{T}}$.
}.
\end{subclaimproof}

Now in order to rearrange this lower bound on $k$ into a lower bound on $\epsilon$, we need to utilize an approximation\footnote{Specifically that $\ln(1+x)\leq\frac{x}2$ for small $x$.} which will require the assumption that $\epsilon\geq \epsilon_0$, so we claim and prove this next. This should not be surprising because it would be a fantastical result if there was a single deterministic algorithm $A$ which could improve the accuracy of every $\epsilon_0$-approximation algorithm to every function!

\begin{subclaim}\label{no-free-lunch-subclaim-epsilon-epsilon0}
We have that $\epsilon\geq \epsilon_0$.
\end{subclaim}
\begin{subclaimproof}
Let $\Lambda=\set{\lambda_-,\lambda_+}$ and let $\vec{v}\in\R^d$ be a $\norm{\cdot}$ unit vector. Let $f:\Lambda\to\R^d$ be defined by $f(\lambda_-)=-\epsilon_0\vec{v}$ and $f(\lambda_+)=\epsilon_0\vec{v}$. Let $B$ be the algorithm with always outputs $\vec{0}$ regardless of its input.

Then $B$ is an $(\epsilon_0,\delta)$-approximation algorithm for $f$ because $\vec{0}$ is an $\epsilon_0$-approximation for both $f(\lambda_-)$ and $f(\lambda_+)$. Because $A$ is deterministic\footnote{
    We could handle this portion even if $A$ was randomized using the fact that even a randomized $A$ would have to be $(k,\delta)$-pseudodeterministic if $A\circ B$ is, so the determinism is not essential here.
}, $A\circ B$ always returns the same value regardless of the input. Let $\vec{a}\in\R^d$ denote this value. Since $A\circ B$ is an $(\epsilon,\delta)$-approximation algorithm for $f$ it must be that $\norm{f(\lambda_-)-\vec{a}}\leq\epsilon$ and $\norm{f(\lambda_+)-\vec{a}}\leq\epsilon$, and because we have
\[
    2\epsilon_0 = \norm{f(\lambda_+)-f(\lambda_-)} \leq \norm{f(\lambda_+)-\vec{a}} + {f(\lambda_-)-\vec{a}}
\]
it must either be that $\norm{f(\lambda_-)-\vec{a}}\geq\epsilon_0$ or $\norm{f(\lambda_+)-\vec{a}}\geq\epsilon_0$. In either case, it shows $\epsilon_0\leq\epsilon$.
\end{subclaimproof}

Now we are ready to state the final inequality by taking the natural log of both sides of the inequality in \Autoref{no-free-lunch-subclaim-k-bound}. We then note by \Autoref{no-free-lunch-subclaim-epsilon-epsilon0} that $\frac{\epsilon_0}{2\epsilon}\leq\frac12$ and that for $x\leq\frac12$, $\ln(1+x)\geq\frac{x}2$. And lastly, because $\delta\in(0,\frac12]$, we have $\ln(1-\delta)\geq\ln(\frac12)$.
\begin{align*}
    \ln(k) &\geq \ln(1-\delta) + d\ln\left(1+\frac{\epsilon_0}{2\epsilon}\right)\\
           &\geq \ln(1-\delta) + d\cdot  \frac{\epsilon_0}{4\epsilon}\\
           &\geq \ln(\tfrac12) + d\cdot  \frac{\epsilon_0}{4\epsilon}.
\end{align*}
Solving for $\epsilon$ we get
\[
\epsilon\geq\epsilon_0\cdot\frac{d}{4\ln(2k)}
\]
as desired which completes the proof.
\end{proof}

In the case of the $\ell_\infty$ norm, the bounds of \Autoref{no-free-lunch} can be nearly matched (up constants and logarithmic factors) in the regime of interest where the pseudodeterminism/replicability value $k$ is polynomial in the spacial dimension $d$. This is shown in the next result which says there is a deterministic function/algorithm which does everything described in \Autoref{no-free-lunch} with $\epsilon=2d\cdot\epsilon_0$. The constant $2$ here can be replaced by any constant if one is willing to increase $k$ from the value $d+1$ stated in the result below to some greater polynomial. 

\begin{theorem}\label{universal_rounding_function}
Let $d\in\N$ and $\epsilon_0\in(0,\infty)$. Let $\epsilon=\epsilon_0\cdot 2d$. There is an efficiently computable function/algorithm $A_{\epsilon}:\R^d\to\R^d$ with the following two properties: 
\begin{enumerate}
    \item For any $x\in\R^d$ and any $\hat{x}\in\ballcinf{\epsilon_0}{x}$ it holds that $A_{\epsilon}(\hat{x})\in\ballcinf{\epsilon}{x}$.
    \item For any $x\in\R^d$ the set $\set{A_{\epsilon}(\hat{x})\colon \hat{x}\in\ballcinf{\epsilon_0}{x}}$ has cardinality at most $d+1$.
\end{enumerate}
Informally, these two conditions are (1) if $\hat{x}$ is an $\epsilon_0$-approximation of $x$ (with respect to $\ell_\infty$), then $A_{\epsilon}(\hat{x})$ is an $\epsilon$ approximation of $x$, and (2) $A_{\epsilon}$ maps every $\epsilon_0$ approximation of $x$ to one of at most $d+1$ possible values.
\end{theorem}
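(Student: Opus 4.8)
The plan is to realize $A_\epsilon$ as a rescaled copy of one of the known $(d+1,\tfrac1{2d})$-secluded unit cube partitions of \cite{geometry_of_rounding}, rounding each point to the \emph{center} of the (scaled) cube containing it rather than to a corner; the center choice is exactly what keeps the accuracy loss down to the factor $2d$ demanded by the statement. This is essentially the partition-to-rounding-scheme direction of \Autoref{:rounding-schemes-and-partitions} applied to a scaled partition, but it is cleanest to spell out directly.

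Concretely, first I would fix a $(d+1,\tfrac1{2d})$-secluded unit cube partition $\P$ of $\R^d$ (which exists and is efficiently computable by \cite{geometry_of_rounding}), and set $t\defeq\epsilon=2d\epsilon_0$. Scaling every member of $\P$ by $t$ gives the partition $t\P\defeq\set{tX\colon X\in\P}$, whose members are axis-aligned cubes of side length $t$, and a one-line rescaling argument (divide all points by $t$) shows that for every $\vec p$ the number of members of $t\P$ meeting $\ballcinf{\epsilon_0}{\vec p}$ equals the number of members of $\P$ meeting $\ballcinf{\epsilon_0/t}{\vec p/t}$; since $\epsilon_0/t=\tfrac1{2d}$, this is at most $d+1$, so $t\P$ is $(d+1,\epsilon_0)$-secluded. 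I then define $A_\epsilon(\hat x)$ to be the center of the unique member of $t\P$ containing $\hat x$. Because $\P$ (hence $t\P$) is efficiently computable in the sense that the cube containing a given point can be produced efficiently, and the center of an axis-aligned cube is an immediate function of that representation, $A_\epsilon$ is efficiently computable.

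For property (1): if $X=\prod_{i}[a_i,a_i+t)$ is the member of $t\P$ containing $\hat x$, its center $\vec c$ has coordinates $c_i=a_i+\tfrac t2$, and $|\hat x_i-c_i|\le \tfrac t2$ for every $i$, so $\norm{A_\epsilon(\hat x)-\hat x}_\infty\le \tfrac t2$. Hence for any $x\in\R^d$ and any $\hat x\in\ballcinf{\epsilon_0}{x}$,
\[
\norm{A_\epsilon(\hat x)-x}_\infty \le \norm{A_\epsilon(\hat x)-\hat x}_\infty+\norm{\hat x-x}_\infty \le \tfrac t2+\epsilon_0 = d\epsilon_0+\epsilon_0 = (d+1)\epsilon_0 \le 2d\epsilon_0 = \epsilon,
\]
where the last inequality uses $d\ge 1$; thus $A_\epsilon(\hat x)\in\ballcinf{\epsilon}{x}$. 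For property (2): since $t\P$ is $(d+1,\epsilon_0)$-secluded, $\ballcinf{\epsilon_0}{x}$ meets at most $d+1$ members of $t\P$, and $A_\epsilon$ is constant (equal to the center) on each member, so $\set{A_\epsilon(\hat x)\colon \hat x\in\ballcinf{\epsilon_0}{x}}$ has cardinality at most $d+1$.

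The only genuinely delicate point is the choice of representative. Rounding to a corner of the cube would only give $\norm{A_\epsilon(\hat x)-\hat x}_\infty< t$, hence $\norm{A_\epsilon(\hat x)-x}_\infty<(2d+1)\epsilon_0$, which exceeds $\epsilon$; rounding to the center halves the first term to $\tfrac t2$ and makes the arithmetic work with exactly the stated $\epsilon=2d\epsilon_0$. (More generally, in the $\ell_\infty$ norm every set of diameter $D$ lies in a closed ball of radius $D/2$, cf. \Autoref{diameter-ball}, so the center is available even for non-cube secluded partitions, but that generality is not needed here.) Everything else — the rescaling argument for seclusion and two applications of the triangle inequality — is routine.
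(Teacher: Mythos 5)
Your proof is correct and takes exactly the approach of the paper's (terse) proof sketch: rescale a $(d+1,\tfrac1{2d})$-secluded unit cube partition of \cite{geometry_of_rounding} by $t=2d\epsilon_0$ and round each point to the center of its cube. You supply all the details the paper omits — the rescaling argument for seclusion, the center-versus-corner choice of representative, and the triangle-inequality arithmetic — and each step checks out.
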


\begin{proof}[Proof Sketch]
This follows by using a scaled $(d+1,\frac1{2d})$-secluded partition with unit diameter members as a deterministic rounding scheme. A $(\poly(d),O(\frac1d))$-secluded partition with unit diameter members as in \Autoref{achievable-k-epsilon-all-together} can also be used to trade off polynomial factors in the first  parameter (degree) with constant factors in the second (tolerance).
\end{proof}
\newpage
\bibliography{references,ref}

\newcommand{\etalchar}[1]{$^{#1}$}
\begin{thebibliography}{VWDP{\etalchar{+}}22}

\bibitem[ALMM19]{ALMM19}
Noga Alon, Roi Livni, Maryanthe Malliaris, and Shay Moran.
\newblock Private {PAC} learning implies finite littlestone dimension.
\newblock In Moses Charikar and Edith Cohen, editors, {\em Proceedings of the
  51st Annual {ACM} {SIGACT} Symposium on Theory of Computing, {STOC} 2019,
  Phoenix, AZ, USA, June 23-26, 2019}, pages 852--860. {ACM}, 2019.

\bibitem[BG93]{BorowskyG93}
Elizabeth Borowsky and Eli Gafni.
\newblock Generalized {FLP} impossibility result for t-resilient asynchronous
  computations.
\newblock In S.~Rao Kosaraju, David~S. Johnson, and Alok Aggarwal, editors,
  {\em Proceedings of the Twenty-Fifth Annual {ACM} Symposium on Theory of
  Computing, May 16-18, 1993, San Diego, CA, {USA}}, pages 91--100. {ACM},
  1993.

\bibitem[BLM20]{BLM20}
Mark Bun, Roi Livni, and Shay Moran.
\newblock An equivalence between private classification and online prediction.
\newblock In Sandy Irani, editor, {\em 61st {IEEE} Annual Symposium on
  Foundations of Computer Science, {FOCS} 2020, Durham, NC, USA, November
  16-19, 2020}, pages 389--402. {IEEE}, 2020.

\bibitem[CD09]{ChenD09}
Xi~Chen and Xiaotie Deng.
\newblock On the complexity of 2{D} discrete fixed point problem.
\newblock {\em Theor. Comput. Sci.}, 410(44):4448--4456, 2009.

\bibitem[CLL{\etalchar{+}}95]{CLLORS95}
J.~Y. Cai, R.~Lipton, L.~Longpr{\'{e}}, M.~Ogihara, K.~Regan, and D.~Sivakumar.
\newblock Communication complexity of key agreement on small ranges.
\newblock In {\em {STACS}}, pages 38--49, 1995.

\bibitem[DGP09]{DaskalakisGP09}
Constantinos Daskalakis, Paul~W. Goldberg, and Christos~H. Papadimitriou.
\newblock The complexity of computing a nash equilibrium.
\newblock {\em {SIAM} J. Comput.}, 39(1):195--259, 2009.

\bibitem[DLPES02]{de_loera_polytopal_2002}
Jesus~A. De~Loera, Elisha Peterson, and Francis Edward~Su.
\newblock A {Polytopal} {Generalization} of {Sperner}'s {Lemma}.
\newblock {\em Journal of Combinatorial Theory, Series A}, 100(1):1--26,
  October 2002.

\bibitem[DPV18]{DixonPavanVinod18}
Peter Dixon, A.~Pavan, and N.~V. Vinodchandran.
\newblock On pseudodeterministic approximation algorithms.
\newblock In {\em 43rd International Symposium on Mathematical Foundations of
  Computer Science, {MFCS} 2018, August 27-31, 2018, Liverpool, {UK}}, volume
  117 of {\em LIPIcs}, pages 61:1--61:11, 2018.

\bibitem[DPVV22]{DPWV22}
Peter Dixon, A~Pavan, Jason {Vander~Woude}, and N.V. Vinodchandran.
\newblock Pseudodeterminism: promises and lowerbounds.
\newblock In Stefano Leonardi and Anupam Gupta, editors, {\em {STOC} '22: 54th
  Annual {ACM} {SIGACT} Symposium on Theory of Computing, Rome, Italy, June 20
  - 24, 2022}, pages 1552--1565. {ACM}, 2022.

\bibitem[Fol99]{folland_real_1999}
Gerald~B. Folland.
\newblock {\em Real {Analysis}: {Modern} {Techniques} and {Their}
  {Applications}}.
\newblock John Wiley \& Sons, April 1999.
\newblock Google-Books-ID: N8jVDwAAQBAJ.

\bibitem[Gar02]{gardner_brunn-minkowski_2002}
R.~J. Gardner.
\newblock The {Brunn}-{Minkowski} inequality.
\newblock {\em Bulletin of the American Mathematical Society}, 39(03):355--406,
  April 2002.

\bibitem[GKM21]{GKM21}
Badih Ghazi, Ravi Kumar, and Pasin Manurangsi.
\newblock User-level differentially private learning via correlated sampling.
\newblock In Marc'Aurelio Ranzato, Alina Beygelzimer, Yann~N. Dauphin, Percy
  Liang, and Jennifer~Wortman Vaughan, editors, {\em Advances in Neural
  Information Processing Systems 34: Annual Conference on Neural Information
  Processing Systems 2021, NeurIPS 2021, December 6-14, 2021, virtual}, pages
  20172--20184, 2021.

\bibitem[GL19]{GrossmanLiu19}
Ofer Grossman and Yang~P. Liu.
\newblock Reproducibility and pseudo-determinism in log-space.
\newblock In {\em Proceedings of the Thirtieth Annual {ACM-SIAM} Symposium on
  Discrete Algorithms, {SODA} 2019, San Diego, California, USA, January 6-9,
  2019}, pages 606--620. {SIAM}, 2019.

\bibitem[Gol19]{Goldreich19}
Oded Goldreich.
\newblock Multi-pseudodeterministic algorithms.
\newblock {\em Electronic Colloquium on Computational Complexity {(ECCC)}},
  26:12, 2019.

\bibitem[Gru07]{gruber_convex_2007}
Peter~M. Gruber.
\newblock {\em Convex and {Discrete} {Geometry}}.
\newblock Springer Science \& Business Media, May 2007.
\newblock Google-Books-ID: bSZKAAAAQBAJ.

\bibitem[HK18]{hoza_preserving_2018}
William~M. Hoza and Adam~R. Klivans.
\newblock Preserving {Randomness} for {Adaptive} {Algorithms}.
\newblock {\em arXiv:1611.00783 [cs]}, June 2018.
\newblock arXiv: 1611.00783.

\bibitem[HS99]{HerlihyShavit99}
Maurice Herlihy and Nir Shavit.
\newblock The topological structure of asynchronous computability.
\newblock {\em J. {ACM}}, 46(6):858--923, 1999.

\bibitem[HW48]{dimension_theory_lebesgue_covering}
Witold Hurewicz and Henry Wallman.
\newblock {\em Dimension {Theory}}.
\newblock Princeton University Press, 1948.

\bibitem[ILPS22]{impagliazzo_reproducibility_2022}
Russell Impagliazzo, Rex Lei, Toniann Pitassi, and Jessica Sorrell.
\newblock Reproducibility in learning.
\newblock In {\em Proceedings of the 54th Annual ACM SIGACT Symposium on Theory
  of Computing}, STOC 2022, page 818–831, New York, NY, USA, 2022.

\bibitem[Kom94]{komiya_simple_1994}
Hidetoshi Komiya.
\newblock A simple proof of {K}-{K}-{M}-{S} theorem.
\newblock {\em Economic Theory}, 4(3):463--466, May 1994.

\bibitem[KORW08]{kindler_spherical_2008}
Guy Kindler, Ryan O'Donnell, Anup Rao, and Avi Wigderson.
\newblock Spherical {Cubes} and {Rounding} in {High} {Dimensions}.
\newblock In {\em 2008 49th {Annual} {IEEE} {Symposium} on {Foundations} of
  {Computer} {Science}}, pages 189--198, Philadelphia, PA, USA, October 2008.
  IEEE.

\bibitem[KROW12]{kindler_spherical_2012}
Guy Kindler, Anup Rao, Ryan O'Donnell, and Avi Wigderson.
\newblock Spherical cubes: optimal foams from computational hardness
  amplification.
\newblock {\em Communications of the ACM}, 55(10):90--97, October 2012.

\bibitem[Kuh60]{kuhn66}
H.~W. Kuhn.
\newblock Some combinatorial lemmas in topology.
\newblock {\em IBM Journal of Research and Development}, 4(5):518--524, 1960.

\bibitem[LOS21]{LuOlivieraSanthanam21}
Zhenjian Lu, Igor~C. Oliveira, and Rahul Santhanam.
\newblock Pseudodeterministic algorithms and the structure of probabilistic
  time.
\newblock In {\em STOC}, 2021.
\newblock To Appear. ECCC Tech Report 21-039.

\bibitem[noa14]{noauthor_answer_2014}
Answer to "{Isoperimetric} inequality, isodiametric inequality, hyperplane
  conjecture... what are the inequalities of this kind known or conjectured?",
  April 2014.
\newblock Mathematics Stack Exchange.

\bibitem[OS17]{OliveiraSanthanam17}
I.~Oliveira and R.~Santhanam.
\newblock Pseudodeterministic constructions in subexponential time.
\newblock In {\em Proceedings of the 49th Annual {ACM} {SIGACT} Symposium on
  Theory of Computing, {STOC} 2017, Montreal, QC, Canada, June 19-23, 2017},
  pages 665--677, 2017.

\bibitem[Pap90]{Papadimitriou90}
Christos~H. Papadimitriou.
\newblock On graph-theoretic lemmata and complexity classes (extended
  abstract).
\newblock In {\em 31st Annual Symposium on Foundations of Computer Science, St.
  Louis, Missouri, USA, October 22-24, 1990, Volume {II}}, pages 794--801.
  {IEEE} Computer Society, 1990.

\bibitem[RT87]{RT87}
Prabhakar Raghavan and Clark~D. Thompson.
\newblock Randomized rounding: a technique for provably good algorithms and
  algorithmic proofs.
\newblock {\em Comb.}, 7(4):365--374, 1987.

\bibitem[Sie20]{sierpinski_sur_1920}
Wacław Sierpiński.
\newblock Sur la question de la mesurabilité de la base de {M}. {Hamel}.
\newblock {\em Fundamenta Mathematicae}, 1:105--111, 1920.
\newblock Publisher: Instytut Matematyczny Polskiej Akademii Nauk.

\bibitem[SZ99]{SaksZhou99}
M.~Saks and S.~Zhou.
\newblock {BP}\({}_{\mbox{h}}\){Space}({S}) $\subseteq$
  {DSPACE}({S}\({}^{\mbox{3/2}}\)).
\newblock {\em J. Comput. Syst. Sci.}, 58(2):376--403, 1999.

\bibitem[SZ00]{SaksZ00}
Michael~E. Saks and Fotios Zaharoglou.
\newblock Wait-free k-set agreement is impossible: The topology of public
  knowledge.
\newblock {\em {SIAM} J. Comput.}, 29(5):1449--1483, 2000.

\bibitem[VWDP{\etalchar{+}}22]{geometry_of_rounding}
Jason Vander~Woude, Peter Dixon, A.~Pavan, Jamie Radcliffe, and N.~V.
  Vinodchandran.
\newblock Geometry of rounding, 2022.
\newblock 10.48550/arXiv.2211.02694.

\bibitem[Wan05]{wang_volumes_2005}
Xianfu Wang.
\newblock Volumes of {Generalized} {Unit} {Balls}.
\newblock {\em Mathematics Magazine}, 78(5):390--395, 2005.
\newblock Publisher: Mathematical Association of America.

\bibitem[Wol77]{wolsey_cubical_1977}
Laurence~A Wolsey.
\newblock Cubical sperner lemmas as applications of generalized complementary
  pivoting.
\newblock {\em Journal of Combinatorial Theory, Series A}, 23(1):78--87, July
  1977.

\end{thebibliography}
\bibliographystyle{alpha} 
\appendix
\section{Measure Theory}
\label{sec:measure-theory}
\renewcommand{\A}{\mathcal{A}}
\newcommand{\F}{\mathcal{F}}

Throughout this section, we use the word ``countable'' to mean finite or countably infinite.

\begin{fact}[$\ell_\infty$ Diameter Ball]\label{diameter-ball}
Let $d\in\N$ and $X\subseteq\R^d$ be a bounded set with diameter $D$ (with respect to $\ell_\infty$). Then there exists $\vec{p}\in\R^d$ such that $X\subseteq \closure{B}_{D/2}(\vec{p})$. As a consequence, $m_{out}(X)\leq D^d$ where $m_{out}$ denotes outer Lebesgue measure.
\end{fact}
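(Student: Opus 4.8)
The plan is to reduce everything to a one‑dimensional, coordinate‑by‑coordinate argument, exploiting the fact that an $\ell_\infty$ ball is just a product of intervals of equal length. First I would dispose of the degenerate case $X=\emptyset$ (any $\vec{p}$ works), and otherwise fix, for each coordinate $i\in[d]$, the projection $\pi_i(X)=\set{x_i\colon\vec{x}\in X}$, which is a nonempty bounded subset of $\R$, and set $a_i=\inf\pi_i(X)$ and $b_i=\sup\pi_i(X)$.

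The crux is the estimate $b_i-a_i\leq D$. This holds because for any two points $\vec{x},\vec{y}\in X$ one has $\abs{x_i-y_i}\leq\norm{\vec{x}-\vec{y}}_\infty\leq\diaminf(X)=D$; taking suprema over $\vec{x}$ and $\vec{y}$ (the infimum and supremum of the projection are each approached arbitrarily closely by coordinates of points of $X$) yields $b_i-a_i\leq D$. Next I would take $\vec{p}$ to be the vector of midpoints, $p_i=\tfrac{a_i+b_i}{2}$: for every $\vec{x}\in X$ and every $i$, the inequality $a_i\leq x_i\leq b_i$ forces $\abs{x_i-p_i}\leq\tfrac{b_i-a_i}{2}\leq\tfrac{D}{2}$, so $\norm{\vec{x}-\vec{p}}_\infty\leq\tfrac{D}{2}$; since $\vec{x}\in X$ was arbitrary, $X\subseteq\ballcinf{D/2}{\vec{p}}$.

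Finally, for the measure consequence: $\ballcinf{D/2}{\vec{p}}$ is the translate $\vec{p}+[-\tfrac{D}{2},\tfrac{D}{2}]^d$ of a hypercube of side length $D$, hence Lebesgue measurable with measure $D^d$, so monotonicity of outer measure gives $m_{out}(X)\leq m\bigl(\ballcinf{D/2}{\vec{p}}\bigr)=D^d$. I do not anticipate a genuine obstacle here; the only point warranting care is that $\diaminf(X)$ is a supremum that need not be attained, so the bound $b_i-a_i\leq D$ must be obtained through a supremum/limiting argument on the projections rather than by substituting a single extremal pair of points.
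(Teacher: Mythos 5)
Your proposal is correct and follows essentially the same approach as the paper's proof sketch: project onto each coordinate, bound the width of each projection by $D$, take $\vec{p}$ to be the vector of midpoints, and deduce the measure bound by monotonicity of outer measure applied to the translated cube. The only cosmetic difference is that you establish $b_i - a_i \leq D$ directly by taking suprema, whereas the paper phrases it as a brief contradiction argument.
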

\begin{proof}[Proof Sketch]
For each coordinate $i\in[d]$, consider the set $X_i=\set{\pi_i(\vec{x})\colon \vec{x}\in X}\subseteq\R$ of the $i$th coordinates of each point in $X$. The infimum and supremum are distance at most $D$ apart, because otherwise there would be points $\vec{y},\vec{z}\in X$ such that $\abs{\pi_i(\vec{z})-\pi_i(\vec{y})}>D$ which means $\norm{\vec{z}-\vec{y}}_\infty>D$. Thus taking $\vec{p}=\langle \frac{\inf(X_i)+\sup(X_i)}{2} \rangle_{i=1}^d$ we have $X\subseteq\prod_{i=1}^d[\inf(X_i),\sup(X_i)]\subseteq\vec{p}+[-\frac{D}{2}, \frac{D}{2}]^d=\ballc{D/2}{\vec{p}}$.
\end{proof}

\begin{fact}\label{disjoint-uncountable}
  If $\mu$ is a measure and $\A$ is a (possibly uncountable) family of pairwise disjoint measurable sets, then
  \[
    \mu(\bigsqcup_{A\in\A}A)\geq\sum_{A\in\A}\mu(A).
  \]
\end{fact}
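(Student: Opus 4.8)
The key observation is that for a family $(a_A)_{A \in \mathcal{A}}$ of nonnegative extended reals indexed by an arbitrary (possibly uncountable) set, the sum $\sum_{A \in \mathcal{A}} a_A$ is \emph{defined} as $\sup\set{\sum_{A \in \F} a_A : \F \subseteq \mathcal{A},\ \F \text{ finite}}$. So the plan is to bound every finite partial sum by $\mu(\bigsqcup_{A \in \mathcal{A}} A)$ and then pass to the supremum. (I will also note up front that $\bigsqcup_{A \in \mathcal{A}} A$ is understood to be $\mu$-measurable — as it is in every invocation of this fact in the paper, where the disjoint union is a fixed measurable set such as $S_n$ or an extended cube — or alternatively one reads $\mu$ as the associated outer measure, for which the argument below goes through verbatim using countable subadditivity and monotonicity of outer measure.)

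First I would fix an arbitrary finite subfamily $\F = \set{A_1, \dots, A_m} \subseteq \mathcal{A}$. Since the $A_i$ are measurable and pairwise disjoint, finite additivity of $\mu$ (which is an immediate consequence of countable additivity, padding with $\emptyset$) gives $\sum_{i=1}^m \mu(A_i) = \mu\!\left(\bigsqcup_{i=1}^m A_i\right)$. Next, since $\bigsqcup_{i=1}^m A_i \subseteq \bigsqcup_{A \in \mathcal{A}} A$ and both sides are measurable, monotonicity of $\mu$ yields $\mu\!\left(\bigsqcup_{i=1}^m A_i\right) \le \mu\!\left(\bigsqcup_{A \in \mathcal{A}} A\right)$. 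Chaining these, $\sum_{A \in \F} \mu(A) \le \mu\!\left(\bigsqcup_{A \in \mathcal{A}} A\right)$ for every finite $\F \subseteq \mathcal{A}$.

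Finally I would take the supremum over all finite $\F \subseteq \mathcal{A}$ on the left-hand side; by the definition of the (possibly uncountable) sum recalled above, this supremum is exactly $\sum_{A \in \mathcal{A}} \mu(A)$, so $\sum_{A \in \mathcal{A}} \mu(A) \le \mu\!\left(\bigsqcup_{A \in \mathcal{A}} A\right)$, as claimed. There is no real obstacle here — the content is entirely the reduction to finite subfamilies — but the one point worth stating carefully is the definition of an uncountable sum of nonnegative terms (and the remark that such a sum is automatically $0$, i.e. the bound is trivial, unless only countably many of the $\mu(A)$ are nonzero, which is why the reverse inequality fails in general when $\mathcal{A}$ is uncountable).
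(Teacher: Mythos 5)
Your proof is correct and matches the paper's argument essentially line by line: both invoke the definition of the (possibly uncountable) sum as the supremum of finite partial sums, bound each finite partial sum $\sum_{A\in\F}\mu(A)=\mu\bigl(\bigsqcup_{A\in\F}A\bigr)$ by $\mu\bigl(\bigsqcup_{A\in\mathcal{A}}A\bigr)$ via finite additivity and monotonicity, and then pass to the supremum. One small slip in your closing parenthetical remark: if uncountably many of the $\mu(A)$ are nonzero, the sum $\sum_{A\in\mathcal{A}}\mu(A)$ is automatically $+\infty$, not $0$.
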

\begin{proof}
  By definition of the arbitrary summation (c.f. \cite[p.~11]{folland_real_1999}) we have
  \[
    \sum_{A\in\A}\mu(A)\defeq\sup\set{\sum_{A\in\F}\mu(A):\F\subseteq\A,\;\F\text{ finite}}
  \]
  and for any $\F\subseteq\A$ we have
  \[
    \mu(\bigsqcup_{A\in\A}A)\geq \mu(\bigsqcup_{A\in\F}A)=\sum_{A\in\F}\mu(A).
  \]
  Thus $\mu(\bigsqcup_{A\in\A}A)$ is an upper bound for the set $\set{\sum_{A\in\F}\mu(A):\F\subseteq\A,\;\F\text{ finite}}$ and thus greater than or equal to the supremum.
\end{proof}

\begin{fact}[Interchange of Countable Sums with Non-negative Terms]\label{interchange-of-countable-sums}
If $I,J$ are countable sets, and $a_{i,j}\geq0$ for all $(i,j)\in I\times J$, then
\[
    \sum_{i\in I}\sum_{j\in J}a_{i,j} = \sum_{j\in J}\sum_{i\in I}a_{i,j}
\]
\end{fact}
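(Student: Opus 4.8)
The plan is to show that each of the two iterated sums equals the single \emph{unordered} sum over the product index set,
\[
\Sigma \defeq \sup\set{\sum_{(i,j)\in E}a_{i,j}\;:\;E\subseteq I\times J,\ E\text{ finite}},
\]
which is manifestly symmetric in the roles of $I$ and $J$; once $\sum_{i\in I}\sum_{j\in J}a_{i,j}=\Sigma$ is established, the claim follows by applying the same identity with $I$ and $J$ interchanged. Throughout I would use the definition of an arbitrary sum of nonnegative quantities as a supremum over finite partial sums (as in \cite[p.~11]{folland_real_1999}, the same definition invoked in the proof of \Autoref{disjoint-uncountable}), the elementary monotonicity fact that enlarging the index set of a sum of nonnegative terms cannot decrease the sum, and ordinary finite commutativity of addition.

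For the inequality $\sum_{i\in I}\sum_{j\in J}a_{i,j}\le\Sigma$, I would first reduce to a finite outer index set: by the supremum definition it suffices to bound $\sum_{i\in F}\sum_{j\in J}a_{i,j}$ by $\Sigma$ for every finite $F\subseteq I$. Fixing such an $F$, I claim $\sum_{i\in F}\sum_{j\in J}a_{i,j}=\sup_{G}\sum_{i\in F}\sum_{j\in G}a_{i,j}$, the supremum over finite $G\subseteq J$. The ``$\ge$'' direction is monotonicity; for ``$\le$'', given $\varepsilon>0$ one picks, for each of the finitely many $i\in F$, a finite $G_i\subseteq J$ with $\sum_{j\in G_i}a_{i,j}$ within $\varepsilon/\abs{F}$ of $\sum_{j\in J}a_{i,j}$ (or made arbitrarily large, if the latter is $\infty$), and takes $G=\bigcup_{i\in F}G_i$, which is finite precisely because $F$ is. Since for finite $F$ and $G$ we have $\sum_{i\in F}\sum_{j\in G}a_{i,j}=\sum_{(i,j)\in F\times G}a_{i,j}\le\Sigma$ (as $F\times G$ is a finite subset of $I\times J$), taking the supremum over $G$ gives $\sum_{i\in F}\sum_{j\in J}a_{i,j}\le\Sigma$, and then the supremum over finite $F$ yields the desired bound.

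For the reverse inequality $\Sigma\le\sum_{i\in I}\sum_{j\in J}a_{i,j}$, take any finite $E\subseteq I\times J$ and let $F$ and $G$ be its (finite) projections onto $I$ and onto $J$, so $E\subseteq F\times G$; then $\sum_{(i,j)\in E}a_{i,j}\le\sum_{(i,j)\in F\times G}a_{i,j}=\sum_{i\in F}\sum_{j\in G}a_{i,j}\le\sum_{i\in I}\sum_{j\in J}a_{i,j}$ by monotonicity, and taking the supremum over $E$ finishes it. The main obstacle is the ``uniformization'' step — selecting a single finite $G\subseteq J$ that simultaneously works for all $i\in F$ — but this is exactly where finiteness of $F$ is used (a finite union of finite sets is finite), and the case where $\sum_{j\in J}a_{i_0,j}=\infty$ for some $i_0\in F$ is handled by observing that then both iterated sums are already $\infty$. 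If one prefers to avoid this bookkeeping, an alternative is to invoke Tonelli's theorem with counting measures on $I$ and $J$, or to enumerate $I$ and $J$ and cite the classical rearrangement theorem for nonnegative double series (padding out any finite index set); the supremum argument above, however, is self-contained and fits the style of this appendix.
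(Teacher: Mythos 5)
Your argument is correct and, in fact, substantially more detailed than what the paper offers: the paper's ``proof'' of this fact only gestures at two routes — basic analysis with $I=J=\N$ and the limit-of-partial-sums definition, or citing Tonelli's theorem with counting measure — without carrying either one out. You instead give a self-contained argument by showing both iterated sums equal the single unordered sum $\Sigma=\sup\set{\sum_{(i,j)\in E}a_{i,j}\;:\;E\subseteq I\times J,\ E\text{ finite}}$, which is manifestly symmetric. The structure is sound: the uniformization step (choosing one finite $G=\bigcup_{i\in F}G_i$ that works for all $i$ in the finite $F$) is exactly the nontrivial point and you identify it as such, and the degenerate cases ($F=\emptyset$, or some inner sum being $\infty$) are handled or trivially handleable. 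Your approach also buys a little extra generality: the supremum definition makes the statement (and your proof) valid for arbitrary, not just countable, index sets, whereas the paper's first suggested route is bound to $\N$-indexed sequences. You even note the paper's two alternatives (Tonelli; the classical double-series rearrangement theorem) as backup options, so the proposal subsumes the paper's proof sketch. One small stylistic note: since the surrounding appendix already uses the supremum definition of arbitrary sums in the proof of \autoref{disjoint-uncountable}, your choice of definition is consistent with the rest of the paper, which is a point in its favor over the bare Tonelli citation.
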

\begin{proof}
This can be proved directly via basic analysis methods if $I$ and $J$ are assumed to be $\N$ and the definition of the infinite sum as a limit of finite sums is used. Alternatively, viewing the summation as an integral over a countable measure space, this can be viewed as a corollary to Tonelli's theorem.
\end{proof}

\renewcommand{\A}{\mathcal{A}}
\newcommand{\FF}{\mathcal{F}}

\begin{lemma}[Exact Measure of Multiplicity]\label{exact-measure-of-multiplicity}
Let $n\in\N$. Let $X$ be a measurable set in some measure space (the measure being denoted by $\mu$) and $\A$ a countable family of measurable subsets of $X$ such that for each $x\in X$, $x$ belongs to {\em exactly} $n$ members of $\A$. Then
\[
\sum_{A\in\A}\mu(A) = n\cdot\mu(X).
\]
\end{lemma}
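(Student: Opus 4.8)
The plan is to slice $X$ according to \emph{which} members of $\A$ a point belongs to, and then double‑count. Enumerate $\A=\{A_i\}_{i\in I}$ with $I$ a countable index set, and let $\mathcal{J}$ denote the collection of all $n$‑element subsets of $I$; since $I$ is countable, so is $\mathcal{J}$. For $J\in\mathcal{J}$ define
\[
E_J \defeq \Big(\bigcap_{i\in J}A_i\Big)\setminus\Big(\bigcup_{i\in I\setminus J}A_i\Big),
\]
the set of points of $X$ lying in $A_i$ exactly for $i\in J$. Each $E_J$ is measurable, being a countable intersection minus a countable union of measurable sets. Because every $x\in X$ lies in exactly $n$ members of $\A$ — i.e. the set $\{i\in I:x\in A_i\}$ is an element of $\mathcal{J}$ — the family $\{E_J\}_{J\in\mathcal{J}}$ is a partition of $X$ (with some cells possibly empty). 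Moreover, for each fixed $i\in I$ one has $A_i=\bigsqcup_{J\in\mathcal{J}:\,i\in J}E_J$, since $A_i\subseteq X$ and a point of $A_i$ lies in $E_J$ precisely for the unique $J$ recording its membership pattern, which necessarily contains $i$.

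From these two facts, countable additivity of $\mu$ gives $\mu(X)=\sum_{J\in\mathcal{J}}\mu(E_J)$ and $\mu(A_i)=\sum_{J\in\mathcal{J}:\,i\in J}\mu(E_J)$ for every $i\in I$. Summing the latter over $i\in I$ and interchanging the two countable sums of non‑negative terms (permitted by \Autoref{interchange-of-countable-sums}) yields
\[
\sum_{A\in\A}\mu(A)=\sum_{i\in I}\sum_{\substack{J\in\mathcal{J}\\ i\in J}}\mu(E_J)=\sum_{J\in\mathcal{J}}\;\sum_{i\in J}\mu(E_J)=\sum_{J\in\mathcal{J}} n\cdot\mu(E_J)=n\sum_{J\in\mathcal{J}}\mu(E_J)=n\cdot\mu(X),
\]
where the third equality uses $\abs{J}=n$ for $J\in\mathcal{J}$. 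This is exactly the claimed identity.

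The only step that needs genuine care is the interchange of the double summation, and this is precisely where countability of $\A$ (hence of $\mathcal{J}$) is essential; with all terms non‑negative it is furnished by \Autoref{interchange-of-countable-sums}, and no case split on whether $\mu(X)$ is finite is required. An equally short alternative avoids the combinatorial slicing entirely: since each $A_i$ is measurable, $\mu(A_i)=\int_X \mathbf{1}_{A_i}\,d\mu$, and the hypothesis says $\sum_{i\in I}\mathbf{1}_{A_i}=n\cdot\mathbf{1}_X$ pointwise on $X$; applying the Monotone Convergence Theorem to the increasing partial sums of this series gives $\sum_{i\in I}\mu(A_i)=\int_X n\cdot\mathbf{1}_X\,d\mu=n\,\mu(X)$. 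I would likely present the first version, to stay within the elementary framework already developed in this appendix.
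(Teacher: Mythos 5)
Your proof is correct and follows essentially the same approach as the paper. The paper also partitions $X$ by membership patterns: for $\mathcal{F}\in\binom{\A}{n}$ it sets $G_{\mathcal{F}}=\bigcap_{A\in\mathcal{F}}A$ and then argues, using the "exactly $n$" hypothesis, that the $G_{\mathcal{F}}$ are pairwise disjoint, cover $X$, and satisfy $A\supseteq G_{\mathcal{F}}$ or $A\cap G_{\mathcal{F}}=\emptyset$ according as $A\in\mathcal{F}$ or not; it then runs the identical double sum and interchanges via the same interchange-of-countable-sums fact. Your $E_J$ is the same set as $G_{\mathcal{F}}$ in context (the subtracted union $\bigcup_{i\notin J}A_i$ is automatically disjoint from $\bigcap_{i\in J}A_i$ under the "exactly $n$" hypothesis), so your explicit set difference simply makes disjointness and measurability visible at a glance rather than via a short argument. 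Your aside about the indicator-function identity $\sum_{i}\mathbf{1}_{A_i}=n\cdot\mathbf{1}_X$ and MCT/Tonelli is a genuinely slicker route that the paper does not take; it is worth noting that the paper's \Autoref{interchange-of-countable-sums} is itself proved by appeal to Tonelli, so the two arguments are ultimately resting on the same foundation, and the combinatorial version you chose to present fits better with the elementary style of the appendix.
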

\begin{proof}
We note that if $n=0$, then the statement is trivially true because $\A$ is either empty or contains just the empty set, and in either case $\sum_{A\in\A}\mu(A) = 0 = 0\cdot\mu(X)$ if we use the standard convention that the empty sum is $0$.

For any $\FF\subseteq\A$, let $G_{\FF}=\bigcap_{A\in\FF}A$ noting that this is a countable intersection of measurable sets, so it is measurable (mnemonically, the $G$ represents an intersection as it does in the notation for $G_{\delta}$ sets).

Let $\A\choose n$ denote all subsets of $\A$ of size $n$ noting that because $\A$ is countable, so is $\A\choose n$. Observe that for distinct $\FF,\FF'\in{\A\choose n}$, the sets $G_{\FF}$ and $G_{\FF'}$ are disjoint because
\[
    G_{\FF}\cap G_{\FF'} = \left(\bigcap_{A\in\FF}A\right) \cap \left(\bigcap_{A\in\FF'}A\right) = \bigcap_{A\in\FF\cup\FF'}A
\]
and since $\FF$ and $\FF'$ are distinct and each contain $n$ items, $\abs{\FF\cup\FF'}\geq n+1$, and by assumption no point in $X$ belongs to $n+1$ members, so $\bigcap_{A\in\FF\cup\FF'}A=\emptyset$. Furthermore, for each $x\in X$, since $x$ belongs to exactly $n$ members $A_1,\ldots,A_n$ of $\A$, taking $\FF=\set{A_1,\ldots,A_n}$ we have $x\in G_{\FF}$ which shows that $\set{G_{\FF}\colon \FF\in{\A\choose n}}$ is a partition of $X$ into countably many measurable sets (allowing that some $G_{\FF}$ might be empty).

The last observation we need is that for any $\FF\in{\A\choose n}$ and any $A\in\A$, it holds that if $A\in\FF$, then $A\supseteq G_{\FF}$ and if $A\not\in\FF$ then $A\cap G_{\FF}=\emptyset$. To see this, note that for any $x\in G_{\FF}$, $x$ belongs to each of the $n$ members of $\FF\subseteq\A$, and since by assumption $x$ belongs to exactly $n$ members of $\A$, it does not belong to any other members of $\A$.

Now we have the following chain of equalities:
\begin{align*}
    \sum_{A\in\A}\mu(A) &= \sum_{A\in\A}\mu(A\cap X) \tag{$A\subseteq X$ so $A\cap X=A$} \\
    &= \sum_{A\in\A}\mu\left(A\cap\left[\bigsqcup_{\FF\in{\A\choose n}}G_{\FF}\right]\right) \tag{Set equality; the $G_{\FF}$ partition $X$} \\
    %
    &= \sum_{A\in\A}\mu\left(\bigsqcup_{\FF\in{\A\choose n}}\left[A\cap G_{\FF}\right]\right) \tag{Set equality} \\
    %
    &= \sum_{A\in\A}\left[\sum_{\FF\in{\A\choose n}}\mu\left(A\cap G_{\FF}\right)\right] \tag{Countable additivity of measures} \\
    &= \sum_{\FF\in{\A\choose n}}\left[\sum_{A\in\A}\mu\left(A\cap G_{\FF}\right)\right] \tag{Interchange sums by \Autoref{interchange-of-countable-sums}} \\
    &= \sum_{\FF\in{\A\choose n}}\left[\sum_{A\in\A}
    \begin{cases}
        \mu\left(A\cap G_{\FF}\right) = \mu\left(G_{\FF}\right) & A\in\FF \\
        \mu\left(A\cap G_{\FF}\right) = \mu(\emptyset) = 0 & A\not\in\FF
    \end{cases}
    \right] \tag{Previous paragraph} \\
    &= \sum_{\FF\in{\A\choose n}}\left[\sum_{A\in\FF}\mu\left(G_{\FF}\right)\right] \tag{Remove $0$ terms from summation} \\
    &= \sum_{\FF\in{\A\choose n}}\left[n\cdot\mu\left(G_{\FF}\right)\right] \tag{$\abs{\FF}=n$} \\
    &= n\sum_{\FF\in{\A\choose n}}\mu\left(G_{\FF}\right) \tag{Linearity of summation} \\
    &= n\cdot\mu\left(\bigsqcup_{\FF\in{\A\choose n}}G_{\FF}\right) \tag{Countable additivity of measures} \\
    &= n\cdot\mu\left(X\right) \tag{Set equality; the $G_{\FF}$ partition $X$} \\
\end{align*}
This proves the result.
\end{proof}

\begin{lemma}[Upper Bound Measure of Multiplicity]\label{upper-bound-measure-of-multiplicity}
Let $n\in\N$. Let $X$ be a measurable set in some measure space (the measure being denoted by $\mu$) and $\A$ a countable family of measurable subsets of $X$ such that for each $x\in X$, $x$ belongs to {\em at most} $n$ members of $\A$. Then
\[
\sum_{A\in\A}\mu(A) \leq n\cdot\mu(X).
\]
\end{lemma}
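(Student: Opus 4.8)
The plan is to deduce this from the exact-multiplicity version, \Autoref{exact-measure-of-multiplicity}, by padding the family $\mathcal{A}$ with finitely many extra measurable subsets of $X$ so that every point of $X$ comes to lie in \emph{exactly} $n$ members, while only increasing the total sum of measures.

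First I would introduce the multiplicity function $m\colon X\to\set{0,1,\dots,n}$ defined by $m(x)=\abs{\set{A\in\mathcal{A}\colon x\in A}}$; it is well-defined and bounded by $n$ by hypothesis, and it is measurable since $m=\sum_{A\in\mathcal{A}}\mathbf{1}_A$ is a countable sum (equivalently, the pointwise supremum of partial sums) of measurable functions. Hence for each integer $j$ the level set $W_j\defeq\set{x\in X\colon m(x)\leq n-j}$ is a measurable subset of $X$.

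Next I would form the indexed family $\mathcal{A}'=\mathcal{A}\sqcup\set{W_1,\dots,W_n}$ (an indexed union, since some of the $W_j$ may coincide or be empty), which is again a countable family of measurable subsets of $X$. I claim every $x\in X$ lies in exactly $n$ members of $\mathcal{A}'$: writing $m=m(x)$, the point lies in $m$ members of $\mathcal{A}$, and it lies in $W_j$ precisely when $m\leq n-j$, i.e.\ when $j\leq n-m$, so in exactly the sets $W_1,\dots,W_{n-m}$; altogether that is $m+(n-m)=n$. (When $n=0$ there are no padding sets, and the hypothesis forces each $A\in\mathcal{A}$ to be empty, so both sides vanish, consistent with the $n=0$ case of \Autoref{exact-measure-of-multiplicity}.)

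Finally I would apply \Autoref{exact-measure-of-multiplicity} to $\mathcal{A}'$ to obtain $\sum_{A'\in\mathcal{A}'}\mu(A')=n\cdot\mu(X)$, and then, since $\mathcal{A}\subseteq\mathcal{A}'$ and all terms are nonnegative,
\[
\sum_{A\in\mathcal{A}}\mu(A)\;\leq\;\sum_{A'\in\mathcal{A}'}\mu(A')\;=\;n\cdot\mu(X),
\]
as claimed. The argument is essentially routine; the only points needing a word of justification are the measurability of the level sets $W_j$ (which rests on measurability of the countable sum of indicator functions) and the elementary bookkeeping showing the padded family has uniform multiplicity $n$. A shorter alternative avoids padding entirely — interchanging sum and integral gives $\sum_{A\in\mathcal{A}}\mu(A)=\int_X m\,d\mu\leq\int_X n\,d\mu=n\,\mu(X)$ by monotone convergence — but I prefer the reduction above since it stays at the level of countable sums and reuses machinery already developed in the paper (\Autoref{interchange-of-countable-sums} and \Autoref{exact-measure-of-multiplicity}).
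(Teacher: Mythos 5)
Your proposal is correct, and it takes a genuinely different route from the paper. The paper's own proof partitions $X$ into the exact-multiplicity level sets $S_k=\set{x\in X\colon x\text{ lies in exactly }k\text{ members of }\A}$ for $k=0,\dots,n$, establishes their measurability by set-theoretic manipulations of the intersections $G_{\F}$, and then applies \Autoref{exact-measure-of-multiplicity} once for each $k$ (to the restricted family $\set{A\cap S_k\colon A\in\A}$ on $S_k$), finishing with the bound $k\le n$ and countable additivity. You instead \emph{pad} the family $\A$ with $n$ extra measurable sets $W_j=\set{x\colon m(x)\le n-j}$ so that the multiplicity becomes uniformly equal to $n$, apply \Autoref{exact-measure-of-multiplicity} a single time to the padded family, and drop the nonnegative padding terms. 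Your bookkeeping is right: a point of multiplicity $m$ lies in $W_1,\dots,W_{n-m}$ and no other $W_j$, giving $m+(n-m)=n$, and measurability of the $W_j$ follows from measurability of the countable sum $m=\sum_{A\in\A}\mathbf{1}_A$. What your route buys is a single invocation of the exact lemma and no interchange of a doubly indexed sum, at the modest cost of invoking measurability of a countable sum of measurable functions rather than the paper's purely set-theoretic argument; the integral one-liner you mention at the end ($\sum_A\mu(A)=\int_X m\,d\mu\le n\,\mu(X)$) is the cleanest version of all, though it steps slightly outside the discrete-summation toolkit the paper has set up.
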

\begin{proof}
As in the last proof, for any $\FF\subseteq\A$, let $G_{\FF}=\bigcap_{A\in\FF}A$ noting that this is a countable intersection of measurable sets, so it is measurable (mnemonically, the $G$ represents an intersection as it does in the notation for $G_{\delta}$ sets). And for any $k\in[n]\cup\set{0}$, let $\A\choose k$ denote all subsets of $\A$ of size $k$ noting that because $\A$ is countable, so is $\A\choose k$.

For each $k\in[n]\cup\set{0}$, let
\begin{align*}
    S_k &= \set{x\in X\colon \text{$x$ belongs to {\em exactly} $k$ members of $\A$}} \\
    S_k' &= \set{x\in X\colon \text{$x$ belongs to {\em at least} $k$ members of $\A$}}
\end{align*}
We will show that $S_k$ and $S_k'$ are measurable.

To show that the $S_k'$ are measurable, note that for any $k\in[n]\cup\set{0}$, $S_k'$ can be expressed as $S_k' = \bigcup_{\FF\in{\A\choose k}}G_{\FF}$. This is because for any $x\in X$, if $x$ belongs to at least $k$ members of $\A$, then there is a subset $\FF\subseteq\A$ with $\abs{\FF}=k$ such that $x\in\bigcap_{A\in\FF}A=G_{\FF}$. Conversely, if $x\in\bigcap_{A\in\FF}A=G_{\FF}$, then there is some $\FF\in{\A\choose k}$ (i.e. some $\FF\subseteq\A$ with $\abs{\FF}=k$) such that $x\in G_{\FF}=\bigcap_{A\in\FF}A$, so $x$ belongs to at least $k$ members of $\A$. Thus, since $\A$ is countable, so is $\A\choose k$ (for each $k$) implying that each $S_k'$ is a countable union of measurable sets, so is itself measurable.

To show the measurability of each $S_k$, first consider $k=n$. Observe that $S_n=S_n'$ because by assumption each $x\in X$ belongs to at most $n$ members of $\A$, so it belongs to exactly $n$ members if and only if it belongs to at least $n$ members. Thus, $S_n$ is also measurable.

Now for $k\in[n-1]\cup\set{0}$ observe that $S_k=S_k'\setminus S_{k+1}'$ because some $x\in X$ belongs to exactly $k$ members of $\A$ if and only if it belongs to at least $k$ members and does not belong to at least $k+1$ members of $\A$. Thus, for $k\in[n-1]\cup\set{0}$, $S_k$ is the set difference of two measurable sets, so is itself measurable.

Finally, note that $\set{S_k\colon k\in[n]\cup\set{0}}$ is a partition of $X$ (allowing the possibility that some $S_k$ are empty) because every point of $x$ belongs to some number of members of $\A$, and that number is (by assumption) between $0$ and $n$ inclusive.

Now we have the following chain of inequalities:
\begin{align*}
    \sum_{A\in\A}\mu(A) &= \sum_{A\in\A}\mu(A\cap X) \tag{$A\subseteq X$ so $A\cap X=A$} \\
    &= \sum_{A\in\A}\mu\left(A\cap\left[\bigsqcup_{k\in[n]\cup\set{0}}S_k\right]\right) \tag{Set equality; the $S_k$ partition $X$} \\
    &= \sum_{A\in\A}\mu\left(\bigsqcup_{k\in[n]\cup\set{0}}\left[A\cap S_k\right]\right) \tag{Set equality} \\
    &= \sum_{A\in\A}\left[\sum_{k\in[n]\cup\set{0}}\mu\left(A\cap S_k\right)\right] \tag{Countable additivity of measures} \\
    &= \sum_{k\in[n]\cup\set{0}}\left[\sum_{A\in\A}\mu\left(A\cap S_k\right)\right] \tag{Interchange sums by \Autoref{interchange-of-countable-sums}} \\
    &= \sum_{k\in[n]\cup\set{0}}\left[k\cdot\mu\left(S_k\right)\right] \tag{By \Autoref{exact-measure-of-multiplicity}; see details below} \\
    &= \sum_{k\in[n]}\left[k\cdot\mu\left(S_k\right)\right] \tag{$k=0$ term is $0$} \\
    &\leq \sum_{k\in[n]}\left[n\cdot\mu\left(S_k\right)\right] \tag{$k\leq n$} \\
    &= n\sum_{k\in[n]}\left[\mu\left(S_k\right)\right] \tag{Linearity of summation} \\
    &= n\cdot\mu\left(\bigsqcup_{k\in[n]}S_k\right) \tag{Countable additivity of measures} \\
    &\leq n\cdot\mu\left(X\right) \tag{Set inequality; the $S_k$ partition $X$, but $S_0$ is missing from the union}
\end{align*}

After justifying the use of \Autoref{exact-measure-of-multiplicity}, this completes the proof. For each $k\in[n]\cup\set{0}$, let $X_k=S_k$ and $\A_k=\set{A\cap S_k\colon A\in\A}$. Then observe that for each $x\in X_k=S_k$, by the definition of $S_k$, $x$ belongs to exactly $k$ members of $\A$, and thus (since it also belongs to $S_k$) belongs to exactly $k$ members of $\A_k$. Applying \Autoref{exact-measure-of-multiplicity} once for each $k$ with $X=X_k$ and $\A=\A_k$ shows that 
\[
 \sum_{A\in\A}\mu(A\cap S_k) = \sum_{A'\in \A_k}\mu(A') = k\cdot\mu(X_k) = k\cdot\mu(S_k)
\]
(the middle equality is where \Autoref{exact-measure-of-multiplicity} was applied). This is what we claimed in the long chain of equalities above and completes the proof.

\end{proof}

\begin{restatable}[Lower Bound Cover Number]{corollary}{restatableLowerBoundCoverNumber}\label{lower-bound-cover-number}
\renewcommand{\A}{\mathcal{A}}
Let $X$ be a measurable set in some measure space (the measure being denoted by $\mu$) such that $0<\mu(X)<\infty$. Let $\A$ be a countable family of measurable subsets of $X$ such that $\sum_{A\in\A}\mu(A)<\infty$. Then there exists $x\in X$ such that $x$ belongs to at least $\ceil*{\frac{\sum_{A\in\A}\mu(A)}{\mu(X)}}$-many members of $\A$.
\end{restatable}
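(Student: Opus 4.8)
The plan is to deduce this corollary from \Autoref{upper-bound-measure-of-multiplicity} by a short contrapositive argument; essentially all the real work has already been done in that lemma.

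First I would dispose of the degenerate case. If $\sum_{A\in\mathcal{A}}\mu(A)=0$, then $\ceil*{\frac{\sum_{A\in\mathcal{A}}\mu(A)}{\mu(X)}}=0$, and the conclusion holds vacuously since every $x\in X$ trivially lies in at least $0$ members of $\mathcal{A}$. So I may assume $\sum_{A\in\mathcal{A}}\mu(A)>0$ and set $n=\ceil*{\frac{\sum_{A\in\mathcal{A}}\mu(A)}{\mu(X)}}$. Because the numerator is finite by hypothesis and $\mu(X)>0$, the ratio is a finite positive real and hence $n$ is a well-defined positive integer.

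Next, suppose for contradiction that no $x\in X$ belongs to at least $n$ members of $\mathcal{A}$; then every $x\in X$ belongs to at most $n-1$ members of $\mathcal{A}$. Since $\mathcal{A}$ is countable and each of its members is a measurable subset of $X$, the hypotheses of \Autoref{upper-bound-measure-of-multiplicity} are met with multiplicity bound $n-1$, so that lemma gives $\sum_{A\in\mathcal{A}}\mu(A)\le (n-1)\cdot\mu(X)$. On the other hand, the elementary inequality $\ceil{t}<t+1$, applied to $t=\frac{\sum_{A\in\mathcal{A}}\mu(A)}{\mu(X)}$, yields $n-1<\frac{\sum_{A\in\mathcal{A}}\mu(A)}{\mu(X)}$, and multiplying through by $\mu(X)>0$ gives $(n-1)\cdot\mu(X)<\sum_{A\in\mathcal{A}}\mu(A)$. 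These two inequalities are contradictory, so some $x\in X$ must belong to at least $n$ members of $\mathcal{A}$, which is exactly the claim.

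There is no genuinely hard step here: the content lives entirely in \Autoref{upper-bound-measure-of-multiplicity}. The only points needing a little care are handling the $\sum_{A\in\mathcal{A}}\mu(A)=0$ case, noting that finiteness of $n$ uses both $\sum_{A\in\mathcal{A}}\mu(A)<\infty$ and $\mu(X)>0$, and checking that the hypotheses invoked from \Autoref{upper-bound-measure-of-multiplicity}—countability of $\mathcal{A}$ and that its members are measurable subsets of $X$—match exactly what is assumed here.
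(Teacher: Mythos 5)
Your proof is correct and follows essentially the same route as the paper: a contrapositive appeal to \Autoref{upper-bound-measure-of-multiplicity} with multiplicity bound $n-1$ (the paper writes this same bound as $\ceil{\cdots}-1$), closed off by the same arithmetic contradiction using $n-1<\frac{\sum_{A\in\mathcal{A}}\mu(A)}{\mu(X)}$. Your explicit treatment of the degenerate case $\sum_{A\in\mathcal{A}}\mu(A)=0$ is a small but reasonable bit of extra care that the paper leaves implicit, since without it the multiplicity bound $n-1$ could be negative and fall outside the $n\in\N$ hypothesis of the lemma.
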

\begin{proof}

First observe that by hypothesis, $\ceil{\frac{\sum_{A\in\A}\mu(A)}{\mu(X)}}$ is finite. Suppose for contradiction that each $x\in X$ belongs to strictly less than $\ceil{\frac{\sum_{A\in\A}\mu(A)}{\mu(X)}}$-many members of $\A$. Let $n=\ceil{\frac{\sum_{A\in\A}\mu(A)}{\mu(X)}}-1$ (noting that $n<\frac{\sum_{A\in\A}\mu(A)}{\mu(X)}$). Then each $x\in X$ belongs to at most $n$-many members of $A$, so we have
\begin{align*}
    \sum_{A\in\A}\mu(A) &\leq n\cdot\mu(X) \tag{\Autoref{upper-bound-measure-of-multiplicity}} \\
    &< \frac{\sum_{A\in\A}\mu(A)}{\mu(X)}\mu(X) \tag{$0<\mu(X)<\infty$ and $n<\frac{\sum_{A\in\A}\mu(A)}{\mu(X)}$} \\
    &= \sum_{A\in\A}\mu(A)
\end{align*}
which is a contradiction.
\end{proof}

\begin{remark}\label{infinite-cover-case}
In \Autoref{lower-bound-cover-number} above, it was important that we required $\sum_{A\in\A}\mu(A)$ to be finite. If we allowed it to be infinite, then the claim would have been that there was some $x\in X$ belonging to infinitely many members of $\mathcal{A}$, but this is in general not true (see \Autoref{harmonic-cover-example} below). Nonetheless, it is true (and a straightforward corollary of the above) that if $\sum_{A\in\A}\mu(A)=\infty$, then for any $n\in\N_0$, there exists a point $x_n\in X$ that is contained in at least $n$-many sets of $\A$. The distinction is that this point might have to depend on the choice of $n$.
\end{remark}

\begin{example}[Harmonic Cover of Open Unit Interval]\label{harmonic-cover-example}
Let $X=(0,1)$ be equipped with the Borel or Lebesgue measure $\mu$. Let $\mathcal{A}=\set{(0,\tfrac1i): i\in\N}$. Then $\sum_{A\in\A}\mu(A)=\sum_{i\in\N}\tfrac1i=\infty$. For any $n\in\N$, we can consider the point $x_n=\tfrac1{n+1}$ which is contained in $(0,\tfrac1i)$ for $i\in[n]$ and not for any other $i$, so it belongs to exactly $n$ sets in $\A$.

However, no point in $X$ belongs to infinitely many sets in $\A$. To see this, consider an arbitrary point $x\in X=(0,1)$. Then for sufficiently large $i\in\N$, $x\not\in(0,\tfrac1i)$ so $x$ belongs to only finitely many members of $\A$.
\end{example}

The prior three results have been stated in typical measure theory notation, but in the body of the paper we present \Autoref{lower-bound-cover-number} as follows for $\R^d$ specifically with notation matching what is used elsewhere in the paper.

\hypertarget{hypertarget-lower-bound-cover-number-Rd}{}

\restatableLowerBoundCoverNumberRd
\begin{proof}
This follows trivially from \Autoref{lower-bound-cover-number} and \Autoref{infinite-cover-case}.
\end{proof}

\hypertarget{hypertarget-iso-diametric-ineq}{}

\restatableIsodiametricIneq*
The outline of the proof below was found by the authors in \cite{noauthor_answer_2014} who cites \cite{gruber_convex_2007}.
\begin{proof}
Because $D=\diamstd(A)$, also $\diamstd(\clos{A})=D$ as well. Consider the set $\clos{A}-\clos{A}\defeq\set{\vec{a}'-\vec{a}'':\vec{a}',\vec{a}''\in\clos{A}}$ noting that $\clos{A}-\clos{A}$ is closed (and thus Borel measurable), centrally symmetric\footnote{
    That is, for any $\vec{a}\in\clos{A}-\clos{A}$, also $-\vec{a}\in\clos{A}-\clos{A}$, because $\vec{a}=\vec{a}'-\vec{a}''$ for some $\vec{a}',\vec{a}''\in\clos{A}$, so also $\clos{A}-\clos{A}\ni \vec{a}''-\vec{a}'=-\vec{a}$.
}, and has diameter\footnote{
    Given any two vectors $\vec{a},\vec{b}\in\clos{A}-\clos{A}$ we have $\vec{a}=\vec{a}'-\vec{a}''$ and $\vec{b}=\vec{b}'-\vec{b}''$ for some $\vec{a}',\vec{a}'',\vec{b}',\vec{b}\in\clos{A}$. So $\norm{\vec{a}-\vec{b}} = \norm{(\vec{a}'-\vec{a}'')-(\vec{b}'-\vec{b}'')} = \norm{(\vec{a}'-\vec{b}')+(\vec{b}''-\vec{a}'')} \leq \norm{(\vec{a}'-\vec{b}')}+\norm{(\vec{b}''-\vec{a}'')} \leq 2D$, so the diameter of $\clos{A}-\clos{A}$ is at most $2D$.
} at most $2D$. This implies\footnote{
    If there was some $\vec{a}\in\clos{A}-\clos{A}$ with $\norm{\vec{a}}>D$, then by central symmetry, also $-\vec{a}\in\clos{A}-\clos{A}$, so $2D=\diam{\clos{A}-\clos{A}}\geq \norm{\vec{a}-(-\vec{a})} = 2\norm{\vec{a}} > 2D$ which would be a contradiction.
} that $\clos{A}-\clos{A}\subseteq\ballcstd{D}{\vec{0}}$. Thus, $m(\clos{A}-\clos{A})\leq m(\ballc{D}{\vec{0}})$.

Also, letting $-\clos{A}\defeq\set{-\vec{a}\colon \vec{a}\in\clos{A}}$ we have that $\clos{A}-\clos{A}$ is the Minkowski sum $\clos{A}-\clos{A}=\clos{A}+(-\clos{A})$. This allows us to use the \namefullref{generalized-brunn-minkowski-inequality} to obtain
\begin{align*}
    m(\clos{A}-\clos{A}) &=  m(\clos{A}+(-\clos{A})) \\
    &\geq \left[m(\clos{A})^\frac1d + m(-\clos{A})^\frac1d \right]^d \tag{Brunn-Minkowski}\\
    &= \left[2\cdot m(\clos{A})^\frac1d\right]^d \\
    &= 2^d \cdot m(\clos{A})
\end{align*}
Combining this with the inequality at the end of the last paragraph gives $m(\clos{A})\leq \frac{1}{2^d}\cdot m(\ballc{D}{\vec{0}})$.

We complete the proof noting a few simple inequalities. First, since $A\subseteq\clos{A}$, we have $m_{out}(A)\leq m(\clos{A})$. Second, by the scaling of Lebesgue measure, $m(\ballc{D}{\vec{0}})=m(D\cdot\ballc{1}{\vec{0}})=D^d\cdot m(\ballc{1}{\vec{0}})$. Third, $m(\ballo{1}{\vec{0}})=m(\ballc{1}{\vec{0}})$ because for any $\varepsilon\in(0,\infty)$ we have
\[
m(\ballo{1}{\vec{0}}) \leq m(\ballc{1}{\vec{0}}) \leq m(\ballo{1+\varepsilon}{\vec{0}}) = (1+\varepsilon)^d\cdot m(\ballo{1}{\vec{0}}).
\]
Combining all of this gives the result:
\[
m_{out}(A) \leq m(\clos{A}) \leq \left(\frac D 2\right)^d\cdot m(\ballc{1}{\vec{0}}) = \left(\frac D 2\right)^d\cdot m(\ballo{1}{\vec{0}}).
\]

\end{proof}
\section{Minkowski Sums}
\label{sec:minkowski-sums}

\restatableMinkowskiBubbleUnion*
\begin{proof}
We show this only for the open balls. Switching all open balls in the proof with closed ones gives the proof for closed balls.

($\subseteq$) A generic element of $X+\ballo{\varepsilon}{\vec{0}}$ is $\vec{x}+\vec{b}$ for some $\vec{x}\in X$ and $\vec{b}\in \ballo{\varepsilon}{\vec{0}}$ which means $\vec{x}+\vec{b}\in\vec{x}+\ballo{\varepsilon}{\vec{0}}=\ballo{\varepsilon}{\vec{x}}\subseteq \bigcup_{\vec{x}\in X}\ballo{\varepsilon}{\vec{x}}$.

($\supseteq$) Given $\vec{y}\in \bigcup_{\vec{x}\in X}\ballo{\varepsilon}{\vec{x}}$ there is some particular $\vec{x}\in X$ such that $y\in\ballo{\varepsilon}{\vec{x}}=\vec{x}+\ballo{\varepsilon}{\vec{0}}$ which means $\vec{y}=\vec{x}+\vec{b}$ for some $\vec{b}\in \ballo{\varepsilon}{\vec{0}}$, and since $\vec{x}\in X$, we have $\vec{y}\in X+\ballo{\varepsilon}{\vec{0}}$.
\end{proof}

\restatableSumOfBalls*
\begin{proof}
($\subseteq$) A generic element of $\ballo{\alpha}{\vec{0}}+\ballo{\beta}{\vec{0}}$ is $\vec{a}+\vec{b}$ for $\vec{a}\in\ballo{\alpha}{\vec{0}}$ and $\vec{b}\in\ballo{\beta}{\vec{0}}$. Then $\norm{\vec{a}}<\alpha$ and $\norm{\vec{b}}<\beta$, so $\norm{\vec{a}+\vec{b}}<\alpha+\beta$ showing $\vec{a}+\vec{b}\in\ballo{\alpha+\beta}{\vec{0}}$.

($\supseteq$) Let $\vec{x}\in\ballo{\alpha+\beta}{\vec{0}}$ which implies $\norm{\vec{x}}<\alpha+\beta$. If $\norm{\vec{x}}<\alpha$, then $\vec{x}\in\ballo{\alpha}{\vec{0}}$ and $\vec{0}\in\ballo{\beta}{\vec{0}}$, so $\vec{x}=\vec{x}+\vec{0}\in\ballo{\alpha}{\vec{0}}+\ballo{\beta}{\vec{0}}$. Similarly, if $\norm{\vec{x}}<\beta$, then $\vec{x}=\vec{0}+\vec{x}\in\ballo{\alpha}{\vec{0}}+\ballo{\beta}{\vec{0}}$. In either case we would be done, so we may now assume that $\norm{\vec{x}}\geq\alpha,\beta$. Let $\varepsilon=\alpha+\beta-\norm{\vec{x}}\in(0,\infty)$. Since $\norm{\vec{x}}\geq\alpha$, we have $\varepsilon\leq\beta$, and because $\norm{\vec{x}}\geq\beta$, we have $\varepsilon\leq\alpha$. This shows $\frac\varepsilon2<\alpha,\beta$. Let $\vec{a}=(\alpha-\frac\varepsilon2)\frac{\vec{x}}{\norm{\vec{x}}}$ and $\vec{b}=(\beta-\frac\varepsilon2)\frac{\vec{x}}{\norm{\vec{x}}}$ noting that $\norm{\vec{a}}=\alpha-\frac\varepsilon2\in(0,\alpha)$ and $\norm{\vec{b}}=\beta-\frac\varepsilon2\in(0,\beta)$. Also, note that $\vec{a}+\vec{b}=(\alpha-\frac\varepsilon2+\beta-\frac\varepsilon2)\frac{\vec{x}}{\norm{\vec{x}}}=\norm{\vec{x}}\frac{\vec{x}}{\norm{\vec{x}}}=\vec{x}$ which shows $\vec{x}\in\ballo{\alpha}{\vec{0}}+\ballo{\beta}{\vec{0}}$.
\end{proof}

\restatableAntisymmetryContainmentPreClaimMinkowskiSum*
\begin{proof}
We show this only for the closed balls. Switching all closed balls in the proof with open ones gives the proof for open balls.

($\Longrightarrow$) If $\ballc{\varepsilon}{\vec{p}}\cap X\not=\emptyset$, then there exists $\vec{y}\in\ballc{\varepsilon}{\vec{p}}\cap X$, and since $\vec{y}\in\ballc{\varepsilon}{\vec{p}}$ we have $\norm{\vec{y}-\vec{p}}\leq\varepsilon$ so $\vec{p}\in\ballc{\varepsilon}{\vec{y}}=\vec{y}+\ballc{\varepsilon}{\vec{0}}$, and since $\vec{y}\in X$, $\vec{y}+\ballc{\varepsilon}{\vec{0}}\subseteq X+\ballc{\varepsilon}{\vec{0}}$ showing that $\vec{p}\in X+\ballc{\varepsilon}{\vec{0}}$.

($\Longleftarrow$) If $\vec{p}\in X+\ballc{\varepsilon}{\vec{0}}$ then there exists $\vec{x}\in X$ and $\vec{b}\in\ballc{\varepsilon}{\vec{0}}$ such that $\vec{p}=\vec{x}+\vec{b}$. Thus $\vec{p}-\vec{x}=\vec{b}$, so $\norm{\vec{p}-\vec{x}}=\norm{\vec{b}}\leq\varepsilon$, so $\vec{x}\in\ballc{\varepsilon}{\vec{p}}$. Since $\vec{x}$ belongs to both $X$ and $\ballc{\varepsilon}{\vec{p}}$, their intersection is non-empty.
\end{proof}

\section{Additional Facts}
\label{sec:other}

\hypertarget{hypertarget-smaller-ceiling}{}

\restatableSmallerCeiling*
\begin{proof}
Let $n=\ceil{\alpha}$. This implies that $\alpha>n-1$ (otherwise $\alpha\leq n-1$ so $\ceil{\alpha}\leq n-1$). Thus $(n-1,\alpha)$ is non-empty and we can take any $\gamma\in(n-1,\alpha)$. Then $n-1<\ceil{\gamma}\leq\ceil{\alpha}=n$ showing $\ceil{\gamma}=n$ as well.
\end{proof}

\hypertarget{hypertarget-brunn-minkowski-bound-lemma}{}

\restatableBrunnMinkowskiBoundLemma*
\begin{proof}
We will show that $(x^{1/d}+\alpha)^d - x(1+\alpha)\geq 0$ for these parameters. Let For $d$, $\alpha$ as above, let $f:[0,1]\to\R$ be defined by $f(x)=(x^{1/d}+\alpha)^d - x(1+\alpha)$. Observe that $f(0)=\alpha^2\geq0$ and $f(1)=(1+\alpha)^d-(1+\alpha)^d=0$. We will now prove that $f$ is convex on the domain\footnote{Actually we could have defined the domain of $f$ to be $[0,\infty)$ and we show that $f$ is convex on that domain. However, we only have need of the interval $[0,1]$.} $[0,1]$. This will be sufficient to prove the claim because $f$ is also non-negative at $0$ and at $1$.

We show that $f$ is convex on $[0,1]$ by considering its second derivative on $(0,1]$. 
\begin{align*}
    \frac{d}{dx}f(x) & = \frac{d}{dx} \left[(x^{1/d}+\alpha)^d - x(1+\alpha)^d\right] \\
    &= d(x^{1/d}+\alpha)^{d-1}\cdot\frac{1}{d}x^{1/d-1} - (1+\alpha)^d \\
    &= (x^{1/d}+\alpha)^{d-1}x^{1/d-1} - (1+\alpha)^d \\
\end{align*}
where we use the convention that $0^0=1$. Then
\begin{align*}
    \frac{d^2}{dx^2}f(x) & = \frac{d}{dx} \left[(x^{1/d}+\alpha)^{d-1}x^{1/d-1} - (1+\alpha)^d\right] \\
    &= (x^{1/d}+\alpha)^{d-1}\cdot\left(\frac1d-1\right)x^{1/d-2}+x^{1/d-1}\cdot(d-1)(x^{1/d}+\alpha)^{d-2}\frac1d x^{1/d-1} \\
    &= (x^{1/d}+\alpha)(x^{1/d}+\alpha)^{d-2}\left(-\frac{d-1}d\right)\left(x^{1/d-2}\right) + (x)\left(x^{1/d-2}\right)\left(\frac{d-1}{d}\right)\left(x^{1/d}+\alpha\right)^{d-2}\left(x^{1/d-1}\right) \\
    &= \frac{(x^{1/d}+\alpha)^{d-2}(d-1)\left(x^{1/d-2}\right)}{d}\left[ -(x^{1/d}+\alpha) + (x)\left(x^{1/d-1}\right) \right] \\
    &= \frac{(x^{1/d}+\alpha)^{d-2}(d-1)\left(x^{1/d-2}\right)}{d}\left[ -(x^{1/d}+\alpha) + \left(x^{1/d}\right) \right] \\
    &= \frac{-\alpha(x^{1/d}+\alpha)^{d-2}(d-1)\left(x^{1/d-2}\right)}{d} \\
    &= \frac{-\alpha(x^{1/d}+\alpha)^{d}(d-1)\left(x^{1/d}\right)}{d(x^{1/d}+\alpha)^{2}x^2}.
\end{align*}
Note that $\frac{d^2}{dx^2}f(x)\leq0$ for $x\in(0,1]$ since $\alpha\geq0$. This shows that $f$ is convex on $(0,1]$ and by continuity on $[0,1]$ which completes the proof.
\end{proof}
\end{document}